\def\R{\mathbb{R}}
\def\Gr{\mathop{\rm Gr}}
\def\tJ{\tilde{J}}
\def\B{{\mathcal B}}
\def\C{{\mathcal C}}
\def\P{{\mathcal P}}
\def\M{{\mathcal M}}
\def\bmu{{\boldsymbol \mu}}
\def\bnu{{\boldsymbol \nu}}
\def\bpi{{\boldsymbol \pi}}
\def\bxi{{\boldsymbol \xi}}
\def\tpi{{\tilde \pi}}
\def\tb{{\tilde b}}
\def\hb{{\hat b}}
\def\sPr{{\mathsf{Pr}}}
\def\sX{{\mathsf X}}
\def\sY{{\mathsf Y}}
\def\sA{{\mathsf A}}
\def\sZ{{\mathsf Z}}
\def\sM{{\mathsf M}}
\def\sE{{\mathsf E}}
\def\sG{{\mathsf G}}
\def\sW{{\mathsf W}}
\def\sS{{\mathsf S}}
\def\sK{{\mathsf K}}
\def\sB{{\mathsf B}}
\def\sV{{\mathsf V}}
\newtheorem{definition}{Definition}
\newtheorem{theorem}{Theorem}
\newtheorem{corollary}{Corollary}
\newtheorem{proposition}{Proposition}
\newtheorem{lemma}{Lemma}
\newtheorem{remark}{Remark}
\newtheorem{assumption}{Assumption}
\begin{document}
\title{Partially Observed Discrete-Time Risk-Sensitive Mean Field Games}

\author{Naci Saldi \and Tamer Ba\c{s}ar  \and  Maxim Raginsky}

\institute{Naci Saldi, Corresponding author \at
             Department of Mathematics,    
             Bilkent University \\
              Ankara, Turkey\\
              naci.saldi@bilkent.edu.tr  
           \and
           Tamer Ba\c{s}ar  and  Maxim Raginsky  \at
              Coordinated Science Laboratory, University of  Illinois\\
              Urbana, IL\\
              basar1,maxim@illinois.edu
}

\date{Received: date / Accepted: date}
  
\maketitle

\begin{abstract}
In this paper, we consider discrete-time partially observed mean-field games with the risk-sensitive optimality criterion. We introduce risk-sensitivity behaviour for each agent via an exponential utility function. In the game model, each agent is weakly coupled with the rest of the population through its individual cost and state dynamics via the empirical distribution of states. We establish the mean-field equilibrium in the infinite-population limit using the technique of converting the underlying original partially observed stochastic control problem to a fully observed one on the belief space and the dynamic programming principle. Then, we show that the mean-field equilibrium policy, when adopted by each agent, forms an approximate Nash equilibrium for games with sufficiently many agents. We first consider finite-horizon cost function, and then, discuss extension of the result to infinite-horizon cost in the next-to-last section of the paper.
\end{abstract}

\keywords{Mean field games \and partial observation \and risk sensitive cost.}
%\subclass{91A15 \and 91A10 \and 91A13 \and 93E20}

\section{Introduction}\label{sec1}

Mean-field games have been introduced in \cite{HuMaCa06} and \cite{LaLi07} to show the existence of approximate Nash equilibria for fully observed non-cooperative continuous time games, when the number of agents is large but finite. The underlying idea of the mean-field method is to transform the decentralized game problem to a centralized stochastic control problem using the so-called `Nash certainty equivalence (NCE) principle' \cite{HuMaCa06}. The optimal solution of this control problem, calibrated appropriately using the empirical distribution of the term that (weakly) couples the players,  provides an approximate Nash equilibrium for games with a sufficiently large number of agents. To obtain the optimal solution to the associated stochastic control problem, one should simultaneously solve a Fokker-Planck equation evolving forward in time and a Hamilton-Jacobi-Bellman equation evolving backward in time. We refer the reader to \cite{HuCaMa07,TeZhBa14,Hua10,BeFrPh13,Ca11,CaDe13,GoSa14,MoBa16} for studies of fully-observed continuous-time mean-field games with different models and cost functions, such as games with major-minor players, risk-sensitive games, games with Markov jump parameters, and LQG games.

In this paper, we study discrete-time partially-observed mean-field games with risk-sensitive optimality criteria. Risk-sensitivity brings in an element of robustness to decision making, and has been widely used in many fields, such as control, economics, financial engineering, and operations research, among others. As opposed to risk-neutral optimization where only the mean value of the cost is considered, risk-sensitive one places positive weights on also the higher moments, thus capturing the risk element (see \cite{TeZhBa14,Whi90,Bas00}). In the model we study in this paper, we have a large but finite number of agents interacting with each other through their individual dynamics and cost functions via the mean-field term (i.e., the empirical distribution of their states). It is known that establishing the existence of Nash equilibria for these types of games is quite difficult due to the (almost) decentralized and noisy nature of the information structure of the problem \cite{Bar78-a,Bar78-b}. Therefore, it is of interest to find an approximate equilibrium with reduced complexity. To that end, upon letting the number of agents go to infinity, the mean-field term converges to the distribution of the state of a single generic agent. This decouples the dynamics and cost functions of the agents from each other, and because of that, in the limiting case, a generic agent is faced with a stochastic control problem with a constraint on the distribution of the state at each time (i.e., a mean-field game problem). The main goal in these problems is to show the existence of a policy and a state distribution flow such that this policy is an optimal solution of the stochastic control problem when the total population behavior is modeled by the state distribution flow and the resulting distribution of each agent's state is same as the state distribution flow when the generic agent applies this policy. This equilibrium condition is called the Nash certainty equivalence (NCE) principle in the literature. In this paper, we first consider the existence of such an equilibrium for the limiting case, and then establish that the policy in this equilibrium constitutes an approximate Nash equilibrium for finite-agent games with sufficiently many agents.

In the literature, \emph{partially-observed} mean-field games have not been studied much, especially in the discrete-time setup. Indeed, this work seems to be the first one that studies discrete-time risk-sensitive mean-field games under partial observations. Prior works have mostly considered the risk-neutral continuous-time setup. It is obvious that analyses of continuous-time and discrete-time setups are quite different, requiring different sets of tools. In \cite{HuCaMa06}, the authors study a partially-observed continuous-time mean-field game with linear individual dynamics. In \cite{SeCa14,SeCa15,SeCa16-3}, the authors consider a continuous-time mean-field game with major-minor agents and nonlinear dynamics where the minor agents can partially observe the state of the major agent. In \cite{SeCa16,SeCa16-2}, the same authors also develop a nonlinear filtering theory for McKean-Vlasov type stochastic differential equations that arise as the infinite population limit of the partially-observed differential game of the mean-field type. In \cite{CaKi17}, the authors study the linear quadratic mean-field game with major-minor agents where the minor agents can partially observe the state of the major agent. In \cite{FiCa15,FiCa19}, the authors consider the linear quadratic mean-field game, again with major-minor agents where, in this case, both the minor agents and the major agent can partially observe the state of the major agent.
In \cite{TaMe16}, the authors study a continuous-time partially observed stochastic control problem of the mean-field type and establish a maximum principle to characterize the optimal control. In \cite{HuWa14}, the authors consider a continuous-time mean-field game with linear individual dynamics where two types of partial information structure are considered: (i) agents cannot observe the white noise which is common to all agents, (ii) agents can access the additive white-noise version of their own states.

For risk-sensitive cost criteria, existing works are mostly on the continuous-time set-up, with \cite{MoBa15}, discussed further below, being one exception. Now, in continuous-time set-up, reference \cite{TeZhBa14} studies a class of mean-field games with non-linear individual dynamics and a risk-sensitive cost function. They characterize the mean-field equilibrium via coupled HJB and FP equations and explicit solutions to these equations are given when the individual state dynamics are linear. In \cite{Tem15}, the author considers a continuous-time mean-field game with nonlinear individual dynamics, where state dynamics have $L^p$-norm structure. Stochastic maximum principle is used to characterize the optimal solution of the problem. In \cite{DjTe16}, the authors study a partially-observed version of the continuous-time risk-sensitive mean-field game. They establish a stochastic maximum principle for the characterization of the mean-field equilibrium. Reference \cite{MoBa17} considers continuous-time risk-sensitive mean-field games with linear individual dynamics and local state information for the players. First a generic risk-sensitive optimal control problem is solved which yields mean-field equilibrium, and then it is shown that the policies in mean-field equilibrium lead to an approximate Nash equilibrium for games with a sufficiently large number of agents. It is also shown that this approximate Nash equilibrium is partially equivalent to the approximate Nash equilibrium of a certain robust mean-field game problem.  Finally, \cite{MoBa15} presents the counterparts of these results for the discrete-time linear-quadratic risk-sensitive mean-field game.

Here, we consider discrete-time mean-field games with Polish state, action, and observation spaces (i.e., complete and separable metric spaces)  under risk-sensitive optimality criteria for the players. In the infinite population limit of such games, a generic agent should solve a partially observed stochastic control problem under the NCE principle. Due to the constraints induced by NCE principle, common techniques used to analyze partially observed stochastic control problems are not sufficient. To establish the existence of an equilibrium solution in the infinite population limit, we have to bring in the fixed-point approach that is used to obtain equilibria in classical game problems, along with the technique of converting partially observed optimal control problems to fully observed ones on the belief space. The definitions of the finite-agent game and the mean-field game problems are given in Section~\ref{sec2} and Section~\ref{sec3}, respectively. In Section~\ref{main-proof}, we prove the existence of a mean-field equilibrium. In Section~\ref{sec4} and Section~\ref{sec4-1}, we establish that the mean-field equilibrium policy is approximately Nash for finite-agent games with sufficiently many agents. In Section~\ref{infinite}, we extend previous results to games with infinite-horizon risk-sensitive cost functions. Section~\ref{conc} concludes the paper.

In an earlier paper \cite{SaBaRa18}, we studied the risk-neutral version of this problem under a similar set of assumptions on the system components. There are some parallels between the techniques used in this paper and those in \cite{SaBaRa18} to show the existence of a mean-field equilibrium and to prove that the policies in mean-field equilibrium provide an approximate Nash equilibrium for games with large but finitely many agents. In this paper, we exploit this connection, and refer the reader to \cite{SaBaRa18} for proofs of certain results. We note, however, that as far as their analyses go, there are considerable technical differences between risk-sensitive and risk-neutral cost functions. The fact that, in the risk-sensitive case, the cost function is in a multiplicative form leads to complication in the analysis of the optimality condition. Therefore, to establish the existence of a mean-field equilibrium in the infinite-population limit and an approximate Nash equilibrium in the finite-agent case, we need to first transform the risk-sensitive problem to one where the cost function is risk-neutral and in an additive form. However, in this risk-neutral form, the one-stage cost function and the transition probability become non-homogeneous (i.e., time-dependent) as opposed to the risk-neutral problem in \cite{SaBaRa18}. Hence, after a careful execution of this step, we can prove the existence of a mean-field equilibrium by adapting the technique developed in \cite{SaBaRa18} to the non-homogeneous and finite-horizon case. We also note that in \cite{SaBaRa18-r} we have studied the fully-observed version of the same problem under a slightly different set of assumptions on the system components. Indeed, to prove the existence of an approximate Nash equilibrium, here we generalize the results established in \cite{SaBaRa18-r} to the game models with expanding state spaces and non-homogeneous system components.

%\smallskip

\noindent\textbf{Notation.} For a metric space $\sE$, we let $C_b(\sE)$ denote the set of all bounded continuous real functions on $\sE$, $\P(\sE)$ denote the set of all Borel probability measures on $\sE$, and $\B(\sE)$ denote the collection of Borel sets. For any $\sE$-valued random element $x$, ${\cal L}(x)(\,\cdot\,) \in \P(\sE)$ denotes the distribution of $x$. A sequence $\{\mu_n\}$ of measures on $\sE$ is said to converge weakly to a measure $\mu$ if $\int_{\sE} g(e) \mu_n(de)\rightarrow\int_{\sE} g(e) \mu(de)$ for all $g \in C_b(\sE)$. For any $\nu \in \P(\sE)$ and measurable real function $g$ on $\sE$, we define $\nu(g) = \int g d\nu$. For any subset $B$ of $\sE$, we let $\partial B$ and $B^c$ denote the boundary and complement of $B$, respectively. The notation $v\sim \nu$ means that the random element $v$ has distribution $\nu$. Unless otherwise specified, the term ``measurable" will refer to Borel measurability.

\section{Finite Player Game Model}\label{sec2}

\subsection{Original Game Model}
Let $\sS$, $\sA$, and $\sY$ be Polish spaces. We consider a discrete-time partially-observed $N$-agent mean-field game with a state space $\sS$, an action space $\sA$, and an observation space $\sY$. For every $i \in \{1,2,\ldots,N\}$, the state, the action, and the observation of Agent~$i$ at time $t$ ($t=0,1,2,\ldots$) are respectively denoted by
$s^N_i(t) \in \sS, \text{ } u^N_i(t) \in \sA, \text{ } \text{and} \text{ } g^N_i(t) \in \sY.$
We let
$
d_t^{(N)}(\,\cdot\,) = \frac{1}{N} \sum_{i=1}^N \delta_{s_i^N(t)}(\,\cdot\,) \in \P(\sS) \nonumber
$
denote the empirical distribution of the states (i.e., mean-field term) at time $t$, where $\delta_s\in\P(\sS)$ is the Dirac measure at $s$; that is, $\delta_s(A) = 1$ if $s \in A$ and otherwise $0$.

At the initial time step $t=0$, the states
$(s^N_1(0),\ldots,s^N_N(0)) \sim \kappa_0 \otimes \ldots \otimes \kappa_0$
are independent and identically distributed according to $\kappa_0$. For each $t \ge 0$, the current-observations $(g^N_1(t),\ldots,g^N_N(t))$ and the next-states $(s^N_1(t+1),\ldots,s^N_N(t+1))$ are distributed according to the probability laws
\begin{align}\label{eq:state_spec}
\prod^N_{i=1} l\big(dg^N_i(t)\big|s^N_i(t)\big) \text{  }\text{ and } \text{  }
\prod^N_{i=1} q\big(ds^N_i(t+1)\big|s^N_i(t),u^N_i(t),d^{(N)}_t\big),
\end{align}
where $q : \sS \times \sA \times \P(\sS) \to \P(\sS)$ is the state transition kernel and $l: \sS \to \P(\sY)$ is the observation kernel. Note that the state dynamics of each agent are weakly coupled through the mean-field term $d^{(N)}_t$.

For any Agent~$i$, define the history spaces $\sG_0 = \sY$ and $\sG_{t}= (\sY\times\sA)^t\times\sY$ for $t=1,2,\ldots$, all endowed with product Borel $\sigma$-algebras. A \emph{policy} for Agent~$i$ is a sequence $\pi^i=\{\pi_{t}^i\}$ of stochastic kernels on $\sA$ given $\sG_{t}$; that is, for any $t\geq0$, 
$
u_i^N(t) \sim \pi_t^i(\cdot|\gamma_i^N(t)), 
$
where $\gamma^N_i(t) = \big(g^N_i(t),u^N_i(t-1),g^N_i(t-1)\ldots,u^N_i(0),g^N_i(0)\big)$
is the observation-action history observed by Agent~$i$ up to time $t$. The set of all policies for Agent~$i$ is denoted by $\Pi_i$. Let $\tilde{\Pi}_i$ be the set of policies in $\Pi_i$ which only use the observations; that is, $\pi \in \tilde{\Pi}_i$ if $\pi_t: \prod_{k=0}^t \sY \rightarrow \P(\sA)$ for each $t\geq0$. Let
${\bf \Pi}^{(N)} = \prod_{i=1}^N \Pi_i \text{ } \text{and} \text{ } \tilde{{\bf \Pi}}^{(N)} = \prod_{i=1}^N \tilde{\Pi}_i.$
We let ${\boldsymbol \pi}^{(N)} = (\pi^1,\ldots,\pi^N)$ ($\pi^i \in \Pi_i$) denote the $N$-tuple of joint policies of all the agents in the game. Under such an $N$-tuple of policies, the actions of agents at each time $t \ge 0$ are obtained with respect to the conditional probability distribution
\begin{align}\label{eq:policy_spec}
\prod^N_{i=1} \pi^i_t\big(du^N_i(t)\big|\gamma^N_i(t)\big).
\end{align}

The \emph{one-stage cost} function for a generic agent is a measurable function $m : \sS \times \sA \times \P(\sS) \to [0,\infty)$. Then, the agent's finite-horizon \emph{risk-sensitive} cost under a policy ${\boldsymbol \pi}^{(N)} \in {\bf \Pi}^{(N)}$ is given by
\begin{align}
V_i^{(N)}({\boldsymbol \pi}^{(N)}) &= \frac{1}{\lambda} \log\biggl( E^{{\boldsymbol \pi}^{(N)}}\biggl[ e^{\lambda\sum_{t=0}^{T}\beta^{t}m(s_{i}^N(t),u_{i}^N(t),d^{(N)}_t)}\biggr]\biggr), \nonumber
\end{align}
where $\beta \in (0,1]$ is the discount factor, $\lambda > 0$ is the risk factor, and $T$ is the finite horizon of the problem. Here, $E^{{\boldsymbol \pi}^{(N)}}\big[\cdot\big]$ denotes the expectation with respect to the probability law, which is uniquely specified by the kernels in \eqref{eq:state_spec} and \eqref{eq:policy_spec} and the initial state distribution $\kappa_0$.

Since $\frac{1}{\lambda}\log(\cdot)$ is a strictly increasing function, without loss of generality, it suffices to consider only the part with expectation:
\begin{align}
W_i^{(N)}({\boldsymbol \pi}^{(N)}) &= E^{{\boldsymbol \pi}^{(N)}}\biggl[ e^{\lambda\sum_{t=0}^{T}\beta^{t}m(s_{i}^N(t),u_{i}^N(t),d^{(N)}_t)}\biggr]. \nonumber
\end{align}
With this cost function, the equilibrium solution for the game is defined as follows:
\begin{definition}
A policy ${\boldsymbol \pi}^{(N*)}= (\pi^{1*},\ldots,\pi^{N*})$ constitutes a \emph{Nash equilibrium} for the $N$-player game, if
\begin{align}
W_i^{(N)}({\boldsymbol \pi}^{(N*)}) = \inf_{\pi^i \in \Pi_i} W_i^{(N)}({\boldsymbol \pi}^{(N*)}_{-i},\pi^i) \nonumber
\end{align}
for each $i=1,\ldots,N$, where ${\boldsymbol \pi}^{(N*)}_{-i} = (\pi^{j*})_{j\neq i}$.
\end{definition}

As we have explained in detail in \cite{SaBaRa18}, establishing the existence of Nash equilibria for \emph{partially-observed} mean-field games is challenging due to the (almost) decentralized and noisy nature of the information structure of the problem. To that end, we slightly change the definition of Nash equilibrium in this model and adopt the approximate Nash equilibrium concept instead of exact Nash equilibrium.
\begin{definition}\label{def1}
A policy ${\boldsymbol \pi}^{(N*)} \in \tilde{{\bf \Pi}}^{(N)}$ is a \emph{Nash equilibrium} if
\begin{align*}
W_i^{(N)}({\boldsymbol \pi}^{(N*)}) &= \inf_{\pi^i \in \tilde{\Pi}_i} W_i^{(N)}({\boldsymbol \pi}^{(N*)}_{-i},\pi^i)
\end{align*}
for each $i=1,\ldots,N$, and an \emph{$\varepsilon$-Nash equilibrium} (for a given $\varepsilon > 0$) if
\begin{align*}
W_i^{(N)}({\boldsymbol \pi}^{(N*)}) &\leq \inf_{\pi^i \in \tilde{\Pi}_i} W_i^{(N)}({\boldsymbol \pi}^{(N*)}_{-i},\pi^i) + \varepsilon
\end{align*}
for each $i=1,\ldots,N$.
\end{definition}

According to this definition, the agents can only use their local observations $(g^N_i(t),\ldots,g^N_i(0))$ to construct their policies. In real life applications, agents typically have access only to their local observations. Hence, it suffices to establish the existence of an approximate Nash equilibrium for the game with a local information structure. In addition, in the discrete-time mean field literature, it is common to establish the existence of approximate Nash equilibria with local (decentralized) information structures (see \cite{AdJoWe15} \cite{Bis15}). This is true for partially observed case as well (see \cite{SaBaRa18}).

Here, our goal is to establish the existence of approximate Nash equilibria for games with sufficiently many agents. Indeed, if the number of agents is small, it is all but impossible to show even the existence of approximate Nash equilibria for these types of games. Therefore, it is key to assume that the number of agents is large (but finite). With this assumption, we can go to the infinite population limit, for which we can model the mean-field term as an exogenous state-measure flow, which should be consistent with the distribution of a generic agent (i.e., the NCE principle) by the law of large numbers. In this case, to establish the existence of an equilibrium, a generic agent should solve a classical partially observed stochastic control problem with a constraint on the distributions on the states (i.e., mean-field game). Then, we expect that if each agent in the finite-agent $N$ game adopts the equilibrium policy in the infinite-population limit, the resulting policy will be an approximate Nash equilibrium for all sufficiently large $N$.

Our approach to prove the existence of approximate Nash equilibria can be summarized as follows: (i) Note that, in the risk-sensitive criteria, the one-stage cost functions are in a multiplicative form as opposed to the risk-neutral setting. As stated earlier, this makes the analysis of the problem quite complicated.  Therefore, we first construct an equivalent non-homogeneous game model, where the cost can be written in an additive form as in the risk-neutral case (see Section~\ref{sub1sec2}). (ii) Then, we introduce the infinite-population limit ($N \to \infty$) of the equivalent game model to approximate the finite-agent setting (see Section~\ref{sec3}). (iii) By adapting the proof technique in \cite{SaBaRa18} to the non-homogeneous and finite-horizon set-up, we prove the existence of an appropriately defined mean-field equilibrium for this limiting infinite-population game (see Section~\ref{main-proof}). (iv) Then, we return to the finite-$N$ case for the equivalent game model and show that, if each agent in the game problem adopts the mean-field equilibrium policy, then the resulting policy will be an approximate Nash equilibrium for all sufficiently large $N$. Since the equivalent game model is identical to the original game model in terms of cost functions, this establishes the existence of approximate Nash equilibria for the original game model (see Sections~\ref{sec4} and \ref{sec4-1}).

Now, proceeding along the lines above, we first introduce the following assumptions, imposed throughout the paper.

\begin{assumption}\label{as1}
\begin{itemize}
\item [ ]
\item [(a)] The cost function $m$ is bounded and continuous with $\| m \| = \sup_{s \in \sS} |m(s)| \leq K$.
\item [(b)] The stochastic kernel $q$ is weakly continuous in $(s,u,\kappa)$; i.e., \newline $q(\,\cdot\,|s(k),u(k),\kappa_k) \rightarrow q(\,\cdot\,|s,u,\kappa)$ weakly when $(s(k),u(k),\kappa_k) \rightarrow (s,u,\kappa)$.
\item [(c)] The observation kernel $l$ is continuous in $s$ with respect to total variation norm; i.e., for all $s$, $l(\,\cdot\,|s_k) \rightarrow l(\,\cdot\,|s)$ in total variation norm when $s_k \rightarrow s$.
\item [(d)] $\sA$ is compact.
\item [(e)] There exist a constant $\alpha \ge 0$  and a continuous moment function $v: \sS \rightarrow [1,\infty)$  (see \cite[Definition E.7]{HeLa96}) such that
\begin{align}
\sup_{(u,\kappa) \in \sA \times \P(\sS)} \int_{\sS} v(y) q(dy|s,u,\kappa) \leq \alpha v(s).
\end{align}
\item [(f)] The initial probability measure $\kappa_0$ satisfies
$
\int_{\sS} v(s) \kappa_0(ds) = M < \infty. \nonumber
$
\end{itemize}
\end{assumption}

\subsection{Equivalent Game Model}\label{sub1sec2}

In this section, we construct an equivalent game model whose states are the states of the original model plus the one-stage costs incurred up to that time. Namely, the state at time $t$ for Agent~$i$ is
\begin{align}
x_i^N(t) =  \biggl(s_i^N(t),\sum_{k=0}^{t-1} \beta^k m(s_i^N(k),u_i^N(k),d_k^{(N)})\biggr). \nonumber
\end{align}
In this new model, finite-horizon risk-sensitive cost function can be written in an additive-form like in risk-neutral case. For this new game model, we have been inspired by \cite{BaRi14}, in which the authors study the classical fully-observed risk-sensitive control problem. For a generic agent, this new game model is specified by
\begin{align}
\biggl( \sX, \sA, \sY, \{p_t\}_{t = 0}^{T+1}, r, \{c_t\}_{t=0}^{T+1}, \mu_0 \biggr), \nonumber
\end{align}
where
$
\sX = \sS \times [0,L] \nonumber
$
is the new state space with $L = \frac{K}{1-\beta}$, where $L$ is the maximum risk-neutral discounted-cost that can be incurred. For every $t$, the state transition kernel $p_t : \sX \times \sA \times \P(\sX) \to \P(\sX)$ is defined as\footnote{In the remainder of this paper, we use letter `$a$' instead of `$u$', to denote actions, to emphasize that they are generated using the new game model.}:
\begin{align}
p_t\bigl(B \times D \big| x(t),a(t),\mu_t\bigr) = q(B|s(t),a(t),\mu_{t,1}) \otimes \delta_{m(t) + \beta^t m(s(t),a(t),\mu_{t,1})}(D), \nonumber
\end{align}
where $B \in \B(\sS)$, $D \in \B([0,L])$,
$
x(t) = (s(t),m(t)), \nonumber
$
and $\mu_{t,1}$ is the marginal of $\mu_t$ on $\sS$. Here, $p_t$ is indeed the controlled transition probability of the next state $s_i^N(t+1)$ and current risk-neutral total discounted cost $$\sum_{k=0}^{t} \beta^k m(s_i^N(k),a_i^N(k),d_k^{(N)})$$ given the current state-action pair $(s_i^N(t),a_i^N(t))$ and past risk-neutral total discounted cost $\sum_{k=0}^{t-1} \beta^k m(s_i^N(k),a_i^N(k),d_k^{(N)})$ in the original game. The observation kernel $r: \sX \rightarrow \P(\sY)$ is equivalent to the observation kernel $l$ in the original problem; that is, $r(dy|x) = l(dy|s)$ where $x = (s,m)$. For each $t$, the one-stage cost function $c_t: \sX \times \sA \times \P(\sX) \rightarrow [0,\infty)$ is defined as:
\begin{align}
c_t(x(t),a(t),\mu_t) =
\begin{cases}
0, & \text{ } \text{ if $t \leq T$} \\
e^{\lambda m(t)}, & \text{ } \text{ if $t = T + 1$}.
\end{cases}\nonumber
\end{align}
Finally, the initial measure $\mu_0$ is given by $\mu_0(dx(0)) = \kappa_0(ds(0)) \otimes \delta_0(dm(0))$, where the initial states $\{x_i^N(0)\}$ are independent and identically distributed according to $\mu_0$. Note that, in this equivalent game model, the finite-horizon is $T+1$ instead of $T$ and system components depend on time $t$. We also define the empirical distribution of the states at time $t$ as follows:
\begin{align}
e_t^{(N)}(\,\cdot\,) = \frac{1}{N} \sum_{i=1}^N \delta_{x_i^N(t)}(\,\cdot\,) \in \P(\sX). \nonumber
\end{align}

Suppose that Assumption~\ref{as1} holds. Then, for each $t$, the following are true for the new game model:
\begin{itemize}
\item [(I)] The one-stage cost function $c_t$ is bounded and continuous.
\item [(II)] The stochastic kernel $p_t$ is weakly continuous.
\item [(III)] The observation kernel $r$ is continuous with respect to the total variation distance.
\item [(IV)] Let $w: \sX \rightarrow [1,\infty)$ be defined as $w(x) = w((s,m)) = v(s)$, which is a moment function. Then, we have
\begin{align}
\sup_{(a,\mu) \in \sA \times \P(\sX)} \int_{\sX} w(y) p_t(dy|x,a,\mu) \leq \alpha w(x).
\end{align}
\item [(V)] The initial probability measure $\mu_0$ satisfies
$
\int_{\sX} w(x) \mu_0(dx) = M < \infty. \nonumber
$
\end{itemize}
%It is straightforward to prove that (I), (II), and (III) hold since $m$ is continuous, $q$ is weakly continuous and $l$ is continuous with respect to the total variation distance. One can also prove (IV) and (V).

Recall that $\tilde{\Pi}_i$ denotes the set of policies for Agent $i$ that only use observations in the original game. Note that $\tilde{\Pi}_i$ is also the set of policies for Agent $i$ that only use observations in the new game model. For Agent~$i$, the finite-horizon risk-neutral total cost under the $N$-tuple of policies ${\boldsymbol \pi}^{(N)} \in {\bf \tilde{\Pi}}^{(N)}$ is denoted as $J_i^{(N)}({\boldsymbol \pi}^{(N)})$; that is
\begin{align}
J_i^{(N)}({\boldsymbol \pi}^{(N)}) &= E^{{\boldsymbol \pi}^{(N)}}\biggl[ \sum_{t=0}^{T+1} c_t(x_i^N(t),a_i^N(t),e_t^{(N)})\biggr]. \nonumber
\end{align}
The following proposition makes the connection between this new model and the original model. The proof is straightforward, and so, we omit the details (see the proof of \cite[Proposition 5.1]{SaBaRa18-r}). 

\begin{proposition}\label{app-prop1}
For any ${\boldsymbol \pi}^{(N)} \in {\bf \tilde{\Pi}}^{(N)}$ and $i=1,\ldots,N$, we have $J_i^{(N)}({\boldsymbol \pi}^{(N)}) = W_i^{(N)}({\boldsymbol \pi}^{(N)})$.
\end{proposition}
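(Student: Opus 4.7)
The plan is to observe that the reformulated cost collapses to a single nonzero term at the terminal time. By construction, $c_t \equiv 0$ for $t \le T$ and $c_{T+1}(x,a,\mu) = e^{\lambda m}$ where $m$ is the second coordinate of $x$. The second coordinate of $x_i^N(t)$ is deterministically propagated by $p_t$ via
\begin{align*}
m_i^N(t+1) = m_i^N(t) + \beta^t m\bigl(s_i^N(t),a_i^N(t),\mu_{t,1}\bigr),
\end{align*}
with $m_i^N(0) = 0$, so that telescoping gives $m_i^N(T+1) = \sum_{k=0}^{T}\beta^k m(s_i^N(k),a_i^N(k),\mu_{k,1})$. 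Hence
\begin{align*}
\sum_{t=0}^{T+1} c_t\bigl(x_i^N(t),a_i^N(t),e_t^{(N)}\bigr) = \exp\!\Bigl(\lambda\sum_{k=0}^{T}\beta^k m\bigl(s_i^N(k),a_i^N(k),e_{k,1}^{(N)}\bigr)\Bigr).
\end{align*}

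Then I would verify that the joint law of $\bigl((s_i^N(t),a_i^N(t),g_i^N(t))\bigr)_{t,i}$ under ${\boldsymbol\pi}^{(N)}$ is identical in the two models. The $\sS$-marginal of $p_t$ is precisely $q$ (independent of the $[0,L]$-coordinate and of $t$), the observation kernel satisfies $r(\,\cdot\,|x) = l(\,\cdot\,|s)$, and the initial $\sS$-marginal of $\mu_0$ is $\kappa_0$. Since policies in $\tilde{\Pi}_i$ depend only on observations, the same conditional law \eqref{eq:policy_spec} governs actions in both models. An induction on $t$ (or a direct comparison of the finite-dimensional laws constructed by Ionescu--Tulcea) therefore shows that the empirical measures satisfy $e_{t,1}^{(N)} = d_t^{(N)}$ almost surely, and the original cost accumulations coincide pathwise with the ones above.

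Combining the two observations and taking expectation yields
\begin{align*}
J_i^{(N)}({\boldsymbol\pi}^{(N)}) = E^{{\boldsymbol\pi}^{(N)}}\Bigl[\exp\!\Bigl(\lambda\sum_{t=0}^{T}\beta^t m(s_i^N(t),u_i^N(t),d_t^{(N)})\Bigr)\Bigr] = W_i^{(N)}({\boldsymbol\pi}^{(N)}).
\end{align*}
There is no real obstacle here beyond keeping the bookkeeping careful; the only point that deserves attention is checking that the marginal-on-$\sS$ of $e_t^{(N)}$ really equals $d_t^{(N)}$, which follows from the product structure of $p_t$ and of $\mu_0$.
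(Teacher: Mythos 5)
Your proof is correct and is precisely the ``straightforward'' argument the paper omits: the telescoping of the second state coordinate, the identification of the $\sS$-marginals of $p_t$, $r$, $\mu_0$, and $e_t^{(N)}$ with $q$, $l$, $\kappa_0$, and $d_t^{(N)}$, and the observation that policies in $\tilde{\Pi}_i$ induce the same action law in both models. Nothing further is needed.
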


%\begin{proof}
%The proof is given in Appendix~\ref{app1}.
%\end{proof}

Proposition~\ref{app-prop1} states that the new game model is equivalent to the original game model in terms of cost functions. This is true because the new game model consists of the one-stage costs incurred up to the current time as an additional state variable. Therefore, if we take the exponent of this additional state at time $T+1$ as in the definition of $c_{T+1}$, we obtain the risk-sensitive cost of the original game model. Hence, in the remainder of this paper, we replace the original game model with the new one; that is, from this point on, we have the following system components satisfying (I)-(V):
\begin{align}
\biggl( \sX, \sA, \sY, \{p_t\}_{t = 0}^{T+1}, r, \{c_t\}_{t=0}^{T+1}, \mu_0 \biggr). \nonumber
\end{align}

%{\color{red} 
\begin{remark}
Note that in the new game model, the time horizon is $T+1$, which means that agents should also design control policies for the time step $T+1$. However, note that control policies at time step $T+1$ do not affect the cost function (i.e., one-stage cost at time $T+1$ is only a function of the state), and thus agents indeed do not need to select these policies in the new game model. Hence, we can in a sense view the time horizons of the two problems as $T$. 
\end{remark}
%}

Note that the cost functions $J_i^{(N)}({\boldsymbol \pi}^{(N)})$ of this new game model are in additive form (i.e., risk-neutral). Therefore, we can use a technique similar to the one in \cite{SaBaRa18} to prove the existence of an approximate Nash equilibrium. To this end, we will first consider the infinite-population limit of the new game model and prove the existence of an equilibrium. Then, we will go back to the finite agent case and establish the existence of approximate Nash equilibrium for the new game model using the infinite population equilibrium solution. Since, by Proposition~\ref{app-prop1}, the new game model has the same cost function as the original game model, the last result also implies the existence of an approximate Nash equilibrium for the original game, which was the main goal of this paper.

\section{Partially observed mean-field games and mean-field equilibria}\label{sec3}

In this section, we introduce the infinite population limit of the new game introduced in the preceding section. Although it is called mean-field game, it is not game in the classical sense: it is a stochastic control problem whose state distribution at each time step should satisfy a certain consistency condition. The optimal solution of this problem is referred to as mean-field equilibrium. In other words, we have a single agent and model the mean-field term by an exogenous \textit{state-measure flow} $\bmu := (\mu_t)_{t = 0}^{T+1} \subset \P(\sX)$  with a given initial condition $\mu_0$, by the law of large numbers. This measure flow $\bmu$ should also be consistent with the state distributions of this single agent when the agent acts optimally. The precise mathematical description of the problem is given as follows.

The mean-field game model for a generic agent is specified by
\begin{align}
\biggl( \sX, \sA, \sY, \{p_t\}_{t = 0}^{T+1}, r, \{c_t\}_{t=0}^{T+1}, \mu_0 \biggr), \nonumber
\end{align}
where, as before, $\sX$, $\sA$, and $\sY$ are the state, action, and observation spaces, respectively. The stochastic kernel $p_t : \sX \times \sA \times \P(\sX) \to \P(\sX)$ denotes the transition probability, and $r: \sX \times \P(\sX) \to \P(\sY)$ denotes the observation kernel. The measurable function $c_t: \sX \times \sA \times \P(\sX) \rightarrow [0,\infty)$ is the one-stage cost function and $\mu_0$ is the distribution of the initial state.

Recall the history spaces $\sG_0 = \sY$ and $\sG_{t}=(\sY\times\sA)^{t}\times \sY$ for $t=1,2,\ldots$, all endowed with product Borel $\sigma$-algebras. A \emph{policy} is a sequence $\pi=\{\pi_{t}\}$ of stochastic kernels on $\sA$ given $\sG_{t}$. The set of all policies is denoted by $\Pi$.

We let $\M = \bigl\{\bmu \in \P(\sX)^{T+2}: \mu_0 \text{ is fixed}\bigr\}$ be the set of all state-measure flows with a given initial condition $\mu_0$. Given any measure flow $\bmu \in \M$, the evolution of the states, observations, and actions is as follows
\begin{align}
x(0) &\sim \mu_0, \nonumber \\
y(t) &\sim r(\,\cdot\,|x(t)), \text{ } t=0,1,\ldots \nonumber \\
x(t) &\sim p_{t-1}(\,\cdot\,|x(t-1),a(t-1),\mu_{t-1}), \text{ } t=1,2,\ldots \nonumber \\
a(t) &\sim \pi_t(\,\cdot\,|\gamma(t)), \text{ } t=0,1,\ldots, \nonumber
\end{align}
where $\gamma(t) \in \sG_t$ is the observation-action history  up to time $t$. An initial distribution $\mu_0$ on $\sX$, a policy $\pi$, and a state-measure flow $\bmu$ define a unique probability measure $P^{\pi}$ on $(\sX \times \sY \times \sA)^{T+2}$. The expectation with respect to $P^{\pi}$ is denoted by $E^{\pi}[\,\cdot\,]$. A policy $\pi^{*} \in \Pi$ is said to be optimal for $\bmu$ if
$
J_{\bmu}(\pi^{*}) = \inf_{\pi \in \Pi} J_{\bmu}(\pi), \nonumber
$
where the finite-horizon cost of policy $\pi$ with measure flow $\bmu$ is given by
\begin{align}
J_{\bmu}(\pi) &=  E^{\pi}\biggl[ \sum_{t=0}^{T+1} c_t(x(t),a(t),\mu_t) \biggr] \nonumber
\end{align}

Using these definitions, we first define the set-valued mapping
$
\Psi : \M \rightarrow 2^{\Pi} \nonumber
$
as $\Psi({\boldsymbol \mu}) = \{\pi \in \Pi: \pi \text{ is optimal for } {\boldsymbol \mu}\}$.
Conversely, we define a single-valued mapping
$
\Lambda : \Pi \to \M \nonumber
$
as follows: given $\pi \in \Pi$, the state-measure flow $\bmu := \Lambda(\pi)$ is constructed recursively as
\begin{align}
\mu_{t+1}(\,\cdot\,) = \int_{\sX \times \sA} p_t(\,\cdot\,|x(t),a(t),\mu_t) P^{\pi}(da(t)|x(t)) \mu_t(dx(t)), \nonumber
\end{align}
where $P^{\pi}(da(t)|x(t))$ denotes the conditional distribution of $a(t)$ given $x(t)$ under $\pi$ and $(\mu_{\tau})_{0\leq\tau\leq t}$. Using $\Psi$ and $\Lambda$, we now introduce the mean-field equilibrium.

\begin{definition}
A pair $(\pi^*,{\boldsymbol \mu}^*) \in \Pi \times \M$ is a \emph{mean-field equilibrium} if $\pi^* \in \Psi({\boldsymbol \mu}^*)$ and $\bmu^* = \Lambda(\pi^*)$. 
\end{definition}

\noindent The main result of this section is the existence of a mean-field equilibrium. Later we will show that this mean-field equilibrium constitutes an approximate Nash equilibrium for games with sufficiently many agents.

\begin{theorem}\label{thm:MFE}
The mean-field game
$\bigl(\sX, \sA, \sY, \{p_t\}_{t = 0}^{T+1}, r, \{c_t\}_{t=0}^{T+1}, \mu_0 \bigr)$
admits a mean-field equilibrium $(\pi^*,\bmu^*)$.
\end{theorem}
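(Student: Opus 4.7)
The plan is to recast the existence of a mean-field equilibrium as a fixed-point problem for the composition $\Lambda\circ\Psi$ on a suitable compact convex subset of $\M$ and apply the Kakutani--Fan--Glicksberg theorem. The first step is, for each fixed $\bmu\in\M$, to convert the partially observed control problem into a fully observed problem on the belief space $\sZ:=\P(\sX)$. The conditional distribution $z(t):=P^{\pi}(x(t)\in\cdot\mid y(0),a(0),\ldots,y(t))$ satisfies a (time- and $\bmu$-dependent) nonlinear filter recursion $z(t+1)=F_t(z(t),a(t),y(t+1);\mu_t)$. Under properties (II)--(III), this filter is weakly continuous in its arguments, and the one-stage cost $c_t$ lifts to a bounded continuous cost on $\sZ\times\sA\times\P(\sX)$ by (I). Standard finite-horizon dynamic programming for fully observed MDPs with compact action space, weakly continuous transitions, and bounded continuous cost then yields an optimal deterministic Markov policy on $\sZ$, so $\Psi(\bmu)\neq\emptyset$; enlarging to randomized Markov belief-policies makes $\Psi(\bmu)$ convex-valued.

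Next I would compactify the space of measure flows using the moment condition (IV)--(V). Define
$$
\M_w:=\bigl\{\bmu\in\M:\mu_t(w)\le\alpha^t M,\ t=0,\ldots,T+1\bigr\}.
$$
Because $w$ is a moment function, each sublevel set $\{\mu:\mu(w)\le c\}$ is tight and hence weakly compact, so $\M_w$ is convex and compact in the product weak topology. Properties (IV)--(V) guarantee $\Lambda(\pi)\in\M_w$ for every $\pi\in\Pi$, so $\Gamma:=\Lambda\circ\Psi$ sends $\M_w$ into itself.

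The third step is to verify the hypotheses of Kakutani--Fan--Glicksberg. I would equip the policy set $\Pi$ (or the associated space of strategic/occupation measures on $(\sX\times\sY\times\sA)^{T+2}$) with a topology making it compact and making $J_{\bmu}(\pi)$ jointly continuous in $(\pi,\bmu)$. Using (I), (II), and the total-variation continuity in (III), one propagates continuity of the filter and of the one-step distributions through the finite horizon, concluding that $\Psi:\M_w\rightrightarrows\Pi$ has closed graph with nonempty convex values and that $\Lambda:\Pi\to\M_w$ is continuous. Hence $\Gamma$ is upper hemicontinuous with nonempty convex compact values, and Kakutani--Fan--Glicksberg produces a fixed point $\bmu^*\in\Gamma(\bmu^*)$, i.e.\ a pair $(\pi^*,\bmu^*)$ with $\pi^*\in\Psi(\bmu^*)$ and $\bmu^*=\Lambda(\pi^*)$.

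The principal obstacle is establishing joint continuity of $J_{\bmu}(\pi)$ in $(\bmu,\pi)$, since $\bmu$ enters both through the transition kernel $p_t(\cdot\mid\cdot,\cdot,\mu_t)$ and through the belief dynamics that drive the policy. A weak perturbation of $\mu_t$ only yields weak convergence of $p_t(\cdot\mid x,a,\mu_t')$, so to control the filter one must exploit the total-variation continuity of $r$ in (III) (which makes the observation channel setwise continuous in the underlying state law) and iterate the continuity estimate across time steps. This is precisely where the adaptation from the homogeneous risk-neutral setting of \cite{SaBaRa18} to the present non-homogeneous finite-horizon model (with the auxiliary cost coordinate from \cref{sub1sec2}) requires the most care, and where assumption~\ref{as1}(c) becomes essential.
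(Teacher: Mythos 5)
Your scaffolding --- reduction to the belief-state MDP on $\sZ$, compactification via the moment function $w$, and an appeal to Kakutani --- matches the paper's. But the fixed-point map you chose does not: you iterate $\Lambda\circ\Psi$ on the space of state-measure flows, which forces you to topologize the policy set and to prove joint continuity of $J_{\bmu}(\pi)$ in $(\bmu,\pi)$, whereas the paper deliberately avoids introducing any topology on policies. Following Jovanovic--Rosenthal, it lifts the problem to joint state-action measure flows $\bnu\in\P(\sZ\times\sA)^{T+2}$ and defines $\Gamma(\bnu)=C(\bnu)\cap B(\bnu)$, where $C(\bnu)$ encodes consistency, $\nu'_{t+1,1}(\cdot)=\int\eta^{\bnu}_t(\cdot|z,a)\,\nu_t(dz,da)$, and $B(\bnu)$ encodes optimality through the pointwise Bellman condition of \cref{theorem1}, namely that $\nu'_t$ is supported on the set where $C^{\bnu}_t(z,a)+\int J^{\bnu}_{*,t+1}\,d\eta^{\bnu}_t(\cdot|z,a)$ attains $T^{\bnu}_tJ^{\bnu}_{*,t+1}(z)$. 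Closedness of the graph then reduces to continuous convergence of $\eta^{\bnu^{(n)}}_t$, $C^{\bnu^{(n)}}_t$, and $J^{\bnu^{(n)}}_{*,t}$ (\cref{prop:belief_conv} and \cref{prop3}), and a fixed point is converted back into a pair $(\pi^*,\bmu^*)$ by disintegrating each $\nu_t$ (\cref{prop1}).

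There are two concrete gaps in your route, both located exactly where it diverges from the paper's. First, convexity: $\Lambda$ is not affine in the policy --- the law of $(x(t),a(t))$ is multilinear in the kernels $\pi_0,\dots,\pi_t$, and $P^{\pi}(da|x)$ is obtained by disintegration --- so neither the optimality of a kernel-wise convex combination of two optimal policies nor the convexity of the image $\Lambda(\Psi(\bmu))$ follows from ``enlarging to randomized policies.'' Convexity is automatic only at the level of joint state-action measures, which is precisely why the paper works there. Second, the compact topology on $\Pi$ making $J_{\bmu}(\pi)$ jointly continuous, which you yourself identify as the principal obstacle, is asserted rather than constructed; for randomized history-dependent (or even Markov belief-state) policies there is no canonical such topology, and the Young-measure or ws-type topologies one would reach for require a fixed reference measure on the state marginals, which here varies with $\bmu$. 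Both difficulties dissolve once optimality is encoded by the Bellman support condition on measures rather than by membership in $\Psi(\bmu)$; as written, your argument closes neither.
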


The proof of Theorem~\ref{thm:MFE} is given in Section~\ref{main-proof}.
Our approach to prove Theorem~\ref{thm:MFE} can be summarized as follows: (i)  first, we lift the partially
observed stochastic control problem a generic agent is faced with for a given measure flow to a fully observed stochastic control problem; (ii) we then transform the fixed
point equation $\pi \in \Psi(\Lambda(\pi))$ characterizing the mean-field equilibrium into a fixed point equation
of a set-valued mapping from the set of state-action measure flows into itself using the Bellman optimality operator; (iii) then, we prove that this set-valued mapping has a closed graph; and (iv) finally, we deduce the existence of a mean-field equilibrium using Kakutani's fixed point theorem.

\section{Proof of Theorem~\ref{thm:MFE}}\label{main-proof}

Note that any measure flow $\bmu \in \M$ leads to a non-homogenous partially-observed Markov decision process (POMDP). Hence, before starting the proof of Theorem~\ref{thm:MFE}, we first review a few relevant results on POMDPs. To this end, fix any $\bmu \in \M$ and consider the corresponding optimal control problem.

Let $\P_w(\sX) = \bigl\{\mu \in \P(\sX): \int_{\sX} w(x) \mu(dx) < \infty \bigr\}$. It is known that any POMDP can be reduced to a (completely observable) MDP (see \cite{Yus76}, \cite{Rhe74}), whose states are the posterior state distributions or beliefs of the observer; that is, the state at time $t$ is
\begin{align}
z(t) = \sPr\{x(t) \in \,\cdot\, | y(0),\ldots,y(t), a(0), \ldots, a(t-1)\} \in \P(\sX). \nonumber
\end{align}
We call this equivalent MDP the belief-state MDP. Note that since ${\cal L}(x(t)) \in \P_w(\sX)$ under any policy by (IV)-(V), we have $$\sPr\{x(t) \in \,\cdot\, | y(0),\ldots,y(t), a(0), \ldots, a(t-1)\} \in \P_w(\sX)$$ almost everywhere. Therefore, the belief-state MDP has state space $\sZ = \P_w(\sX)$ and action space $\sA$. Here, $\sZ$ is endowed with the Borel $\sigma$-algebra generated by the topology of weak convergence. Next, we construct the transition probabilities $\{\eta_t\}_{t=0}^{T+1}$ of the belief-state MDP (see also \cite{Her89}). Let $z$ denote the generic state variable for the belief-state MDP. Fix any $t$. First consider the transition probability on $\sX \times \sY$ given $\sZ \times \sA$
\begin{align}
R_t(x \in A, y \in B|z,a) = \int_{\sX} \kappa_t(A,B|x',a) z(dx'), \nonumber
\end{align}
where $\kappa_t(dx,dy|x',a) = r(dy|x) \otimes p_t(dx|x',a,\mu_t)$. Let us disintegrate $R_t$ as follows
$
R_t(dx,dy|z,a) = H_t(dy|z,a) \otimes F_t(dx|z,a,y). \nonumber
$
Then, we define the mapping $F_t: \sZ \times \sA \times \sY \rightarrow \sZ$ as
\begin{align}
F_t(z,a,y)(\,\cdot\,) = F_t(\,\cdot\,|z,a,y) \label{eq:non_filtering}.
\end{align}
%Note that
%\begin{align}
%&F_t(z,a,y)(\,\cdot\,) \nonumber \\
%&\phantom{xxxxx}= \Pr\{x(t+1) \in \,\cdot\, | z(t) = z, a(t) = a, y(t+1) = y\} \nonumber
%\intertext{and}
%&H_t(\,\cdot\, | z,a) = \Pr\{y(t+1) \in \,\cdot\, | z(t) = z, a(t) = a\}. \nonumber
%\end{align}
Then, $\eta_t: \sZ \times \sA \rightarrow \P(\sZ)$ is defined as
\begin{align}
&\eta_t(\,\cdot\,|z(t),a(t)) = \int_{\sY} \delta_{F_t(z(t),a(t),y(t+1))}(\,\cdot\,) \text{ } H_t(dy(t+1)|z(t),a(t)). \nonumber
\end{align}
The initial point for the belief-state MDP is $\mu_0$; that is, ${\cal L}(z(0)) \sim \delta_{\mu_0}$. Finally, for each $t$, the one-stage cost function $C_t$ of the belief-state MDP is given by
\begin{align}
C_t(z,a) = \int_{\sX} c_t(x,a,\mu_t) z(dx). \label{eq8}
\end{align}
Hence, the belief-state MDP is a Markov decision process with the components
$
\bigl(\sZ,\sA,\{\eta_t\}_{t=0}^{T+1},\{C_t\}_{t=0}^{T+1},\delta_{\mu_0}\bigr). \nonumber
$

For the belief-state MDP define the history spaces $\sK_0 = \sZ$ and $\sK_{t}=(\sZ\times\sA)^{t}\times\sZ$, $t=1,2,\ldots$. A \emph{policy} is a sequence $\varphi=\{\varphi_{t}\}$ of stochastic kernels on $\sA$ given $\sK_{t}$. The set of all policies is denoted by $\Phi$. A \emph{Markov} policy is a sequence $\varphi=\{\varphi_{t}\}$ of stochastic kernels on $\sA$ given $\sZ$. The set of Markov policies is denoted by $\sM$. Let $\tJ(\varphi,\mu_0)$ denote the finite-horizon cost function of policy $\varphi \in \Phi$ for initial point $\mu_0$ of the belief-state MDP. Notice that any history vector $s(t) = (z(0),\ldots,z(t),a(0),\ldots,a(t-1))$ of the belief-state MDP is a function of the history vector $\gamma(t) = (y(0),\ldots,y(t),a(0),\ldots,a(t-1))$ of the POMDP. Let us write this relation as $i(\gamma(t)) = s(t)$. Hence, for a policy $\varphi = \{\varphi_t\} \in \Phi$, we can define a policy $\pi^{\varphi} = \{\pi_t^{\varphi}\} \in \Pi$ as
$
\pi_t^{\varphi}(\,\cdot\,|\gamma(t)) = \varphi_t(\,\cdot\,|i(\gamma(t))). \nonumber
$
Let us write this as a mapping from $\Phi$ to $\Pi$: $\Phi \ni \varphi \mapsto i(\varphi) = \pi^{\varphi} \in \Pi$. It is straightforward to show that the cost functions $\tJ(\varphi,\mu_0)$ and $J_{\bmu}(\pi^{\varphi})$ are the same. One can also prove that (see \cite{Yus76}, \cite{Rhe74})
\begin{align}
\inf_{\varphi \in \Phi} \tJ(\varphi,\mu_0) &= \inf_{\pi \in \Pi} J_{\bmu}(\pi) \label{eq7}
\end{align}
and furthermore, that if $\varphi$ is an optimal policy for belief-state MDP, then $\pi^{\varphi}$ is optimal for the POMDP as well. Therefore, the optimal control problem for the mean-field game is equivalent to the optimal control of belief-state MDP.

We now derive the conditions that are satisfied by belief-state MDP. To that end, define $W:\sZ \rightarrow \R$ as
\begin{align}
W(z) = \int_{\sX} w(x) z(dx). \nonumber
\end{align}
Note that $W$ is a lower semi-continuous moment function on $\sZ$. One can prove that (see \cite[Section 4]{SaBaRa18}) the belief-state MDP satisfies the following conditions under Assumption~1:

\begin{itemize}
\item [(i)] The cost functions $\{C_t\}$ are bounded and continuous.
\item [(ii)] The stochastic kernels $\{\eta_t\}$ are weakly continuous.
\item [(iii)] $\sA$ is compact and $\sZ$ is $\sigma$-compact.
\item [(iv)] There exists a constant $\alpha \ge 0$ such that
\begin{align}
\sup_{a \in \sA} \int_{\sZ} W(y) \eta_t(dy|z,a) \leq \alpha W(z), \text{ } \text{for all $t$.} \nonumber
\end{align}
\item [(v)] The initial probability measure $\delta_{\mu_0}$ satisfies
$
W(\delta_{\mu_0}) = M < \infty. \nonumber
$
\end{itemize}

\smallskip

With these conditions, we are now ready to prove Theorem~\ref{thm:MFE} by adapting techniques in \cite{SaBaRa18} to the non-homogeneous and finite-horizon set-up.

We first define the mapping $\sB: \P(\sZ) \rightarrow \P(\sX)$, which will define the relation between state-measure flows in the mean-field game and state-measure flows in the belief-state MDP, as follows:
\begin{align}
\sB(\nu)(\,\cdot\,) = \int_{\sZ} z(\,\cdot\,) \text{ } \nu(dz). \nonumber
\end{align}
Using this definition, for any $\bnu \in \P(\sZ \times \sA)^{T+2}$, we define the measure flow $\bmu^{\bnu} \in \P(\sX)^{T+2}$ as follows:
\begin{align}
\bmu^{\bnu} = \bigl(\sB(\nu_{t,1})\bigr)_{t=0}^{T+1}, \nonumber
\end{align}
where for any $\nu \in \P(\sZ \times \sA)$, we let $\nu_1$ denote the marginal of $\nu$ on $\sZ$. Let $\{\eta_t^{\bnu}\}_{t=0}^{T+1}$ and $\{C_t^{\bnu}\}_{t=0}^{T+1}$ be, respectively, the transition probabilities and one-stage cost functions of belief-state MDP induced by the measure flow $\bmu^{\bnu}$. We let $J_{*,t}^{\bnu}: \sZ \rightarrow [0,\infty)$ denote the optimal value function at time $t$ of this belief-state MDP; that is,
\begin{align}
J_{*,t}^{\bnu}(z) = \inf_{\varphi \in \Phi} E^{\varphi} \biggl[ \sum_{k=t}^{T+1} C_k^{\bnu}(z(k),a(k)) \bigg| z(t) = z \biggr]. \nonumber
\end{align}
Let $J_{*}^{\bnu} = \bigl( J^{\bnu}_{*,t}\bigr)_{t=0}^{T+1}$.

To prove the existence of a mean-field equilibrium, we use the technique in  \cite{JoRo88}. To that end, we first transform the fixed point equation $\pi \in \Psi(\Lambda(\pi))$ characterizing the mean-field equilibrium into a fixed-point equation of a set-valued mapping from the set of state-action measure flows $\P(\sZ \times \sA)^{T+2}$ into itself. Then, using Kakutani's fixed point theorem (\cite[Corollary 17.55]{AlBo06}), we deduce the existence of a mean-field equilibrium.

For any  $t$, the \emph{Bellman optimality operator } $T_t^{\bnu}: C_{b}(\sZ)\rightarrow C_{b}(\sZ)$ is given by
\begin{align}
T_t^{\bnu} u(z) = \min_{a \in \sA} \biggl[ C^{\bnu}_t(z,a) + \int_{\sZ} u(y) \eta^{\bnu}_t(dy|z,a) \biggr]. \nonumber
\end{align}
Note that $T_t^{\bnu} J^{\bnu}_{*,t+1} = J^{\bnu}_{*,t}$ for every $t$. The following theorem is a known result in the theory of nonhomogeneous Markov decision processes (see \cite[Theorems 14.4 and 17.1]{Hin70}). For any given $\bnu$, it characterizes the optimal policy of the belief-state MDP.

\begin{theorem}\label{theorem1}
For any $\bnu$, a policy $\varphi \in \sM$ is optimal if and only if, for all $t$,
\begin{align}
&\nu_t^{\varphi} \biggl( \biggr\{ (z,a) : C^{\bnu}_t(z,a) + \int_{\sZ} J_{*,t+1}^{\bnu}(y) \eta^{\bnu}_t(dy|z,a) = T_t^{\bnu} J_{*,t+1}^{\bnu}(z) \biggr\} \biggr) = 1, \label{eq5}
\end{align}
where $\nu_t^{\varphi} = {\cal L}\bigl( z(t),a(t) \bigr)$ under $\varphi$ and $\bnu$.
\end{theorem}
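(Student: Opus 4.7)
The plan is to establish \cref{theorem1} via the standard performance-difference identity from dynamic programming, applied to the non-homogeneous belief-state MDP $(\sZ, \sA, \{\eta^{\bnu}_t\}, \{C^{\bnu}_t\}, \delta_{\mu_0})$ induced by $\bmu^{\bnu}$. Fix $\bnu$ throughout, and adopt the terminal convention $J^{\bnu}_{*,T+2} \equiv 0$ so that the Bellman recursion $T^{\bnu}_t J^{\bnu}_{*,t+1} = J^{\bnu}_{*,t}$ holds for every $t \in \{0, 1, \ldots, T+1\}$. Under conditions (i)--(v) this recursion is valid pointwise by standard nonhomogeneous MDP theory (\cite{Hin70}), and $J^{\bnu}_{*,t}$ is measurable with the right integrability against $W$.

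Define the \emph{Bellman gap}
\begin{align*}
\Delta_t(z,a) := C^{\bnu}_t(z,a) + \int_{\sZ} J^{\bnu}_{*,t+1}(y) \eta^{\bnu}_t(dy|z,a) - J^{\bnu}_{*,t}(z).
\end{align*}
By the Bellman equation $\Delta_t(z,a) \geq 0$ pointwise, and the event inside \eqref{eq5} is precisely $\{(z,a) : \Delta_t(z,a) = 0\}$. I will then prove the identity
\begin{align*}
\tJ(\varphi, \mu_0) - J^{\bnu}_{*,0}(\mu_0) = \sum_{t=0}^{T+1} \int_{\sZ \times \sA} \Delta_t(z,a) \, \nu^{\varphi}_t(dz, da)
\end{align*}
for every $\varphi \in \sM$. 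This follows by a one-line telescoping argument: starting from $\tJ(\varphi, \mu_0) = E^{\varphi}\bigl[\sum_{t=0}^{T+1} C^{\bnu}_t(z(t), a(t))\bigr]$, I rewrite each summand as $C^{\bnu}_t(z(t),a(t)) = \Delta_t(z(t),a(t)) + J^{\bnu}_{*,t}(z(t)) - \int_{\sZ} J^{\bnu}_{*,t+1}(y) \eta^{\bnu}_t(dy|z(t),a(t))$, use the tower property together with $z(t+1) \mid (z(t), a(t)) \sim \eta^{\bnu}_t$ to identify the last integral with $E^{\varphi}[J^{\bnu}_{*,t+1}(z(t+1))|z(t),a(t)]$, and observe that the $J^{\bnu}_{*,t}$ terms telescope, leaving $-J^{\bnu}_{*,0}(\mu_0)$ after cancellation at the terminal convention.

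Both directions of the theorem are then immediate. Because each $\Delta_t$ is non-negative and each $\nu^{\varphi}_t$ is a probability measure, the right-hand side of the identity vanishes if and only if $\nu^{\varphi}_t(\{\Delta_t = 0\}) = 1$ for every $t$, which is exactly \eqref{eq5}. On the other hand, by \eqref{eq7} and the equivalence between POMDP policies and belief-state MDP policies, $\varphi$ is optimal for the belief MDP if and only if $\tJ(\varphi, \mu_0) = J^{\bnu}_{*,0}(\mu_0)$, i.e.\ if and only if the left-hand side of the identity vanishes. Combining these two statements yields the claim.

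The argument is essentially bookkeeping once the performance-difference identity is in place; the only point requiring some care is handling the terminal time and ensuring the Bellman recursion holds pointwise (not merely almost surely), since the condition in \eqref{eq5} is stated pointwise inside the indicator. This pointwise recursion and the existence of a measurable minimizing selector both follow from the weak continuity of $\eta^{\bnu}_t$ and the continuity/boundedness of $C^{\bnu}_t$ in (i)--(ii) via a standard Berge-type measurable-selection argument, so no further technical work is needed.
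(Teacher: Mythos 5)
Your proof is correct. The paper itself gives no argument for \cref{theorem1}; it simply cites it as a known result from Hinderer's theory of nonhomogeneous MDPs, so there is no in-paper proof to compare against. Your performance-difference (telescoping) identity is the standard way to establish exactly this kind of characterization, and all the steps check out: the terminal convention $J^{\bnu}_{*,T+2}\equiv 0$ makes the recursion uniform in $t$, the nonnegativity of the Bellman gap reduces optimality of $\varphi$ to $\nu^{\varphi}_t$-almost-sure attainment of the minimum at every stage, and the identification $\inf_{\varphi}\tJ(\varphi,\mu_0)=J^{\bnu}_{*,0}(\mu_0)$ holds by the definition of $J^{\bnu}_{*,0}$ together with the dynamic-programming recursion, which the paper has already asserted under conditions (i)--(v). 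One tiny quibble: your appeal to \eqref{eq7} is unnecessary and slightly misplaced, since that equation concerns the equivalence between the POMDP and the belief-state MDP, whereas \cref{theorem1} is a statement entirely internal to the belief-state MDP; nothing in your argument actually uses it.
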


Using Theorem~\ref{theorem1}, we now define the set-valued map from $\P(\sZ \times \sA)^{T+2}$ into itself. To that end, for any $\bnu \in \P(\sZ \times \sA)^{T+2}$, let us define the following sets:
\begin{align}
C(\bnu) &= \biggl\{ \bnu' \in \P(\sZ \times \sA)^{T+2}: \nu'_{0,1} = \delta_{\mu_0},  \, \nu'_{t+1,1}(\,\cdot\,) = \int_{\sZ \times \sA} \eta_t^{\bnu}(\,\cdot\,|z,a) \nu_t(dz,da)\biggr\} \nonumber \\
\intertext{and}
B(\bnu) &= \biggl\{ \bnu' \in \P(\sZ \times \sA)^{T+2}: \forall 0\leq t \leq T+1, \text{ } \nonumber \\
&\phantom{xx}\nu_t' \biggl( \biggr\{ (z,a) : C_t^{\bnu}(z,a) + \int_{\sZ} J_{*,t+1}^{\bnu}(y) \eta_t^{\bnu}(dy|z,a) = T_t^{\bnu} J^{\bnu}_{*,t+1}(z) \biggr\} \biggr) = 1 \biggr\}. \nonumber
\end{align}
Here, the set $C(\bnu)$ characterizes the consistency of the mean-field term with the state distribution of a generic agent, and the set $B(\bnu)$ characterizes optimality of the policy for the mean-field term. The set-valued mapping $\Gamma: \P(\sZ \times \sA)^{T+2} \rightarrow 2^{\P(\sZ \times \sA)^{T+2}}$ is given as follows:
\begin{align}
\Gamma(\bnu) = C(\bnu) \cap B(\bnu). \nonumber
\end{align}
Note that the fixed-point equation $\pi \in \Psi(\Lambda(\pi))$ characterizes the behaviour of the state distribution and the control law in mean-field equilibrium separately. 
%{\color{red}
To establish the existence of mean-field equilibrium via Kakutani's Fixed Point Theorem or Banach Fixed Point Theorem using this equation, one needs to put some topology on the policy space.
%} 
However, by combining the state distribution with the control law, which gives the joint distribution of the state and the action, we can characterize via the set-valued mapping $\Gamma$ the behaviour of the state and the control law together in mean-field equilibrium. This will enable us to deduce the existence of a mean-field equilibrium without introducing a topology for the control laws, which is in general the solution technique in continuous time setup (see \cite{HuMaCa06}).

An element $\bnu$ is a fixed point of $\Gamma$ if $\bnu \in \Gamma(\bnu)$. The following proposition makes the connection between mean-field equilibria and fixed points of $\Gamma$.

\begin{proposition}\label{prop1}
Suppose that $\Gamma$ has a fixed point $\bnu = (\nu_t)_{t = 0}^{T+1}$. Construct a Markov policy $\varphi = \{\varphi_t\}$ for belief-state MDP by disintegrating each $\nu_t$ as $\nu_t(dz,da) = \nu_{t,1}(dz) \varphi_t(da|z)$. Let $\pi^*=\pi^{\varphi}$ and $\bmu^* = (\sB(\nu_{t,1}))_{t = 0}^{T+1}$. Then the pair $(\pi^{*},\bmu^*)$ is a mean-field equilibrium.
\end{proposition}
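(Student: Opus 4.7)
The plan is to verify the two defining conditions of a mean-field equilibrium in turn: first the measure-flow consistency $\bmu^* = \Lambda(\pi^*)$, then the optimality $\pi^* \in \Psi(\bmu^*)$. Both will be extracted from $\bnu \in C(\bnu) \cap B(\bnu)$ by combining the disintegration that defines $\varphi$ with the POMDP/belief-state-MDP equivalence captured in \eqref{eq7}.

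The key preliminary step is to identify $\nu_t$ with the joint law of $(z(t),a(t))$ under the Markov policy $\varphi$ in the belief-state MDP driven by the measure flow $\bmu^*$. I would prove this by induction on $t$. The base case is immediate since $\nu_{0,1} = \delta_{\mu_0}$ is the prescribed initial distribution of $z(0)$, and $\varphi_0(\cdot|z)$ is the disintegration of $\nu_0$ along $z$, so $\nu_0 = {\cal L}(z(0),a(0))$. For the inductive step, the recursion in the definition of $C(\bnu)$, namely $\nu_{t+1,1}(\cdot) = \int \eta_t^{\bnu}(\cdot|z,a)\,\nu_t(dz,da)$, is \emph{exactly} the Chapman--Kolmogorov relation for the belief-state MDP once we know the joint law at time $t$ is $\nu_t$; and disintegrating $\nu_{t+1}$ against its first marginal recovers the choice of $a(t+1)$ according to $\varphi_{t+1}$.

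Given this identification, consistency is immediate: $\sB(\nu_{t,1})(\cdot) = \int z(\cdot)\,\nu_{t,1}(dz) = E^{\varphi}[z(t)(\cdot)]$, and by the tower property of conditional expectation together with the definition of the belief state, $E^{\varphi}[z(t)(\cdot)] = \sPr^{\pi^*}\{x(t)\in\cdot\} = {\cal L}(x(t))$ in the POMDP under $\pi^*$ with flow $\bmu^*$. Hence $\mu^*_t = \sB(\nu_{t,1}) = {\cal L}(x(t))$ for every $t$, i.e., $\bmu^* = \Lambda(\pi^*)$. Optimality then follows from $\bnu \in B(\bnu)$: since $\nu_t = {\cal L}(z(t),a(t))$ under $\varphi$ assigns full mass, at every $t$, to the set of Bellman-optimal state-action pairs for the belief-state MDP induced by $\bmu^*$, \cref{theorem1} yields that $\varphi \in \sM$ is optimal for this belief-state MDP. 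Invoking \eqref{eq7} and the accompanying fact that an optimal $\varphi$ lifts to an optimal $\pi^{\varphi}$ for the POMDP, we conclude $\pi^* = \pi^{\varphi}$ is optimal for the POMDP with flow $\bmu^*$, i.e., $\pi^* \in \Psi(\bmu^*)$.

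The main point requiring care is the identification $\nu_t = {\cal L}(z(t),a(t))$: once the disintegration convention and the Chapman--Kolmogorov interpretation of the $C(\bnu)$ recursion are in place, the rest of the argument is a direct translation between the $\Gamma$-fixed-point formulation and the $(\Psi,\Lambda)$-formulation of the equilibrium. No regularity of $\Gamma$ is needed at this stage; that will be used only when applying Kakutani to produce the fixed point itself.
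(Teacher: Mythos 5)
Your proposal is correct and follows essentially the same route as the paper: identify $\nu_t$ with ${\cal L}(z(t),a(t))$ under $\varphi$ via the $C(\bnu)$ recursion, read off consistency from $\sB(\nu_{t,1})$, and obtain optimality from $\bnu \in B(\bnu)$ together with \cref{theorem1} and the POMDP/belief-MDP equivalence \eqref{eq7}. The only (cosmetic) difference is in the consistency step, where the paper verifies the $\Lambda$-recursion by an explicit test-function computation unwinding $\eta_t^{\bnu}$ and the filter kernel, whereas you reach the same identity $\mu^*_t = {\cal L}(x(t))$ more directly via the tower property; both arguments are equivalent, and your explicit induction for $\nu_t = {\cal L}(z(t),a(t))$ is in fact slightly more careful than the paper, which asserts it without proof.
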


\begin{proof}
Note that, since $\bnu \in C(\bnu)$, we have $\nu_{t} = {\cal L}\bigl( z(t),a(t) \bigr)$ for belief-state MDP under the policy $\varphi$ and the measure flow $\bmu^*$. Then, for any $f \in C_b(\sX)$, we have
\begin{align}
\mu_{t+1}^*(f) &= \sB(\nu_{t+1,1})(f) \nonumber \\
&= \int_{\sZ \times \sA} \int_{\sZ} z'(f) \eta_t^{\bnu}(dz'|z,a) \nu_t(dz,da) \nonumber \\
&= \int_{\sZ \times \sA} \biggl\{ \int_{\sX} \int_{\sX} f(y) p_t(dy|x,a,\mu_t^*) z(dx) \biggr\} \nu_t(dz,da) \nonumber \\
&= E^{\varphi} \bigl[ l_t(z(t),a(t)) \bigr] \,\, \text{$\biggl($here $l_t(z,a) = \int_{\sX} \int_{\sX} f(y) p_t(dy|x,a,\mu_t^*) z(dx)$$\biggr)$} \nonumber \\
&= E^{\pi^{*}} \biggl[ \int_{\sX} f(y) p_t(dy|x(t),a(t),\mu_t^*) \biggr]. \label{eq:cons}
\end{align}
Since \eqref{eq:cons} is true for all $f \in C_b(\sX)$, we have
\begin{align}
\mu_{t+1}^*(\,\cdot\,) = \int_{\sX \times \sA} p_t(\,\cdot\,|x(t),a(t),\mu_t^*) P^{\pi^{*}}(da(t)|x(t)) \mu_t^*(dx(t)), \nonumber
\end{align}
where $P^{\pi^{*}}(da(t)|x(t))$ denotes the conditional distribution of $a(t)$ given $x(t)$ under $\pi^{*}$ and $(\mu_{\tau}^*)_{0\leq\tau\leq t}$. Hence, $\Lambda(\pi^{*}) = \bmu^*$.

Since $\bnu \in B(\bnu)$, the corresponding Markov policy $\varphi$ satisfies \eqref{eq5} for $\bnu$. Therefore, by Theorem~\ref{theorem1} and the fact that $\nu_{t} = {\cal L}\bigl( z(t),a(t) \bigr)$ for belief-state MDP under the policy $\varphi$ and the measure flow $\bmu^*$, $\varphi$ is optimal for belief-state MDP induced by the measure flow $\bmu^*$ (or, equivalently, $\bnu$). Therefore, $\pi^{*} \in \Psi(\bmu^*)$.\qed
\end{proof}

By Proposition~\ref{prop1}, it suffices to prove that $\Gamma$ has a fixed point in order to establish the existence of a mean-field equilibrium. To prove this, we use Kakutani's fixed point theorem, which is stated below:

\begin{theorem}{\cite[Corollary 17.55]{AlBo06}} Let $K$ be a non-empty compact convex subset of a locally convex Hausdorff space, and let the set-valued mapping $\phi: K \rightarrow 2^K$ have closed graph and non-empty convex values. Then, the set of fixed points of $\phi$ is compact and non-empty.
\end{theorem}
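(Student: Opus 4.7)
The plan is to reduce the set-valued problem to the single-valued Tychonoff fixed-point theorem by approximating $\phi$ with continuous selections whose graphs lie in a prescribed neighbourhood of $\sgr \phi$, then passing to a limit using compactness and the closed-graph hypothesis. The compactness of the fixed-point set would come almost for free: $\operatorname{Fix}(\phi) = \{x \in K : x \in \phi(x)\}$ is the image, under the continuous projection onto the first coordinate, of $\sgr \phi \cap \Delta$, where $\Delta = \{(x,x) : x \in K\}$ is closed because the ambient space is Hausdorff. Intersecting the closed $\sgr \phi$ with the closed $\Delta$ inside the compact Hausdorff space $K \times K$ yields a compact set, whose continuous image in $K$ is compact. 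The substantive task is therefore nonemptiness.

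For nonemptiness, let $E$ be the ambient locally convex Hausdorff space and $\mathcal{U}$ the filter base of open, convex, symmetric neighbourhoods of $0$ in $E$. For each $U \in \mathcal{U}$ I would build a continuous $f_U : K \to K$ whose graph lies in $\sgr \phi + (U \times U)$. Because $\phi$ takes values in the compact set $K$ and has closed graph, it is upper hemicontinuous; hence for each $x \in K$ there is an open $V_x \ni x$ with $\phi(V_x) \subset \phi(x) + U$. Compactness of $K$ gives a finite subcover $\{V_{x_1}, \ldots, V_{x_n}\}$; pick $y_i \in \phi(x_i) \subset K$ and a continuous partition of unity $\{\lambda_i\}$ on $K$ subordinate to this cover (available because compact Hausdorff spaces are normal), and set
\[
f_U(x) = \sum_{i=1}^{n} \lambda_i(x)\, y_i.
\]
Convexity of $K$ ensures $f_U(x) \in K$; for each $x$, the indices with $\lambda_i(x) > 0$ satisfy $y_i \in \phi(x) + U$, and convexity of both $\phi(x)$ and $U$ forces $f_U(x) \in \phi(x) + U$. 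This is exactly where the convex-valued hypothesis is consumed. Applying Tychonoff's theorem (continuous self-map of a compact convex subset of a locally convex Hausdorff space) to $f_U$ yields $x_U \in K$ with $f_U(x_U) = x_U$, and hence $(x_U, x_U) \in \sgr \phi + (U \times U)$.

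Finally, by compactness of $K$, the net $\{x_U\}_{U \in \mathcal{U}}$, directed by reverse inclusion, admits a subnet $x_{U_\alpha} \to x^{*}$. Choosing $(x'_\alpha, y'_\alpha) \in \sgr \phi$ with $x_{U_\alpha} - x'_\alpha,\, x_{U_\alpha} - y'_\alpha \in U_\alpha$, both $x'_\alpha$ and $y'_\alpha$ converge to $x^{*}$ along the subnet; closedness of $\sgr \phi$ then gives $(x^{*}, x^{*}) \in \sgr \phi$, i.e., $x^{*} \in \phi(x^{*})$. The main obstacle is the Cellina-style approximation step: one must orchestrate upper hemicontinuity, the convex-valued property, and the existence of partitions of unity on a possibly non-metrizable compact Hausdorff space so that Tychonoff's theorem (itself resting on the Brouwer/Schauder principles) is applicable. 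A secondary subtlety is that nets rather than sequences must be used in the terminal compactness argument, since $K$ need not be first-countable.
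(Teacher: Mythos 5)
This statement is the Kakutani--Fan--Glicksberg fixed point theorem, which the paper does not prove: it is quoted verbatim from \cite[Corollary 17.55]{AlBo06} and used as a black box. So there is no in-paper proof to compare against; your proposal must stand on its own. Its overall architecture (compactness of the fixed-point set via $\Gr(\phi)\cap\Delta$; nonemptiness via Cellina-type continuous approximations, Tychonoff's theorem, and a terminal net-limit using the closed graph) is the standard route and the first and last parts are sound.

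There is, however, a genuine gap in the approximation step. From upper hemicontinuity you chose $V_{x_i}$ with $\phi(V_{x_i})\subset \phi(x_i)+U$, and then asserted that $\lambda_i(x)>0$ implies $y_i\in\phi(x)+U$. But $x\in V_{x_i}$ only gives $\phi(x)\subset\phi(x_i)+U$, which is the \emph{opposite} containment; it does not place $y_i$ near $\phi(x)$. The claim is false in general: take $K=[0,1]$, $\phi(x)=\{0\}$ for $x<1/2$ and $\phi(x)=[0,1]$ for $x\geq 1/2$ (closed graph, convex values, u.h.c.); for $x$ slightly below $1/2$ one may have $\lambda_i(x)>0$ with $x_i=1/2$ and $y_i=1\notin\{0\}+U$. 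Consequently $f_U(x)\in\phi(x)+U$ does not follow, and this is precisely the step you flagged as consuming convexity. The standard repair (Aubin--Cellina) is to also require $V_x\subset x+U$ and, for each $x$, to single out one active index $j$ such that every active $x_i$ lies in $V_{x_j}$'s ``controlling'' neighbourhood --- in the metric case by taking $j$ with the largest radius among active indices, in the general compact Hausdorff case by first passing to a star-refinement of the cover. Then all active $y_i\in\phi(x_i)\subset\phi(x_j)+U$, convexity of $\phi(x_j)+U$ gives $f_U(x)\in\phi(x_j)+U$ with $x_j\in x+U$, and hence $(x,f_U(x))\in\Gr(\phi)+(U\times U)$. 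This weaker conclusion is all your final limit argument actually uses, so with that one extra idea inserted the proof closes; as a minor simplification, Brouwer's theorem applied to the restriction of $f_U$ to the finite-dimensional simplex $\mathrm{co}\{y_1,\dots,y_n\}$ suffices in place of Tychonoff.
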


Hence, in order to use Kakutani's fixed point theorem, the set-valued mapping $\Gamma$ should be defined on a convex and compact set. However, the set $\P(\sZ \times \sA)^{T+2}$ in the definition of $\Gamma$ is not compact. To get around that, we will prove that the image of $\P(\sZ \times \sA)^{T+2}$ under $\Gamma$ is in fact a subset of some convex and compact set, and it is sufficient to consider this convex and compact set in the definition of $\Gamma$. To that end, for each $t$, define the set
\begin{align}
\P^t(\sZ) = \biggl\{ \mu \in \P(\sZ): \int_{\sZ} W(z) \mu(dz) \leq \alpha^t M \biggr\}. \nonumber
\end{align}
Since $W$ is a lower semi-continuous moment function, the set $\P^t(\sZ)$ is compact with respect to the weak topology \cite[Proposition E.8, p. 187]{HeLa96}. Let us define
\begin{align}
\P^t(\sZ \times \sA) = \bigl\{ \nu \in \P(\sZ \times \sA): \nu_1 \in \P^t(\sZ) \bigr\}. \nonumber
\end{align}
Since $\sA$ is compact, $\P^t(\sZ \times \sA)$ is tight. Furthermore, $\P^t(\sZ \times \sA)$ is closed with respect to the weak topology since $W$ is lower semi-continuous. Hence, $\P^t(\sZ \times \sA)$ is compact. Let $\Xi = \prod_{t=0}^{T+1} \P^t(\sZ \times \sA)$, which is convex and compact with respect to the product topology.

\begin{proposition}\label{prop2}
We have $\Gamma\bigl(\P(\sZ \times \sA)^{T+2}\bigr) = \bigl\{\bnu' : \bnu' \in \Gamma(\bnu), \text{ } \bnu \in \P(\sZ \times \sA)^{T+2} \bigr\} \subset \Xi$.
\end{proposition}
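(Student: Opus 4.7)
The plan is to prove the inclusion by a direct moment-propagation argument on the marginals $\nu'_{t,1}$, using only the consistency condition $\bnu' \in C(\bnu)$ together with the drift inequality (iv) satisfied by the belief-state transition kernels. Note that the optimality condition encoded by $B(\bnu)$ plays no role here; membership in $\Xi$ is purely a statement about how mass is propagated by $\{\eta_t^{\bnu}\}$, and all the $\eta_t^{\bnu}$ inherit the $\alpha$-drift (iv) uniformly in the measure-flow parameter because (IV) for $p_t$ transfers to the belief MDP (this is exactly the content of condition (iv), already established earlier in the section).

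Concretely, fix an arbitrary $\bnu \in \P(\sZ \times \sA)^{T+2}$ and any $\bnu' = (\nu'_t)_{t=0}^{T+1} \in \Gamma(\bnu)$. I would prove by induction on $t$ that
\begin{align*}
\int_{\sZ} W(z)\, \nu'_{t,1}(dz) \leq \alpha^t M,
\end{align*}
which is exactly the statement $\nu'_{t,1} \in \P^t(\sZ)$, and hence $\nu'_t \in \P^t(\sZ \times \sA)$ by definition. For the base case $t=0$, the definition of $C(\bnu)$ gives $\nu'_{0,1} = \delta_{\mu_0}$, so
\begin{align*}
\int_\sZ W(z)\, \delta_{\mu_0}(dz) = W(\mu_0) = \int_\sX w(x)\, \mu_0(dx) = M = \alpha^0 M,
\end{align*}
using property (v) of the belief-state MDP.

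For the inductive step, assume the bound holds at time $t$. The recursion in the definition of $C(\bnu)$ (with the natural reading that the marginal of $\nu'_{t+1}$ is obtained from $\nu'_t$ via $\eta_t^{\bnu}$) gives
\begin{align*}
\int_\sZ W(z')\, \nu'_{t+1,1}(dz')
&= \int_{\sZ \times \sA} \left[ \int_\sZ W(z')\, \eta_t^{\bnu}(dz'|z,a) \right] \nu'_t(dz,da) \\
&\leq \alpha \int_{\sZ \times \sA} W(z)\, \nu'_t(dz,da) \\
&= \alpha \int_\sZ W(z)\, \nu'_{t,1}(dz) \leq \alpha^{t+1} M,
\end{align*}
where the middle inequality is property (iv) applied pointwise in $(z,a)$, and the last step uses the inductive hypothesis. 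This closes the induction and yields $\bnu' \in \Xi$.

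There is no real obstacle here: the statement is essentially a Lyapunov-type moment propagation, and the only things that need to be in place are (v) (to start the induction at $\mu_0$) and the uniform-in-$a$ drift (iv) for the belief-state kernels $\eta_t^{\bnu}$ (to propagate it). The one minor point worth flagging in the write-up is that the drift constant $\alpha$ in (iv) does not depend on the parameter measure flow $\bmu^{\bnu}$ entering $\eta_t^{\bnu}$; this is crucial so that the bound $\alpha^t M$ is uniform in $\bnu$, which in turn is what makes $\Xi$ a single convex compact set absorbing the image of $\Gamma$ and thus usable in Kakutani's theorem later.
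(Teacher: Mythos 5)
Your proof is correct and is essentially identical to the paper's: both observe that $B(\bnu)$ is irrelevant and reduce to showing $C(\bnu)\subset\Xi$, then run the same induction on the $W$-moment of $\nu'_{t,1}$, using property (v) for the base case $\nu'_{0,1}=\delta_{\mu_0}$ and the drift condition (iv) for the inductive step. Your ``natural reading'' of the recursion in $C(\bnu)$ as propagating from $\nu'_t$ is exactly the reading the paper's own induction relies on, so there is no substantive difference between the two arguments.
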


\begin{proof}
Fix any $\bnu \in \P(\sZ \times \sA)^{T+2}$. It is sufficient to prove that $C(\bnu) \subset \Xi$ as $\Gamma(\bnu) = C(\bnu) \cap B(\bnu)$. Let $\bnu' \in C(\bnu)$. We prove by induction that $\nu'_{t,1} \in \P^t_v(\sZ)$ for all $t$. The claim trivially holds for $t=0$ as $\nu'_{0,1} = \delta_{\mu_0}$. Assume that the claim holds for $t$ and consider $t+1$. We have
\begin{align}
\int_{\sZ} W(y) \nu'_{t+1,1}(dy) &= \int_{\sZ \times \sA} \int_{\sZ} W(y) \eta_{t}^{\bnu}(dy|z,a) \nu_{t}(dz,da) \nonumber \\
&\leq \int_{\sZ} \alpha W(z) \nu_{t,1}(dz) \nonumber \text{ }(\text{by (iv)}) \\
&\leq \alpha^{t+1} M \text{ }(\text{as $\nu_{t,1} \in \P^t_v(\sZ)$}). \nonumber
\end{align}
Hence, $\nu'_{t+1,1} \in \P^{t+1}_v(\sZ)$.\qed
\end{proof}

By Proposition~\ref{prop2}, we can now consider $\Gamma$ as a multi-valued mapping from $\Xi$ into itself. It can be proved that $C(\bnu) \cap B(\bnu) \neq \emptyset$ for any $\bnu \in \Xi$. Indeed, for any $t\geq0$, we define
\begin{align}
\mu_{t+1}(\,\cdot\,) = \int_{\sZ \times \sA} \eta_t^{\bnu}(\,\cdot\,|z,a) \, \nu_t(dx,da).\nonumber 
\end{align}
Moreover, for any $t\geq0$, let $f_t: \sZ \rightarrow \sA$ be the minimizer of the following optimality equation:
\begin{align}
&C_t^{\bnu}(z,f_t(z)) + \int_{\sZ} J_{*,t+1}^{\bnu}(y) \eta_t^{\bnu}(dy|z,f_t(z)) = T_t^{\bnu} J^{\bnu}_{*,t+1}(z). \nonumber
\end{align}
Existence of such an $f_t$ follows from the Measurable Selection Theorem \cite[Section D]{HeLa96} 
%{\color{red} 
since $C_t^{\bnu}$ is continuous in $a$, $\eta_t^{\bnu}$ is weakly continuous in $a$, and $\sA$ is compact. 
%}
If we define $\nu'_t(dz,da) = \mu_t(dz) \, \delta_{f_t(z)}(da)$, then it is straightforward to prove that $\bnu' \in C(\bnu) \cap B(\bnu)$, and thus $C(\bnu) \cap B(\bnu) \neq \emptyset$. 
Moreover, both $C(\bnu)$ and $B(\bnu)$ are convex, and so, their intersection is also convex. $\Xi$ is a convex compact subset of a locally convex topological space $\M(\sZ \times \sA)^{T+2}$, where $\M(\sZ \times \sA)$ denotes the set of all finite signed measures on $\sZ \times \sA$. Hence, in order to deduce the existence of a fixed point of $\Gamma$, we only need to prove that it has a closed graph. Before stating this result, we state the following proposition which
is a key element of the proof.

\begin{proposition}{(\cite[Proposition 4.3]{SaBaRa18})}\label{prop:belief_conv}
Let $\bnu^{(n)} \rightarrow \bnu$ in product topology. Then, for all $t$, $\eta_t^{\bnu^{(n)}}(\,\cdot\,|z_n,a_n)$ weakly converges to $\eta_t^{\bnu}(\,\cdot\,|z,a)$ for all $(z_n,a_n) \rightarrow (z,a) \in \sZ \times \sA$.
\end{proposition}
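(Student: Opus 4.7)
The goal is weak continuity of the belief-state kernel $\eta_t^{\bnu}(\,\cdot\,|z,a)$ jointly in $(\bnu,z,a)$. The proof proceeds in three stages: reduce the dependence on $\bnu$ to the induced mean-field flow, establish joint weak continuity of the one-step joint state-observation kernel $R_t^{\bnu}$, and then lift this to weak continuity of the posterior-law kernel $\eta_t^{\bnu}$, which is the main technical obstacle.

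\textbf{Stage 1 (reduction to mean-field continuity).} The kernel $\eta_t^{\bnu}$ depends on $\bnu$ only through $\mu_t^{\bnu}=\sB(\nu_{t,1})$. The averaging operator $\sB:\P(\sZ)\to\P(\sX)$ is weak-weak continuous since for $f\in C_b(\sX)$ the evaluation $z\mapsto z(f)$ lies in $C_b(\sZ)$ by definition of the weak topology on $\sZ=\P_w(\sX)$. Thus $\bnu^{(n)}\to\bnu$ in the product topology forces $\mu_t^{\bnu^{(n)}}\to\mu_t^{\bnu}$ weakly, and it suffices to prove joint continuity of $\eta_t$ in $(\mu_t,z,a)$.

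\textbf{Stage 2 (joint continuity of $R_t$).} For $f\in C_b(\sX\times\sY)$, set $g(x):=\int f(x,y)\,r(dy|x)$. Total-variation continuity of $r$ in $x$ together with continuity and boundedness of $f$ makes $g$ bounded continuous via
\begin{align*}
|g(x_n)-g(x)|\le \|f\|_{\infty}\,\|r(\,\cdot\,|x_n)-r(\,\cdot\,|x)\|_{\mathrm{TV}}+\int |f(x_n,y)-f(x,y)|\,r(dy|x).
\end{align*}
By weak continuity of $p_t$ (property (II)), the lifted function $\psi(x',a,\mu):=\int g(x)\,p_t(dx|x',a,\mu)$ is jointly bounded continuous on $\sX\times\sA\times\P(\sX)$. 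Writing $R_t^{\bnu}(f|z,a)=\int \psi(x',a,\mu_t^{\bnu})\,z(dx')$ and invoking the standard lemma on continuous convergence against weakly convergent probability measures (the integrands $\psi(\,\cdot\,,a_n,\mu_t^{(n)})$ converge continuously to $\psi(\,\cdot\,,a,\mu_t)$ and are uniformly bounded, while $z_n\to z$ weakly), one concludes $R_t^{\bnu^{(n)}}(\,\cdot\,|z_n,a_n)\to R_t^{\bnu}(\,\cdot\,|z,a)$ weakly on $\sX\times\sY$. The $\sY$-marginal $H_t^{\bnu^{(n)}}(\,\cdot\,|z_n,a_n)$ converges weakly as well.

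\textbf{Stage 3 (filter continuity, the main obstacle).} It remains to push weak convergence of $R_t^{\bnu^{(n)}}$ through the disintegration to obtain $\eta_t^{\bnu^{(n)}}(\,\cdot\,|z_n,a_n)\to\eta_t^{\bnu}(\,\cdot\,|z,a)$ weakly on $\sZ$. This step is delicate because the posterior map $y\mapsto F_t^{\bnu}(z,a,y)$ is defined only $H_t^{\bnu}$-almost everywhere and is generically discontinuous in $y$, so the continuous mapping theorem is inapplicable. My plan is to test against the separating family $\phi(\mu)=\Phi(\mu(g_1),\ldots,\mu(g_k))$ with $g_i\in C_b(\sX)$ and $\Phi\in C_b(\R^k)$, and to represent
\begin{align*}
\int_{\sZ}\phi(z')\,\eta_t^{\bnu}(dz'|z,a)=\int_{\sY}\Phi\bigl(F_t^{\bnu}(z,a,y)(g_1),\ldots,F_t^{\bnu}(z,a,y)(g_k)\bigr)\,H_t^{\bnu}(dy|z,a)
\end{align*}
as an integral of $\Phi$ against the $k$-fold conditional product of $R_t^{\bnu}(\,\cdot\,|z,a)$ given the common observation. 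The crucial ingredient is that total-variation continuity of $r$ forces these $k$-fold conditional-product measures to vary continuously under the joint weak convergence obtained in Stage~2; this is the non-homogeneous analog of the filter-continuity argument of \cite[Section~4]{SaBaRa18}, in the spirit of Feinberg--Kasyanov--Zgurovsky. Combined with Stage~2, this yields the claimed weak convergence.
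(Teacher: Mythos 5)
The paper does not actually prove this proposition; it imports it verbatim from \cite[Proposition 4.3]{SaBaRa18}, whose proof is a Feinberg--Kasyanov--Zgurovsky-type filter-continuity argument adapted to kernels parametrized by the mean-field flow. Your Stages 1 and 2 are correct and coincide with the first half of that argument: the reduction of the $\bnu$-dependence to weak convergence of $\mu_t^{\bnu}=\sB(\nu_{t,1})$, and the joint weak continuity of $R_t$ obtained by composing the total-variation-continuous kernel $r$ with the weakly continuous kernel $p_t$ and integrating the resulting continuously convergent integrands against $z_n\rightarrow z$.

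The problem is Stage 3, which you rightly identify as the crux but do not carry out. Two concrete gaps. First, the representation of $\int\Phi\bigl(z'(g_1),\ldots,z'(g_k)\bigr)\,\eta_t^{\bnu}(dz'|z,a)$ as an integral against the $k$-fold conditional product of $R_t^{\bnu}(\,\cdot\,|z,a)$ given $y$ is exact only for monomial $\Phi$; for general $\Phi\in C_b(\R^k)$ you need Stone--Weierstrass on the compact box $\prod_i[-\|g_i\|,\|g_i\|]$, and since the algebra $\{\Phi(\mu(g_1),\ldots,\mu(g_k))\}$ is not dense in $C_b(\sZ)$, you must also supply tightness of $\{\eta_t^{\bnu^{(n)}}(\,\cdot\,|z_n,a_n)\}_n$ to upgrade convergence on this algebra to weak convergence --- this is available from the moment bound (iv) and the fact that $W$ has compact sublevel sets, but you never invoke it. Second, the sentence ``total-variation continuity of $r$ forces these $k$-fold conditional-product measures to vary continuously'' is precisely the lemma in which all the difficulty resides: the actual mechanism is that total-variation continuity of $x\mapsto r(\,\cdot\,|x)$ upgrades the weak convergence of Stage 2 to total-variation convergence, on $\sY$, of the vector measures $B\mapsto\int_{\sX\times B}g\,dR_t^{\bnu^{(n)}}(\,\cdot\,|z_n,a_n)$ for each $g\in C_b(\sX)$ (which itself requires an argument that weakly convergent measures integrated against a norm-continuous kernel converge in norm), and it is this total-variation convergence in the observation coordinate that neutralizes the discontinuity and mere $H_t$-a.e.\ definedness of $y\mapsto F_t(z,a,y)$. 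As written, Stage 3 is a pointer to the literature rather than a proof; that is defensible given that the paper itself only cites \cite{SaBaRa18}, but your argument is not self-contained at exactly the step where the work lies.
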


Using Proposition~\ref{prop:belief_conv}, we can now prove the following result.

\begin{proposition}\label{prop3} The graph of $\Gamma$, i.e., the set
	$$
	\Gr(\Gamma) := \left\{ (\bnu,\bxi) \in \Xi \times \Xi : \bxi \in \Gamma(\bnu)\right\},
	$$
is closed.
\end{proposition}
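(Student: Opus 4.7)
The strategy is to take sequences $\bnu^{(n)} \to \bnu$ and $\bxi^{(n)} \to \bxi$ in $\Xi$ with $\bxi^{(n)} \in \Gamma(\bnu^{(n)}) = C(\bnu^{(n)}) \cap B(\bnu^{(n)})$, and verify separately that $\bxi \in C(\bnu)$ and $\bxi \in B(\bnu)$.

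For $\bxi \in C(\bnu)$: the initial condition $\xi_{0,1}^{(n)} = \delta_{\mu_0}$ passes to the limit trivially. For the recursion, fix $g \in C_b(\sZ)$ and write
$$\int g\, d\xi^{(n)}_{t+1,1} = \int h^{(n)}_t(z,a)\, \nu^{(n)}_t(dz,da), \qquad h^{(n)}_t(z,a) := \int g(z')\, \eta_t^{\bnu^{(n)}}(dz'|z,a).$$
By \cref{prop:belief_conv}, $h^{(n)}_t(z_n,a_n) \to h_t(z,a)$ whenever $(z_n,a_n) \to (z,a)$; since $\{h^{(n)}_t\}$ are uniformly bounded by $\|g\|_\infty$, a Skorokhod representation argument applied to $\nu^{(n)}_t \to \nu_t$ yields $\int h^{(n)}_t d\nu^{(n)}_t \to \int h_t d\nu_t$. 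The left-hand side above converges to $\int g\, d\xi_{t+1,1}$ by hypothesis, so the required identity $\xi_{t+1,1}(\,\cdot\,) = \int \eta_t^{\bnu}(\,\cdot\,|z,a)\, \nu_t(dz,da)$ follows.

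For $\bxi \in B(\bnu)$, introduce the non-negative \emph{optimality defect}
$$\Delta_t^{\bnu}(z,a) := C_t^{\bnu}(z,a) + \int_\sZ J_{*,t+1}^{\bnu}(y)\, \eta_t^{\bnu}(dy|z,a) - T_t^{\bnu} J^{\bnu}_{*,t+1}(z) \ge 0,$$
so that $B(\bnu) = \{\bnu': \int \Delta_t^{\bnu}\, d\nu'_t = 0 \text{ for all } t\}$. The key claim, proved by backward induction on $t$, is that $J_{*,t}^{\bnu^{(n)}}(z_n) \to J_{*,t}^{\bnu}(z)$ whenever $z_n \to z$ (continuous convergence), with each $J_{*,t}^{\bnu}$ continuous in $z$. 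The base case $t = T+1$ reduces to joint continuity of $(z,a,\bnu) \mapsto C_{T+1}^{\bnu}(z,a)$, which follows from (I), the fact that $\sB$ is weakly continuous (since $z \mapsto z(g)$ is continuous on $\sZ$ for every $g \in C_b(\sX)$), and the same bounded-Skorokhod lemma applied to the integral $\int c_{T+1}(x,a,\sB(\nu_{t,1}))z(dx)$; a min over the compact set $\sA$ then preserves continuity and continuous convergence. The inductive step combines the induction hypothesis with \cref{prop:belief_conv} and the uniform bound $\|J_{*,t+1}^{\bnu}\|_\infty \le \sum_{k=t+1}^{T+1} \|c_k\|_\infty$ to pass to the limit in $\int J_{*,t+1}^{\bnu^{(n)}}(y)\, \eta_t^{\bnu^{(n)}}(dy|z_n,a_n)$, adds the continuously convergent term $C_t^{\bnu^{(n)}}(z_n,a_n)$, and again minimizes over $\sA$.

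With continuous convergence of the value functions in hand, $\Delta_t^{\bnu^{(n)}} \to \Delta_t^{\bnu}$ continuously and $\Delta_t^{\bnu}$ is non-negative continuous. A final Skorokhod/Portmanteau-type argument yields
$$0 \le \int \Delta_t^{\bnu}\, d\xi_t \le \liminf_{n \to \infty} \int \Delta_t^{\bnu^{(n)}}\, d\xi_t^{(n)} = 0,$$
so $\xi_t$ concentrates on $\{\Delta_t^{\bnu} = 0\}$; hence $\bxi \in B(\bnu)$, giving $\bxi \in \Gamma(\bnu)$ and closedness of $\Gr(\Gamma)$. The principal obstacle is the backward induction on continuous convergence of the value functions: every step must pass to the limit in a composition of a varying kernel $\eta_t^{\bnu^{(n)}}(\,\cdot\,|z_n,a_n)$ (controlled only through \cref{prop:belief_conv}) and a varying integrand $J_{*,t+1}^{\bnu^{(n)}}$, and then through a minimization over $\sA$. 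Carefully separating joint continuity from continuous convergence, and exploiting the uniform boundedness of the $c_t$ via (I), is what makes the inductive passages go through.
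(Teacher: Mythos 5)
Your proof is correct, and its skeleton matches the paper's: split $\Gamma(\bnu)=C(\bnu)\cap B(\bnu)$, handle the consistency part by combining \cref{prop:belief_conv} with the fact that weak convergence of measures together with continuous convergence of uniformly bounded integrands yields convergence of the integrals (the paper invokes Langen's theorem where you use a Skorokhod representation, which is the standard proof of the same fact), and reduce the optimality part to continuous convergence of $F_t^{(n)}$ and of the value functions $J_{*,t}^{\bnu^{(n)}}$. Where you genuinely diverge is the closing step of the $B(\bnu)$ argument. The paper works set-theoretically with $A_t$ and $A_t^{(n)}$, introduces auxiliary closed sets $B_t^M=\{F_t\geq J_{*,t}^{\bnu}+\epsilon(M)\}$ with $\epsilon(M)$ chosen so that $\xi_t(\partial B_t^M)=0$, and extracts $\xi_t(A_t)=1$ through a limsup/liminf interchange plus the Portmanteau theorem. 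You instead observe that $B(\bnu)$ is characterized by the vanishing of the integral of the non-negative optimality defect $\Delta_t^{\bnu}$, note that $\int\Delta_t^{\bnu^{(n)}}\,d\xi_t^{(n)}=0$ for every $n$, and pass to the limit using the same continuous-convergence lemma to get $\int\Delta_t^{\bnu}\,d\xi_t=0$, hence $\xi_t(A_t)=1$. This is cleaner: it eliminates the boundary-measure-zero selection of the $\epsilon(M)$ and the double limit, and costs nothing since the convergence lemma is already needed for the $C(\bnu)$ part. The other difference is one of self-containedness: the paper outsources the continuous convergence of $F_t^{(n)}$ and $J_{*,t}^{\bnu^{(n)}}$ to \cite{SaBaRa17} and \cite{SaBaRa18-r}, whereas you sketch the backward induction yourself; the sketch is sound (the base case is in fact trivial here since $c_{T+1}$ depends on neither the action nor the measure argument, and minimization over the compact set $\sA$ preserves continuous convergence).
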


\begin{proof}
The graph $\Gr(\Gamma)$ of $\Gamma$ is closed if and only if when $(\bnu^{(n)},\bxi^{(n)}) \rightarrow (\bnu,\bxi)$ as $n\rightarrow\infty$ for some $\bigl\{(\bnu^{(n)},\bxi^{(n)})\bigr\} \subset \Xi$, then we must have $\bxi \in \Gamma(\bnu)$. To that end, let $\bigl\{(\bnu^{(n)},\bxi^{(n)})\bigr\} \subset \Gr(\Gamma)$ be such that $(\bnu^{(n)},\bxi^{(n)}) \rightarrow (\bnu,\bxi)$ as $n\rightarrow\infty$ for some $(\bnu,\bxi) \in \Xi \times \Xi$. We prove that $\bxi \in \Gamma(\bnu)$.

Using Proposition~\ref{prop:belief_conv}, we first prove that $\bxi \in C(\bnu)$; that is, for all $t$, we have
\begin{align}
\xi_{t+1,1}(\,\cdot\,) = \int_{\sZ \times \sA} \eta_t^{\bnu}(\,\cdot\,|z,a) \nu_t(dz,da). \nonumber
\end{align}
For all $n$ and $t$, we have
\begin{align}
\xi^{(n)}_{t+1,1}(\,\cdot\,) = \int_{\sZ \times \sA} \eta_t^{\bnu^{(n)}}(\,\cdot\,|z,a) \nu^{(n)}_t(dz,da). \label{eq6}
\end{align}
Since $\bxi^{(n)} \rightarrow \bxi$ in $\Xi$, $\xi^{(n+1)}_{t+1} \rightarrow \xi_{t+1}$ weakly. Let $g \in C_b(\sZ)$. Then, by \cite[Theorem 3.5]{Lan81}, we have
\begin{align}
&\lim_{n\rightarrow\infty} \int_{\sZ \times \sA} \int_{\sZ} g(z') \eta_t^{\bnu^{(n)}}(dz'|z,a) \nu^{(n)}_t(dz,da) =\int_{\sZ \times \sA} \int_{\sZ} g(z') \eta_t^{\bnu}(dz'|z,a)  \nu_t(dx,da) \nonumber
\end{align}
since $\bnu^{(n)}_t \rightarrow \bnu_t$ weakly and $\int_{\sZ} g(y) \eta_t^{\bnu^{(n)}}(\,\cdot\,|z,a)$ converges to $\int_{\sZ} g(y) \eta_t^{\bnu}(\,\cdot\,|z,a)$ continuously\footnote{Suppose $g$, $g_n$ ($n\geq1$) are measurable functions on metric space $\sE$. The sequence $g_n$ is said to converge to $g$ continuously if $\lim_{n\rightarrow\infty}g_n(e_n)=g(e)$ for any $e_n\rightarrow e$ where $e \in \sE$.} (see \cite[Theorem 3.5]{Lan81}). This implies that the measure on the right hand side of \eqref{eq6} converges weakly to $\int_{\sZ \times \sA} \eta_t^{\bnu}(\,\cdot\,|z,a) \nu_t(dz,da)$. Therefore, we have
\begin{align}
\xi_{t+1,1}(\,\cdot\,) = \int_{\sZ \times \sA} \eta_t^{\bnu}(\,\cdot\,|z,a) \nu_t(dz,da), \nonumber
\end{align}
from which we conclude that $\bxi \in C(\bnu)$.

To complete the proof, it suffices to prove that $\bxi \in B(\bnu)$. To that end, for each $n$ and $t$, let us define the following functions
\begin{align}
F^{(n)}_t(z,a) &= C_t^{\bnu^{(n)}}(z,a) +  \int_{\sZ} J^{\bnu^{(n)}}_{*,t+1}(y) \eta_t^{\bnu^{(n)}}(dy|z,a) \nonumber \\
\intertext{and}
F_t(z,a) &= C_t^{\bnu}(z,a) + \int_{\sZ} J^{\bnu}_{*,t+1}(y) \eta_t^{\bnu}(dy|z,a). \nonumber
\end{align}
By definition,
$
J^{\bnu^{(n)}}_{*,t}(z) = \min_{a \in \sA} F^{(n)}_t(z,a) \text{ } \text{ and } \text{ } J^{\bnu}_{*,t}(z) = \min_{a \in \sA} F_t(z,a). \nonumber
$
Define also the following sets
\begin{align}
A_t^{(n)} = \bigl\{ (z,a): F^{(n)}_t(z,a) = J^{\bnu^{(n)}}_{*,t}(z) \bigr\} 
\text{ } \text{and} \text{ }
A_t = \bigl\{ (z,a): F_t(z,a) = J^{\bnu}_{*,t}(z) \bigr\}. \nonumber
\end{align}
Since $\bxi^{(n)} \in B(\bnu^{(n)})$, we have
$
1 = \xi^{(n)}_t\bigl( A_t^{(n)} \bigr), \text{ } \text{for all $n$ and $t$}. \nonumber
$
To prove to $\bxi \in B(\bnu)$, we need to show that
$
1 = \xi_t\bigl( A_t \bigr), \text{ } \text{for all $t$}. \nonumber
$

First note that since both $F^{(n)}_t$ and $J^{\bnu^{(n)}}_{*,t}$ are continuous, $A_t^{(n)}$ is closed. Moreover, $A_t$ is also closed as both $F_t$ and $J^{\bnu}_{*,t}$ are continuous. Using Proposition~\ref{prop:belief_conv}, one can also prove as in \cite[Proposition 3.10]{SaBaRa17}, \cite[Proposition 4.4]{SaBaRa18-r} that $F_t^{(n)}$ converges to $F_t$ continuously and $J^{\bnu^{(n)}}_{*,t}$ converges to $J^{\bnu}_{*,t}$ continuously, as $n\rightarrow\infty$.

For each $M\geq1$, define the closed set $B_t^M = \bigl\{ (z,a): F_t(z,a) \geq J^{\bnu}_{*,t}(z) + \epsilon(M) \bigr\}$, where the sequence $\{\epsilon(M)\}$ is decreasing and $\epsilon(M) \rightarrow 0$ as $M\rightarrow \infty$. Since both $F_t$ and $J^{\bnu}_{*,t}$ are continuous, we can choose $\{\epsilon(M)\}_{M\geq1}$ so that $\xi_t(\partial B_t^M) = 0$ for each $M$. Note that by the monotone convergence theorem, we have
\begin{align}
\xi^{(n)}_t\big(A_t^c \cap A_t^{(n)}\big) = \liminf_{M\to\infty} \xi^{(n)}_t\big(B^M_t \cap A_t^{(n)}). \nonumber
%\int_{\{1_{A_t} = 0\}} &1_{A_t^{(n)}}(x,a) \xi^{(n)}_t(dx,da) \nonumber \\
%&\phantom{xxxxxx}= \liminf_{M\rightarrow\infty} \int_{B_t^M} 1_{A_t^{(n)}} (x,a) \xi^{(n)}_t(dx,da). \nonumber
\end{align}
This implies that
\begin{align}
1 &= \limsup_{n\rightarrow\infty} \liminf_{M\rightarrow\infty} \biggl\{ \xi^{(n)}_t\big(A_t \cap A^{(n)}_t\big) + \xi^{(n)}_t\big(B^M_t \cap A_t^{(n)}\big)\biggr\} \nonumber\\
%\int_{A_t} 1_{A_t^{(n)}}(x,a) \xi^{(n)}_t(dx,da) \nonumber \\
%&\phantom{xxxxxxxxx}+\int_{B_t^M} 1_{A_t^{(n)}} (x,a) \xi^{(n)}_t(dx,da) \biggr\} \nonumber \\
&\leq \liminf_{M\rightarrow\infty} \limsup_{n\rightarrow\infty}  \biggl\{\xi^{(n)}_t\big(A_t \cap A^{(n)}_t\big) + \xi^{(n)}_t\big(B^M_t \cap A_t^{(n)}\big)\biggr\}. \nonumber
% \biggl\{ \int_{A_t} 1_{A_t^{(n)}}(x,a) \xi^{(n)}_t(dx,da) \nonumber \\
%&\phantom{xxxxxxxxx}+\int_{B_t^M} 1_{A_t^{(n)}} (x,a) \xi^{(n)}_t(dx,da) \biggr\} \nonumber
\end{align}
\noindent For any fixed $M$, we prove that the limit of the second term in the last expression converges to zero.
To that end, we first note that $\xi^{(n)}_t$ converges weakly to $\xi_t$ as $n\rightarrow\infty$ when both measures are restricted to $B_t^M$, as $B_t^M$ is closed and $\xi_t(\partial B_t^M)=0$ \cite[Theorem 8.2.3]{Bog07}. Furthermore, since $F_t^{(n)}$ converges to $F_t$ continuously and $J^{\bnu^{(n)}}_{*,t}$ converges to $J^{\bnu}_{*,t}$ continuously, $1_{A^{(n)}_t \cap B^M_t}$ converges continuously to $0$, which implies by  \cite[Theorem 3.5]{Lan81} that
%: if $(z^{(n)},a^{(n)}) \rightarrow (x,a)$ in $B_t^M$, then
%\begin{align}
%\lim_{n\rightarrow\infty} F_t^{(n)}(z^{(n)},a^{(n)}) &= F_t(z,a) \nonumber \\
%&\geq J_{*,t}^{\bnu}(z) + \epsilon(M) \nonumber \\
%&= \lim_{n\rightarrow\infty} J^{\bnu^{(n)}}_{*,t}(z^{(n)}) + \epsilon(M). \nonumber
%\end{align}
%Hence, for large enough $n$'s, we have $F_t^{(n)}(z^{(n)},a^{(n)}) > J^{(n)}_{*,t}(z^{(n)})$ which implies that $(z^{(n)},a^{(n)}) \not\in A^{(n)}_t$.
\begin{align*}
\limsup_{n\rightarrow\infty} \xi^{(n)}_t\big(B^M_t \cap A^{(n)}_t\big) = 0.
%\int_{B_t^M} 1_{A_t^{(n)}} (x,a) \xi^{(n)}_t(dx,da) = 0. \nonumber
\end{align*}
Therefore, we obtain
\begin{align*}
1 \leq  \limsup_{n\rightarrow\infty} \xi^{(n)}_t\big(A_t \cap A_t^{(n)}\big) \le \limsup_{n\rightarrow\infty} \xi_t^{(n)}(A_t) \leq \xi_t(A_t), \nonumber
\end{align*}
where the last inequality follows from the Portmanteau theorem \cite[Theorem 2.1]{Bil99} and the fact that $A_t$ is closed. Hence, $\xi_t(A_t)=1$. Since $t$ is arbitrary, this is true for all $t$. This means that $\bxi \in B(\bnu)$. Therefore, $\bxi \in \Gamma(\bnu)$.\qed
\end{proof}

As a result of Proposition~\ref{prop3}, we now conclude via Kakutani's fixed point theorem (\cite[Corollary 17.55]{AlBo06}) that $\Gamma$ has a fixed point. Therefore, the pair $(\pi^{*},\bmu^*)$ in Proposition~\ref{prop1} is a mean field equilibrium. This completes the proof of Theorem~\ref{thm:MFE}.

\section{Approximation of Nash Equilibria}\label{sec4}

We are now ready to prove that the policy in the mean-field equilibrium, when applied by every agent, is approximately Nash equilibrium for mean-field games with a sufficiently large number of agents. Let $(\pi^{'*},\bmu^*)$ denote the pair in the mean-field equilibrium. In order to prove the existence of an approximate Nash equilibrium, we need Assumption~\ref{as2} below in addition to Assumption~\ref{as1}.

Our approach can be summarized as follows: (i) First, Assumption~\ref{as2} enables us to define another mean-field equilibrium, in which the policy deterministically and continuously depends on only the observations; (ii) we then construct an equivalent game model whose states are the states of the game model in Section~\ref{sub1sec2} plus the current and past observations; (iii) in this equivalent model, the new mean-field equilibrium policy becomes Markov; (iv) using this Markov structure, we prove that the cost function of a generic agent under any policy in the finite-agent regime, where the rest of the agents adopt mean-field equilibrium policy, converges to the cost function in the infinite-population limit as the number of agents goes to infinity; (v) since the mean-field equilibrium policy is optimal in the infinite-population limit, we establish the existence of an approximate Nash equilibrium via the result in step (iv).

Let $d_{BL}$ denote the bounded Lipschitz metric on $\P(\sS)$, which metrizes the weak topology \cite[Proposition 11.3.2]{Dud89}.

\begin{assumption}\label{as2}
\begin{itemize}
\item [ ]
\item [(a)] $\omega_q(r) \rightarrow 0$ and $\omega_m(r) \rightarrow 0$ as $r\rightarrow0$, where
\begin{align}
\omega_{q}(r) &= \hspace{-10pt} \sup_{(s,u) \in \sS\times\sA} \sup_{\substack{\mu,\nu: \\ d_{BL}(\mu,\nu)\leq r}} \hspace{-10pt} \|q(\,\cdot\,|s,u,\mu) - q(\,\cdot\,|s,u,\nu)\|_{TV} \nonumber \\
\omega_{m}(r) &= \sup_{(s,u) \in \sS\times\sA} \sup_{\substack{\mu,\nu: \\ d_{BL}(\mu,\nu)\leq r}} |m(s,u,\mu) - m(s,u,\nu)|. \nonumber
\end{align}
\item [(b)] For each $t\geq0$, $\pi_t^{'*}: \sG_t \rightarrow \P(\sA)$ is deterministic; that is, $\pi_t^{'*}(\,\cdot\,|g(t)) = \delta_{f_t(g(t))}(\,\cdot\,)$ for some measurable function $f_t:\sG_t\rightarrow \sA$, and weakly continuous.
\end{itemize}
\end{assumption}

In Appendix~\ref{continuity}, we give sufficient conditions for Assumption~\ref{as2}-(b) in terms of the system components.

We now construct another mean-field equilibrium in which the policy deterministically depends on only the observations. For $t$, let $\sY^{t+1} = \prod_{k=0}^t \sY$. Then, for each $t\geq1$, define $\tilde{f}_t:\sY^{t+1}\rightarrow\sA$ as
\begin{align}
&\tilde{f}_t(y(t),\ldots,y(0)) = f_t\bigl(y(t),\ldots,y(0),\tilde{f}_{t-1}(y(t-1),\ldots,y(0)),\ldots,\tilde{f}_0(y(0))\bigr), \nonumber
\end{align}
where $\tilde{f}_0 = f_0$. Let $\pi_t^*(\,\cdot\,|y(t),\ldots,y(0)) = \delta_{\tilde{f}_t(y(t),\ldots,y(0))}(\,\cdot\,)$. Note that $\pi_t^*$ is a weakly continuous stochastic kernel on $\sA$ given $\sY^{t+1}$ under Assumption~\ref{as2}-(b). Moreover, $\pi^*$ and $\pi^{'*}$ are equivalent because, for all $t$, we have
\begin{align}
P^{\pi^{'*}}\bigl(a(t) \in \,\cdot\,|g(t)\bigr) &= P^{\pi^{'*}}\bigl(a(t) \in \,\cdot\,|y(t),\ldots,y(0)\bigr) \nonumber \\
&= P^{\pi^*}\bigl(a(t) \in \,\cdot\,|y(t),\ldots,y(0)\bigr). \nonumber
\end{align}
Hence, $(\pi^*,\bmu^*)$ is also a mean-field equilibrium. In the sequel, we use $(\pi^*,\bmu^*)$ to prove the approximation result. The reason for passing from $f_t$ to $\tilde{f}_t$ is that the latter policy becomes Markov in the equivalent game model that will be introduced in the proof of Theorem~\ref{appr-thm}. Then, we can prove the existence of an approximate Nash equilibrium by adapting the proof techniques and results in \cite{SaBaRa17,SaBaRa18-r} to the game models with expanding state spaces and non-homogeneous system components.

The following theorem is the main result of this section, which states that the policy ${\boldsymbol \pi}^{(N,*)} = (\pi^*,\ldots,\pi^*)$, where $\pi^*$ is repeated $N$ times, is an $\varepsilon$-Nash equilibrium for sufficiently large $N$. Its proof appears in the next section.

\begin{theorem}\label{appr-thm}
For any $\varepsilon>0$, there exists $N(\varepsilon)$ such that for $N\geq N(\varepsilon)$, the policy ${\boldsymbol \pi}^{(N,*)}$ is an $\varepsilon$-Nash equilibrium for the game with $N$ agents that is introduced in Section~\ref{sub1sec2}. Since the original $N$-agent game model is equivalent to the one in Section~\ref{sub1sec2} by Proposition~\ref{app-prop1}, the policy ${\boldsymbol \pi}^{(N,*)}$ is also an $\varepsilon$-Nash equilibrium for the original game with $N$ agents.
\end{theorem}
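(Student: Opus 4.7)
The plan is to derive the $\varepsilon$-Nash property by combining a propagation-of-chaos estimate in the finite-$N$ system with the optimality of $\pi^*$ in the mean-field limit. I proceed in four steps.

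\emph{Step 1 (Markov embedding).} For each $t$, augment the state of Agent~$i$ by appending its observation history: $\hat{x}_i(t) := (x_i(t),y_i(0),\ldots,y_i(t)) \in \hat{\sX}_t := \sX\times\sY^{t+1}$. Under this augmentation $\pi^*$ becomes a \emph{Markov} policy, since $a_i(t) = \tilde{f}_t(y_i(0),\ldots,y_i(t))$ is a deterministic continuous function of the observation coordinates of $\hat{x}_i(t)$. The corresponding augmented transition kernel, built from $p_t$ and $r$, inherits weak continuity from \cref{as1}(b)--(c), and continuity of $\tilde{f}_t$ is supplied by \cref{as2}(b).

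\emph{Step 2 (LLN for the empirical measure).} Suppose Agent~$1$ uses an arbitrary $\pi \in \tilde{\Pi}_1$ while Agents $2,\ldots,N$ all use $\pi^*$. The latter then form an exchangeable, weakly interacting particle system on $\hat{\sX}_t$ coupled only through $e_t^{(N)}$. \cref{as2}(a) provides policy-free moduli $\omega_q,\omega_m$ of continuity in the mean-field argument, and a standard coupling argument yields
\begin{align*}
d_{BL}\bigl(e_t^{(N)},\mu_t^*\bigr) \longrightarrow 0 \quad\text{in probability},\qquad 0\le t\le T+1.
\end{align*}
The single deviating agent perturbs $e_t^{(N)}$ by at most $2/N$ in total variation, so the limit $\mu_t^*$ is unaffected by $\pi$.

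\emph{Step 3 (Uniform cost convergence).} Using boundedness of $c_t$ (property~(I)) and Step~2 together with \cref{as2}(a), I will show
\begin{align*}
\Delta_N := \sup_{\pi\in\tilde{\Pi}_1}\bigl|J_1^{(N)}\bigl({\boldsymbol \pi}^{(N,*)}_{-1},\pi\bigr) - J_{\bmu^*}(\pi)\bigr| \longrightarrow 0.
\end{align*}
The two costs differ only by substitution of the random empirical measure $e_t^{(N)}$ for the deterministic $\mu_t^*$ inside $p_t$ and $c_t$. A forward induction on $t$ controlled by $\omega_q,\omega_m$ bounds the difference in terms of $E[d_{BL}(e_t^{(N)},\mu_t^*)]$, with constants depending on $T$ and the bound $K$ on $m$ but \emph{not} on $\pi$. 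This adapts the argument of \cite{SaBaRa18-r} to the time-inhomogeneous setting with the expanding state space $\hat{\sX}_t$.

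\emph{Step 4 (Approximate Nash).} Fix $\varepsilon>0$ and choose $N(\varepsilon)$ so that $\Delta_N\le\varepsilon/2$ for $N\ge N(\varepsilon)$. Since $\pi^*\in\Psi(\bmu^*)$ is optimal for $\bmu^*$ over all $\pi\in\Pi\supset\tilde{\Pi}_1$, for every $\pi\in\tilde{\Pi}_1$,
\begin{align*}
J_1^{(N)}\bigl({\boldsymbol \pi}^{(N,*)}\bigr) \le J_{\bmu^*}(\pi^*) + \varepsilon/2 \le J_{\bmu^*}(\pi) + \varepsilon/2 \le J_1^{(N)}\bigl({\boldsymbol \pi}^{(N,*)}_{-1},\pi\bigr) + \varepsilon.
\end{align*}
Taking the infimum over $\pi\in\tilde{\Pi}_1$ and appealing to symmetry across agents gives the $\varepsilon$-Nash property for the equivalent game, and \cref{app-prop1} transfers the conclusion to the original risk-sensitive game.

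\emph{Main obstacle.} The critical difficulty is the uniform-in-$\pi$ estimate in Step~3, because $\tilde{\Pi}_1$ contains arbitrary history-dependent randomized policies with no a~priori regularity. The resolution is that all $\pi$-dependence in the cost difference funnels through a policy-independent modulus-of-continuity bound on $(p_t,c_t)$ in the mean-field argument, supplied by \cref{as2}(a). A secondary technical hurdle is that the state space $\hat{\sX}_t$ expands with $t$ and the system components are non-homogeneous, so the tightness and coupling arguments underlying Steps~2 and~3 must be handled uniformly in $t\le T+1$, which is possible by finiteness of the horizon together with the moment bound~(IV).
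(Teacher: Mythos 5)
Your overall architecture coincides with the paper's: the observation-history augmentation making $\pi^*$ Markov (the paper's Section 6 model with state spaces $\sX\times\sY^{t+1}$), the law-of-large-numbers statement for the empirical measure and its insensitivity to a single deviation (the paper's Propositions 6.3 and 6.5), and the final chain of inequalities transferring optimality of $\pi^*$ for $\bmu^*$ to the finite-$N$ game. Where you genuinely diverge is at the crux, your Step 3. You claim a \emph{policy-uniform} estimate $\sup_{\pi\in\tilde{\Pi}_1}\bigl|\hat{J}_1^{(N)}({\boldsymbol\pi}^{(N,*)}_{-1},\pi)-\hat{J}_{\bmu^*}(\pi)\bigr|\to 0$ over \emph{all} observation-feedback policies. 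The paper does not prove this. Instead it (i) reduces the infimum over $\tilde{\Pi}_1$ to the infimum over the weakly continuous subclass $\tilde{\Pi}_1^c$ by citing the proof of Theorem 2.3 of \cite{SaBaRa17}, (ii) proves convergence of the cost gap only along an arbitrary \emph{sequence} $\{\tpi^{(N)}\}\subset\tilde{\Pi}_1^c$ (Theorem 6.7), via the technical Proposition 6.6, whose appendix proof compares the laws ${\cal L}(\tb_1^N(t))$ and ${\cal L}(\hb^N(t))$ against test functions $g_N$ that are equicontinuous in the measure argument and have a uniform modulus of continuity in the accumulated-cost coordinate $m$, and (iii) applies this to a sequence of $\varepsilon/3$-optimal continuous responses. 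If your uniform estimate holds, it is strictly stronger and would let you skip the continuity reduction entirely; that is a real simplification.

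The caveat is that Step 3 is asserted rather than executed, and it is exactly the hard part. Two points need care. First, the discrepancy between the finite-$N$ trajectory of Agent 1 and its mean-field copy is of mixed type: the $\sS$-coordinate discrepancy is controlled in total variation through $\omega_q$, while the $m$-coordinate discrepancy is only metrically small through $\omega_m$ (the kernels $\delta_{m+\beta^t m(s,a,\cdot)}$ are mutually singular for different measure arguments), so a pure total-variation coupling fails; you must couple so that the $(s,a,y)$-paths agree off a small event and then exploit that $C_{T+1}(b)=e^{\lambda m}$ is Lipschitz in $m$ on $[0,L]$ --- this is precisely the role of the condition $\omega_g(r)\to 0$ in the paper's Proposition 6.6. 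Second, the empirical measure $e_t^{(N)}$ is correlated with Agent 1's own state (both through its $1/N$ contribution and through feedback into the other agents' kernels), so the "forward induction controlled by $\omega_q,\omega_m$" must be run jointly with the uniform LLN of Step 2, verifying at each stage that the error bounds ($2\|g\|/\sqrt{N}$ from the conditional-independence/martingale term, $2\|g\|/N$ from the deviator's own contribution, plus the equicontinuity term in the measure argument, which is policy-independent because the policy integrates out) do not depend on $\pi$. All of these ingredients are present in the paper's appendix, just deployed non-uniformly; making them uniform is plausible but is the one step you would actually have to write out in full.
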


%{\color{red}
\begin{remark}
Note that to obtain an explicit relation between $\varepsilon$ and $N(\varepsilon)$, one needs to establish that the optimal policy $\pi^*$ in mean-field equilibrium is Lipschitz continuous. In the \emph{fully-observed continuous-time} setup, this is in general established easily due to very restrictive structural assumptions on the system components. In a recent monograph \cite{DeCa18}, Lipschitz continuity of the optimal policy in mean-field equilibrium was established in Lemma 3.3 using regularity properties of system components. However, in our setup, in order to establish this, we need Lipschitz continuity, strong convexity, and differentiability conditions on one-stage cost functions $\{C_t\}$ and transition probabilities $\{\eta_t\}$ of the fully-observed reduction. However, establishing Lipschitzness of the transition probabilities $\{\eta_t\}$ is in general prohibitive. Indeed, even weak continuity of the transition probabilities $\{\eta_t\}$, which is a much weaker condition than Lipschitz continuity, has been established relatively recently in \cite{FeKaZg16}. Moreover, it was discussed in that paper that even if very restrictive conditions are imposed on the system components, it is not possible to extend weak continuity of the transition probability to setwise continuity, which is also a very weak condition that is used in the stochastic control literature. Therefore, establishing Lipschitz continuity of the transition probabilities $\{\eta_t\}$ is in general prohibitive. This would also be the case for the partially-observed continuous-time setup, since the above-mentioned result pertains to the fully-observed case.
\end{remark}

\begin{remark}
In the mean-field games literature, uniqueness of the mean-field equilibrium can be established using a monotonicity condition as introduced by Lasry and Lions in \cite{LaLi07} (see also \cite{CaLa2015}). However, in addition to the monotonicity condition, we should also have the following conditions in order to have uniqueness (see, e.g., \cite[Assumption U]{CaLa2015}):
\begin{itemize}
\item[a)] The cost function should be in additive form.
\item[b)] The one-stage cost function can be additively decomposed into two functions, where the first function is a function of the state and the mean-field term, and the second function is a function of the state and the action. 
\item[c)] The dynamics of a generic agent should be independent of the mean-field term.
\item[d)] For any state-measure flow, there exists a unique optimal policy.
\end{itemize}   

Under these conditions, one can prove that if $(\pi^{\bmu},\bmu)$ and $(\pi^{\bnu},\bnu)$ are two mean-field equilibria, then 
\begin{align}
J_{\bmu}(\pi^{\bmu}) + J_{\bnu}(\pi^{\bnu}) \geq J_{\bmu}(\pi^{\bnu}) + J_{\bnu}(\pi^{\bmu}) \label{eq1}
\end{align}   
in the equivalent game model. This implies that $J_{\bmu}(\pi^{\bmu}) = J_{\bmu}(\pi^{\bnu})$ and $J_{\bnu}(\pi^{\bmu}) = J_{\bnu}(\pi^{\bmu})$. Then, conditions c) and d) ensure that these mean-field equlibria must be the same, which implies uniqueness. However, note that to have inequality (\ref{eq1}), conditions a), b), and c) must hold. Indeed, to state the monotonicity condition, we should have condition b).

In our case, the cost function in the equivalent game model is in additive form, and thus we do have condition a). Moreover, we can assume the decomposition in condition b). However, if we assume that transition probabilities $\{p_t\}$ are independent of the mean-field term, then it implies that the transition probability $q$ and the one-stage cost function $m$ of the original game model are independent of the mean-field term since 
\begin{align}
p_t\bigl(B \times D \big| x(t),a(t),{\color{red}\mu_t}\bigr) = q(B|s(t),a(t),{\color{red}\mu_{t,1}}) \otimes \delta_{m(t) + \beta^t m(s(t),a(t),{\color{red}\mu_{t,1}})}(D). \nonumber
\end{align}
But this is merely a risk-sensitive stochastic control setup. 

Conversely, if we consider the original game model instead of the equivalent one, then, in this case, the cost function is not in additive form and thus, we cannot achieve inequality (\ref{eq1}) because we cannot have conditions a) and b), which are needed along with the monotonicity condition to have unique mean-field equilibrium. 
\end{remark}

%}

\section{Proof of Theorem~\ref{appr-thm}}\label{sec4-1}

For the game model introduced in Section~\ref{sub1sec2}, the policy $\pi^*$ in the mean-field equilibrium is not necessarily Markov, and so, the joint process of the state, observation, and mean-field term does not have the Markov property as well. To prove Theorem~\ref{appr-thm}, we will first introduce another equivalent game model whose states are the state of the original game model\footnote{When we say original game model in this section, it means the game model introduced in Section~\ref{sub1sec2} in place of the risk-sensitive game model.} plus the current and past observations. In this new model, the mean-field equilibrium policy automatically becomes Markov. 

\def\sS{{\mathsf B}}

In the infinite-population limit, this new mean-field game model is specified by
\begin{align}
\biggl( \{\sS_t\}_{t=0}^{T+1}, \sA, \{P_t\}_{t=0}^{T+1}, \{\C_t\}_{t=0}^{T+1}, \lambda_0 \biggr), \nonumber
\end{align}
where, for each $t$,
$
\sS_t = \sX \times \underbrace{\sY \times \ldots \times \sY}_{\text{$t+1$-times}} \nonumber
$
and $\sA$ are the Polish state and action spaces at time $t$, respectively. The stochastic kernel $P_t : \sS_t \times \sA \times \P(\sS_t) \to \P(\sS_{t+1})$ is defined as:
\begin{align}
&P_t\bigl(B_{t+1} \times D_{t+1} \times \ldots \times D_0 \big| b(t),a(t),\Delta_t\bigr) \nonumber \\
&= \int_{B_{t+1}} r(D_{t+1}|x(t+1)) \prod_{k=0}^t 1_{D_k}(y(k)) p_t(dx(t+1)|x(t),a(t),\Delta_{t,1}) , \nonumber
\end{align}
where $B_{t+1} \in \B(\sX)$, $D_k \in \B(\sY)$ ($k=0,\ldots,t+1$), $b(t) = (x(t),y(t),y(t-1),\ldots,y(0))$, and $\Delta_{t,1}$ is the marginal of $\Delta_t$ on $\sX$. Indeed, $P_t$ is the controlled transition probability of next state-observation pair, current observation, and past observations, i.e.,
$\bigl(x(t+1),y(t+1),y(t),\ldots,y(0)\bigr),$
given the current state-observation pair and past observations, i.e.,
$\bigl(x(t),y(t),y(t-1),\ldots,y(0)\bigr),$
in the original mean-field game. For each $t$, the one-stage cost function $\C_t: \sS_t \times \sA \times \P(\sS_t) \rightarrow [0,\infty)$ (do not confuse this with $C_t$ in Section~\ref{main-proof}) is defined as:
\begin{align}
\C_t(b(t),a(t),\Delta_t) = c_t(x(t),a(t),\Delta_{t,1}). \nonumber
\end{align}
Finally, the initial measure $\lambda_0$ is given by $\lambda_0(db) = r(dy|x) \mu_0(dx)$, where $b = (x,y)$. Suppose that Assumption~\ref{as1} and Assumption~\ref{as2} hold.
%For each $t$, let $\gamma_t$ denote the bounded Lipschitz metric on $\P(\sS_t)${\footnote{The product metric on $\sS_t$ is assumed to be the sum of the metrics of the components in the product space.}}.
Then, for each $t$, the following are satisfied:
\begin{itemize}
\item [(I)] The one-stage cost function $\C_t$ is bounded and continuous.
\item [(II)] The stochastic kernel $P_t$ is weakly continuous.
%\item [(III)] For all $(d,a) \in \sS_t\times\sA$, we have
%\begin{align}
%d_t(P_t(\,\cdot\,|s,a,\Delta) - P_t(\,\cdot\,|s,a,\Delta')) \leq K_{p_t} d_t(\Delta,\Delta').  \nonumber
%\end{align}
\end{itemize}
It is straightforward to prove that (I) and (II) hold since $c_t$ is continuous, $p_t$ is weakly continuous, and $r$ is continuous in total variation norm. Recall the set of policies $\tilde{\Pi}$ in the original mean-field game which only use the observations; that is, $\pi \in \tilde{\Pi}$ if $\pi_t:\sY^{t+1} \rightarrow \P(\sA)$ for each $t\geq0$. Note that $\tilde{\Pi}$ is a subset of the set of Markov policies in the new model. For any measure flow ${\boldsymbol \Delta} = (\Delta_t)_{t\geq0}$, where $\Delta_t \in \P(\sS_t)$, we denote by $\hat{J}_{{\boldsymbol \Delta}}(\pi)$ the finite-horizon risk-neutral total cost of the policy $\pi \in \tilde{\Pi}$ in this new mean-field game model.

We also define the corresponding $N$ agent game as follows. We have the Polish state spaces $\{\sS_t\}_{t=0}^{T+1}$ and action space $\sA$. For every $t$ and every $i \in \{1,2,\ldots,N\}$, let $b^N_i(t) \in \sS_t$ and $a^N_i(t) \in \sA$ denote the state and the action of Agent~$i$ at time $t$, and let
\begin{align}
\Delta_t^{(N)}(\,\cdot\,) = \frac{1}{N} \sum_{i=1}^N \delta_{b_i^N(t)}(\,\cdot\,) \in \P(\sS_t) \nonumber
\end{align}
denote the empirical distribution of the state configuration at time $t$. The initial states $b^N_i(0)$ are independent and identically distributed according to $\lambda_0$, and, for each $t$, the next-state configuration $(b^N_1(t+1),\ldots,b^N_N(t+1))$ is generated according to the probability laws
\begin{align}
&\prod^N_{i=1} P_{t}\big(db^N_i(t+1)\big|b^N_i(t),a^N_i(t),\Delta^{(N)}_t\big). \nonumber %\label{eq:state_spec}
\end{align}
Recall that $\tilde{\Pi}_i$ denotes the set of policies that only use local observations for Agent $i$ in the original game. Note that policies in $\tilde{\Pi}_i$ are Markov for the new model since they partly use the state information. We let $\tilde{\Pi}_i^c$ denote the set of all policies in $\tilde{\Pi}_i$ for Agent~$i$ that are weakly continuous; that is, $\pi=\{\pi_t\}\in \tilde{\Pi}_i^c$ if for all $t\geq0$, $\pi_t: \sY^{t+1} \rightarrow \P(\sA)$ is continuous when $\P(\sA)$ is endowed with the weak topology. For Agent~$i$, the finite-horizon risk-neutral total cost under the initial distribution $\lambda_0$ and $N$-tuple of policies ${\boldsymbol \pi}^{(N)} \in \tilde{{\bf \Pi}}^{(N)}$ is denoted by $\hat{J}_i^{(N)}({\boldsymbol \pi}^{(N)})$.

The following proposition makes the connection between this new model and the original model.

\begin{proposition}\label{oapp-prop1}
For any $N\geq1$, ${\boldsymbol \pi}^{(N)} \in \tilde{{\bf \Pi}}^{(N)}$, and $i=1,\ldots,N$, we have $\hat{J}_i({\boldsymbol \pi}^{(N)}) = J_i({\boldsymbol \pi}^{(N)})$. Similarly, for any $\pi \in \tilde{\Pi}$ and measure flow ${\boldsymbol \Delta}$, we have $\hat{J}_{\boldsymbol \Delta}(\pi) = J_{\bmu}(\pi)$ where $\bmu = (\Delta_{t,1})_{t\geq0}$.
\end{proposition}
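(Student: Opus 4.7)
The plan is to show that the two game formulations produce the same joint law over the $(\sX, \sA)$-trajectories that enter the cost, so that the costs themselves coincide. The new model is just a reparametrization that carries the observation history explicitly in the state, while leaving the $\sX$-dynamics, the observation kernel, and the cost integrand unchanged.

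First, for the $N$-agent statement, I would set up a coupling between the two models on a common probability space. Fix any ${\boldsymbol \pi}^{(N)} \in \tilde{{\bf \Pi}}^{(N)}$. By induction on $t$ I would verify that the joint distribution of $\bigl( (x_i^N(0{:}t), y_i^N(0{:}t), a_i^N(0{:}t)) \bigr)_{i=1}^N$ under the original game dynamics coincides with the joint distribution of $\bigl( (b_i^N(t), a_i^N(t)) \bigr)_{i=1}^N$ under the new game dynamics once we identify $b_i^N(t) = (x_i^N(t), y_i^N(t), \ldots, y_i^N(0))$. The base case $t=0$ is immediate from the definition $\lambda_0(db) = r(dy|x)\mu_0(dx)$, which reproduces the law of $(x_i^N(0), y_i^N(0))$. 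For the inductive step, the construction of $P_t$ factors as $r(dy(t{+}1)|x(t{+}1)) \otimes p_t(dx(t{+}1)|x(t), a(t), \Delta_{t,1})$ with the past observations carried as Dirac masses; since $\Delta_{t,1}^{(N)} = e_t^{(N)}$ (both are the empirical measure on $\sX$), this is exactly the law produced by $r$ and $p_t$ applied to the old variables. Finally, because any $\pi \in \tilde{\Pi}_i$ generates actions from $(y_i^N(t), \ldots, y_i^N(0))$, which is a measurable function of $b_i^N(t)$, the action kernels also match under the coupling.

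Given this distributional identification, the cost functions agree termwise: by construction $C_t(b_i^N(t), a_i^N(t), \Delta_t^{(N)}) = c_t(x_i^N(t), a_i^N(t), \Delta_{t,1}^{(N)}) = c_t(x_i^N(t), a_i^N(t), e_t^{(N)})$ almost surely, so summing over $t=0,\ldots,T+1$ and taking expectations yields $\hat{J}_i^{(N)}({\boldsymbol \pi}^{(N)}) = J_i^{(N)}({\boldsymbol \pi}^{(N)})$.

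The single-agent statement $\hat{J}_{\boldsymbol \Delta}(\pi) = J_{\bmu}(\pi)$ with $\bmu = (\Delta_{t,1})_{t \ge 0}$ is handled by essentially the same coupling argument: the state-measure flow ${\boldsymbol \Delta}$ only enters the new transition through its $\sX$-marginal $\Delta_{t,1} = \mu_t$, so the induced law of $(x(t), y(t), \ldots, y(0), a(t))_{t\ge 0}$ under the new model matches the law of $(x(t), y(t), a(t))_{t\ge 0}$ together with the past observations under the original POMDP driven by $\bmu$; the identity $C_t(b(t), a(t), \Delta_t) = c_t(x(t), a(t), \mu_t)$ then closes the argument. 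I don't expect a genuine obstacle here; the only care required is to keep the factorization of $P_t$ and the marginalization $\Delta_{t,1} = \mu_t$ straight through the induction, which is why the authors label the proof as straightforward and omit it.
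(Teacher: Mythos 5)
Your proposal is correct and is exactly the argument the paper has in mind: the paper itself omits the proof, deferring to \cite[Proposition 5.1]{SaBaRa18}, and that proof proceeds by the same identification $b_i^N(t) = (x_i^N(t), y_i^N(t),\ldots,y_i^N(0))$, the inductive matching of joint laws via the factorization of $P_t$ and the marginal identity $\Delta_{t,1}^{(N)} = e_t^{(N)}$, and the termwise equality of the cost functions. No gaps.
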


\begin{proof}
The result can easily be proved as in \cite[Proposition 5.1]{SaBaRa18}, and thus we do not include the details.\qed
\end{proof}

By Proposition~\ref{oapp-prop1}, in the remainder of this section we consider the new game model in place of the one introduced in Section~\ref{sub1sec2}. Define the measure flow ${\boldsymbol \Delta} = (\Delta_t)_{t\geq0}$ as follows:
$$\Delta_t = {\cal L}(x(t),y(t),\ldots,y(0)),$$
where ${\cal L}(x(t),y(t),\ldots,y(0))$ denotes the probability law of
$(x(t),y(t),\ldots,y(0))$ in the original mean-field game under the policy $\pi^*$ in the mean-field equilibrium.
For each $t\geq0$, define the stochastic kernel $P_t^{\pi^*}(\,\cdot\,|b,\Delta)$ on $\sS_{t+1}$ given $\sS_{t} \times \P(\sS_{t})$ as
\begin{align}
P_t^{\pi^*}(\,\cdot\,|b,\Delta) = \int_{\sA} P_t(\,\cdot\,|b,a,\Delta) \pi_t^*(da|b). \nonumber
\end{align}
Since $\pi_t^*$ is weakly continuous, $P_t^{\pi^*}(\,\cdot\,|b,\Delta)$ is also weakly continuous in $(b,\Delta)$. In the sequel, to ease the notation, we will also write $P_t^{\pi^*}(\,\cdot\,|b,\Delta)$ as $P_{t,\Delta}^{\pi^*}(\,\cdot\,|b)$.

\begin{lemma}\label{app-lemma1}
Measure flow ${\boldsymbol \Delta}$ satisfies
\begin{align}
\Delta_{t+1}(\,\cdot\,) &= \int_{\sS_t} P_{t}^{\pi^*}(\,\cdot\,|b,\Delta_t) \Delta_t(db) \nonumber \\
&= \Delta_t P_{t,\Delta_t}^{\pi^*}(\,\cdot\,). \nonumber
\end{align}
\end{lemma}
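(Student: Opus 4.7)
The plan is to unfold the definition of $\Delta_t$ and verify the evolution equation by a direct tower-property computation, using the mean-field equilibrium condition $\bmu^*=\Lambda(\pi^*)$ to identify the $\sX$-marginal of $\Delta_t$ with $\mu_t^*$.

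First I would record the key consistency fact. Since $(\pi^*,\bmu^*)$ is a mean-field equilibrium, we have $\bmu^*=\Lambda(\pi^*)$, which means that when the generic agent in the original mean-field game follows $\pi^*$ against the measure flow $\bmu^*$, the law of $x(t)$ equals $\mu_t^*$ for each $t$. Consequently, the marginal of $\Delta_t = {\cal L}(x(t),y(t),\ldots,y(0))$ on $\sX$ satisfies $\Delta_{t,1} = \mu_t^*$ for every $t$. This identification is what makes the definition of $P_t(\,\cdot\,|b,a,\Delta_t)$ coincide with the actual one-step conditional law of $b(t+1)$ in the original mean-field game, since the transition kernel $p_t$ in the original model is driven by $\mu_t^*$.

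Next I would compute $\Delta_{t+1}$ by conditioning on the past. For any Borel $E \subset \sS_{t+1}$, the tower property gives
\begin{align*}
\Delta_{t+1}(E)
&= \Pr\bigl\{(x(t{+}1),y(t{+}1),y(t),\ldots,y(0)) \in E\bigr\} \\
&= E\bigl[\Pr\bigl\{(x(t{+}1),y(t{+}1),y(t),\ldots,y(0)) \in E \,\big|\, b(t), a(t)\bigr\}\bigr].
\end{align*}
Using $x(t{+}1) \sim p_t(\,\cdot\,|x(t),a(t),\mu_t^*)$ and $y(t{+}1) \sim r(\,\cdot\,|x(t{+}1))$, and recalling that $y(t),\ldots,y(0)$ are fixed by the conditioning, the inner conditional probability is by construction $P_t(E|b(t),a(t),\Delta_t)$, where we used $\mu_t^* = \Delta_{t,1}$.

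Finally, under $\pi^*$ the action $a(t)$ is drawn from $\pi_t^*(\,\cdot\,|y(t),\ldots,y(0))$, which depends on $b(t)$ only through its observation coordinates, so viewing it as $\pi_t^*(\,\cdot\,|b(t))$ and integrating against it yields
\begin{align*}
\Delta_{t+1}(E)
= \int_{\sS_t}\!\int_{\sA} P_t(E|b,a,\Delta_t)\,\pi_t^*(da|b)\,\Delta_t(db)
= \int_{\sS_t} P_t^{\pi^*}(E|b,\Delta_t)\,\Delta_t(db),
\end{align*}
which is the asserted identity (the second form $\Delta_t P_{t,\Delta_t}^{\pi^*}$ being only a notational rewriting). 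There is no real obstacle here; the only substantive point is the marginal identification $\Delta_{t,1}=\mu_t^*$, which is a direct consequence of the equilibrium relation $\bmu^*=\Lambda(\pi^*)$, and the rest is bookkeeping built into the definition of $P_t$.
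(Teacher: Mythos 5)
Your proof is correct and is essentially the argument the paper has in mind: the paper omits the details, deferring to the analogous Lemma 5.1 of \cite{SaBaRa18}, which is precisely this tower-property computation. You correctly isolate the one substantive point---that the equilibrium relation $\bmu^*=\Lambda(\pi^*)$ gives $\Delta_{t,1}=\mu_t^*$, so that $P_t(\,\cdot\,|b,a,\Delta_t)$ is the true one-step conditional law of $b(t+1)$ and the policy is Markov in the enlarged state---and the rest is bookkeeping.
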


\begin{proof}
The result can easily be proved as in \cite[Lemma 5.1]{SaBaRa18}, and thus we do not include the details.\qed
\end{proof}

For each $N\geq1$, let $\bigl\{b_i^{N}(t)\bigr\}_{1\leq i\leq N}$ denote the states of agents at time $t$ in the $N$-agent new game model under the policy ${\boldsymbol \pi}^{(N,*)} = \{\pi^*,\pi^*,\ldots,\pi^*\}$. Define the empirical distribution
\begin{align}
\Delta_t^{(N)}(\,\cdot\,) = \frac{1}{N} \sum_{i=1}^N \delta_{b_i^{N}(t)}(\,\cdot\,). \nonumber
\end{align}

\begin{proposition}\label{prop5}
For all $t\geq0$, we have
$
{\cal L}(\Delta_t^{(N)}) \rightarrow \delta_{\Delta_t} \nonumber
$
weakly in $\P(\P(\sS_t))$, as $N\rightarrow\infty$.
\end{proposition}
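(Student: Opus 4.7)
The plan is to proceed by induction on $t$, using a standard propagation-of-chaos argument adapted to the present setup where the transition kernel depends on the empirical measure.

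\textbf{Base case.} For $t=0$, the states $b_1^N(0),\dots,b_N^N(0)$ are i.i.d.\ with common law $\lambda_0 = \Delta_0$. By the Varadarajan/Glivenko--Cantelli theorem for empirical measures on a Polish space, $\Delta_0^{(N)} \to \Delta_0$ weakly, almost surely, which gives ${\cal L}(\Delta_0^{(N)}) \to \delta_{\Delta_0}$ weakly in $\P(\P(\sS_0))$.

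\textbf{Inductive step.} Assume ${\cal L}(\Delta_t^{(N)}) \to \delta_{\Delta_t}$ weakly, equivalently $\Delta_t^{(N)} \to \Delta_t$ weakly in probability. Let $\mathcal{F}_t^N = \sigma(b_i^N(s): 0 \le s \le t,\, 1\le i \le N)$. Conditionally on $\mathcal{F}_t^N$, the next states $\{b_i^N(t+1)\}_{i=1}^N$ are independent, with $b_i^N(t+1) \sim P_{t,\Delta_t^{(N)}}^{\pi^*}(\,\cdot\,|b_i^N(t))$. For any $f \in C_b(\sS_{t+1})$ decompose
\begin{align*}
\Delta_{t+1}^{(N)}(f) - \Delta_{t+1}(f) &= \Bigl(\Delta_{t+1}^{(N)}(f) - E\bigl[\Delta_{t+1}^{(N)}(f)\bigm|\mathcal{F}_t^N\bigr]\Bigr) \\
&\quad + \Bigl(\Delta_t^{(N)} P_{t,\Delta_t^{(N)}}^{\pi^*}(f) - \Delta_t P_{t,\Delta_t}^{\pi^*}(f)\Bigr),
\end{align*}
where we used \cref{app-lemma1} for $\Delta_{t+1}$ and the definition of $P_t^{\pi^*}$ for the conditional expectation. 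The first (martingale) term has conditional variance at most $\|f\|_\infty^2/N$ by conditional independence, so it tends to $0$ in $L^2$ and hence in probability.

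\textbf{The main work.} The second term is where continuity does the job. By \cref{as2}(b) the policy $\pi_t^*$ is a weakly continuous stochastic kernel, and $P_t$ is weakly continuous by property (II), so the kernel $P_t^{\pi^*}(\,\cdot\,|b,\Delta)$ is weakly continuous in $(b,\Delta)$. Thus for any $f \in C_b(\sS_{t+1})$, the map $(b,\Delta) \mapsto \int f\, dP_t^{\pi^*}(\,\cdot\,|b,\Delta)$ lies in $C_b(\sS_t \times \P(\sS_t))$. Consequently, whenever $(\Delta_t^{(N)},\Delta_t^{(N)}) \to (\Delta_t,\Delta_t)$ weakly (as measures on $\sS_t$ and as points in $\P(\sS_t)$ respectively), the function $b \mapsto \int f\, dP_t^{\pi^*}(\,\cdot\,|b,\Delta_t^{(N)})$ converges to $b \mapsto \int f\, dP_t^{\pi^*}(\,\cdot\,|b,\Delta_t)$ continuously in $b$. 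Applying the continuous-convergence result \cite[Theorem 3.5]{Lan81} (the same tool used in \cref{prop3}), we obtain $\Delta_t^{(N)} P_{t,\Delta_t^{(N)}}^{\pi^*}(f) \to \Delta_t P_{t,\Delta_t}^{\pi^*}(f)$ in probability. Combining the two terms yields $\Delta_{t+1}^{(N)}(f) \to \Delta_{t+1}(f)$ in probability for every $f \in C_b(\sS_{t+1})$, and since the weak topology on $\P(\sS_{t+1})$ is metrizable (Polish state space) this upgrades to $\Delta_{t+1}^{(N)} \to \Delta_{t+1}$ weakly in probability, completing the induction.

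\textbf{Expected obstacle.} The only delicate point is justifying that $\Delta_t^{(N)} P_{t,\Delta_t^{(N)}}^{\pi^*}(f) \to \Delta_t P_{t,\Delta_t}^{\pi^*}(f)$ when both the integrating measure \emph{and} the kernel vary with $N$; this is precisely a continuous-convergence issue that is resolved by \cref{as2}(b) together with the weak continuity of $P_t$, exactly as in \cref{prop3}. Everything else (conditional independence giving the $1/N$ variance bound, metrizability of weak convergence) is routine.
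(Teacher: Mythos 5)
Your proof is correct and follows essentially the same route as the paper: induction on $t$, the same two-term decomposition of $\Delta_{t+1}^{(N)}(f)-\Delta_{t+1}(f)$ into a conditional-fluctuation term and a term $\Delta_t^{(N)}P^{\pi^*}_{t,\Delta_t^{(N)}}(f)-\Delta_t P^{\pi^*}_{t,\Delta_t}(f)$ handled by continuity of $\Delta\mapsto\Delta P^{\pi^*}_{t,\Delta}(f)$ via \cite[Theorem 3.5]{Lan81}. The only cosmetic differences are that you bound the fluctuation term by a direct conditional-variance computation where the paper cites \cite[Lemma A.2]{BuMa14} (which gives the same $O(N^{-1/2})$ bound), and you phrase the conclusion as convergence in probability rather than via the $W_1$ metrization of $\P(\P(\sS_t))$; these are equivalent here since everything is bounded.
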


\begin{proof}
Weak topology on $\P(\sS_t)$ can be metrized using the following metric:
\begin{align}
\rho(\mu,\nu) = \sum_{m=1}^{\infty} 2^{-(m+1)} | \mu(f_m) - \nu(f_m) |, \nonumber
\end{align}
where $\{f_m\}_{m\geq1}$ is a sequence of real continuous and bounded functions on $\sS_t$ such that $\|f_m\| \leq 1$ for all $m\geq1$ (see \cite[Theorem 6.6, p. 47]{Par67}). Define the Wasserstein distance of order 1 on the set of probability measures $\P(\P(\sS_t))$ as follows (see \cite[Definition 6.1]{Vil09}):
\begin{align}
W_1(\Phi,\Psi) = \inf \bigl\{ E[\rho(X,Y)]: {\cal L}(X) = \Phi \text{ and } {\cal L}(Y) = \Psi \bigr\}. \nonumber
\end{align}
Note that since $\delta_{\Delta_t}$ is a Dirac measure, we have
\begin{align}
W_1({\cal L}(\Delta_t^{(N)}),\delta_{\Delta_t}) &= \bigl\{ E[\rho(X,Y)]: {\cal L}(X) = {\cal L}(\Delta_t^{(N)}) \text{ and } {\cal L}(Y) = \delta_{\Delta_t} \bigr\} \nonumber \\
&= E\biggl[ \sum_{m=1}^{\infty} 2^{-(m+1)} | \Delta_t^{(N)}(f_m) - \Delta_t(f_m) | \biggr]. \nonumber
\end{align}
Since convergence in $W_1$ distance implies weak convergence (see \cite[Theorem 6.9]{Vil09}), it suffices to prove that
\begin{align}
\lim_{N\rightarrow\infty} E\bigl[|\Delta_t^{(N)}(f) - \Delta_t(f)|\bigr] = 0 \nonumber
\end{align}
for any $f \in C_b(\sS_t)$ and for all $t$. We prove this by induction on $t$.

As $\{b_i^N(0)\}_{1\leq i\leq N}$ are i.i.d. with common distribution $\Delta_0$, the claim is true for $t=0$. We suppose that the claim holds for $t$ and consider $t+1$. Fix any $g \in C_b(\sS_{t+1})$. Then, we have
\begin{align}
&|\Delta_{t+1}^{(N)}(g) - \Delta_{t+1}(g)| \nonumber \\
&\phantom{xxx}\leq |\Delta_{t+1}^{(N)}(g) - \Delta_{t}^{(N)} P^{\pi^*}_{t,\Delta_t^{(N)}}(g)| + |\Delta_t^{(N)} P^{\pi^*}_{t,\Delta_t^{(N)}}(g) - \Delta_t P^{\pi^*}_{t,\Delta_t}(g) |. \label{eq8}
\end{align}
We first prove that the expectation of the second term on the right-hand side (RHS) of \eqref{eq8} converges to $0$ as $N\rightarrow\infty$. To that end, define $F: \P(\sS_{t}) \rightarrow \R$ as
\begin{align}
F(\Delta) = \Delta P^{\pi^*}_{t,\Delta}(g) = \int_{\sS_t} \int_{\sS_{t+1}} g(b') P^{\pi^*}_t(db'|b,\Delta) \Delta(db). \nonumber
\end{align}

One can prove that $F \in C_b(\P(\sS_t))$. Indeed, suppose that $\Delta_n$ converges to $\Delta$. Let us define
\begin{align}
l_n(b) &= \int_{\sS_{t+1}} g(b') P^{\pi^*}_t(db'|b,\Delta_n) 
\text{ } \text{and} \text{ }
l(b) = \int_{\sS_{t+1}} g(b') P^{\pi^*}_t(db'|b,\Delta). \nonumber
\end{align}
Since $P^{\pi^*}_t$ is weakly continuous, one can prove that $l_n$ converges to $l$ continuously. By \cite[Theorem 3.5]{Lan81}, we have $F(\Delta_n) \rightarrow F(\Delta)$, and so, $F \in C_b(\P(\sS_t))$. This implies that the expectation of the second term on the RHS of \eqref{eq8} converges to zero as ${\cal L}(\Delta_t^{(N)}) \rightarrow \delta_{\Delta_t}$ weakly, by the induction hypothesis.

Now, let us write the expectation of the first term on the RHS of \eqref{eq8} as
\begin{align}
E\biggl[ E\biggl[ |\Delta_{t+1}^{(N)}(g) - \Delta_t^{(N)} P^{\pi^*}_{t,\Delta_t^{(N)}}(g)| \biggr| b_1^N(t),\ldots,b_N^N(t) \biggr] \biggr]. \nonumber
\end{align}
Then, by \cite[Lemma A.2]{BuMa14}, we have
\begin{align}
E\biggl[ |\Delta_{t+1}^{(N)}(g) - \Delta_t^{(N)} P^{\pi^*}_{t,\Delta_t^{(N)}}(g)| \biggr| b_1^N(t),\ldots,b_N^N(t) \biggr] \leq 2 \frac{\|g\|}{\sqrt{N}}. \nonumber
\end{align}
Therefore, the expectation of the first term on the RHS of \eqref{eq8} also converges to zero as $N\rightarrow\infty$. Since $g$ was arbitrary, this completes the proof.\qed
\end{proof}

The implication of Proposition~\ref{prop5} is the key to prove the main theorem. It basically says that, in the infinite-population limit, the empirical distribution of the states under the mean-field policy converges to the deterministic measure flow ${\boldsymbol \Delta}$ (i.e., the principle of law of large numbers). This result leads to the following important proposition.

\begin{proposition}\label{prop6}
We have
\begin{align}
\lim_{N\rightarrow\infty} \hat{J}_1^{(N)}({\boldsymbol \pi}^{(N,*)}) = \hat{J}_{{\boldsymbol \Delta}}(\pi^*) = \inf_{\pi' \in \Pi} \hat{J}_{{\boldsymbol \Delta}}(\pi'). \nonumber
\end{align}
\end{proposition}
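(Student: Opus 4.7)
The plan is to split the proposition into its two equalities. The second equality $\hat{J}_{{\boldsymbol \Delta}}(\pi^*) = \inf_{\pi' \in \Pi} \hat{J}_{{\boldsymbol \Delta}}(\pi')$ is essentially immediate from the mean-field equilibrium condition. Since $(\pi^*, \bmu^*)$ is a mean-field equilibrium by \cref{thm:MFE}, we have $\pi^* \in \Psi(\bmu^*)$; that is, $\pi^*$ is optimal for the original partially-observed mean-field game under the measure flow $\bmu^*$. By \cref{oapp-prop1}, $\hat{J}_{{\boldsymbol \Delta}}(\pi) = J_{\bmu^*}(\pi)$ for all admissible $\pi$ (where, by construction of ${\boldsymbol \Delta}$, the marginal flow satisfies $\Delta_{t,1} = \mu_t^*$), so the optimality of $\pi^*$ transfers directly.

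For the first equality, I would write each cost as a sum of expected stage costs,
\begin{align*}
\hat{J}_1^{(N)}({\boldsymbol \pi}^{(N,*)}) &= \sum_{t=0}^{T+1} E\bigl[ C_t\bigl(b_1^N(t), a_1^N(t), \Delta_t^{(N)}\bigr) \bigr], \\
\hat{J}_{{\boldsymbol \Delta}}(\pi^*) &= \sum_{t=0}^{T+1} E\bigl[ C_t(b(t), a(t), \Delta_t) \bigr],
\end{align*}
and prove convergence term by term. Since $C_t$ is bounded and continuous by (I), and $\Delta_t^{(N)} \to \Delta_t$ in probability by \cref{prop5}, the substantive step is to show the weak convergence ${\cal L}(b_1^N(t), a_1^N(t)) \to {\cal L}(b(t), a(t))$, where $(b(t), a(t))$ evolves under $\pi^*$ and the measure flow ${\boldsymbol \Delta}$ in the infinite-population mean-field game. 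Asymptotic independence of $(b_1^N(t), a_1^N(t))$ from $\Delta_t^{(N)}$ is then automatic since the latter concentrates on the deterministic limit $\Delta_t$.

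The weak convergence would be proven by induction on $t$. The base case is immediate: $b_1^N(0) \sim \lambda_0$ exactly, and $a_1^N(0)$ is obtained by applying the weakly continuous Markov kernel $\pi_0^*$ guaranteed by the construction preceding \cref{appr-thm}. For the inductive step, $b_1^N(t+1)$ is drawn from $P_t(\,\cdot\,|b_1^N(t), a_1^N(t), \Delta_t^{(N)})$ (weakly continuous by (II)) and then $a_1^N(t+1)$ is drawn from $\pi_{t+1}^*$ (weakly continuous, as discussed in the preamble to \cref{appr-thm}). Combining the induction hypothesis, the weak continuities of $P_t$ and $\pi_{t+1}^*$, and \cref{prop5}, one obtains the claim at time $t+1$ via a standard continuous-mapping argument.

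The main obstacle will be handling the joint convergence of $(b_1^N(t), a_1^N(t), \Delta_t^{(N)})$ with enough care to pass to the limit inside $C_t(\,\cdot\,,\,\cdot\,,\Delta_t^{(N)})$, whose third argument is itself random. This is done by continuous-convergence machinery along the lines of \cite[Theorem 3.5]{Lan81}, exactly as in the proofs of \cref{prop5} and \cref{prop3}: namely, $C_t(\cdot,\cdot,\Delta_t^{(N)}) \to C_t(\cdot,\cdot,\Delta_t)$ continuously in the first two arguments under the weak convergence $\Delta_t^{(N)} \to \Delta_t$ in probability. Once this is assembled, boundedness of $C_t$ and dominated convergence yield
\begin{align*}
\lim_{N \to \infty} E\bigl[ C_t\bigl(b_1^N(t), a_1^N(t), \Delta_t^{(N)}\bigr) \bigr] = E\bigl[ C_t(b(t), a(t), \Delta_t) \bigr],
\end{align*}
and summing over $t \in \{0,1,\ldots,T+1\}$ delivers the first equality.
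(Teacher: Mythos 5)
Your proof is correct, but for the first equality it takes a genuinely different route from the paper. The paper never establishes weak convergence of the tagged agent's law ${\cal L}(b_1^N(t),a_1^N(t))$: instead it observes that under ${\boldsymbol \pi}^{(N,*)}$ the tuple $\bigl(b_1^N(t),\ldots,b_N^N(t),\Delta_t^{(N)}\bigr)$ is exchangeable, so that
\begin{align}
E\bigl[ C_t(b_1^N(t),a_1^N(t),\Delta_t^{(N)}) \bigr] = E\bigl[ \Delta_t^{(N)}\bigl(C_{\pi_t^*}(\,\cdot\,,\Delta_t^{(N)})\bigr) \bigr] = E\bigl[F(\Delta_t^{(N)})\bigr] \nonumber
\end{align}
for a single bounded continuous functional $F$ of the empirical measure alone (continuity using the weak continuity of $\pi_t^*$ and \cite[Theorem 3.5]{Lan81}); \cref{prop5} then gives the limit $\Delta_t(C_{\pi_t^*}(\,\cdot\,,\Delta_t))$ in one step, and \cref{app-lemma1} identifies the sum of these limits with $\hat{J}_{{\boldsymbol \Delta}}(\pi^*)$. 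Your argument instead proves the stronger intermediate fact ${\cal L}(b_1^N(t),a_1^N(t)) \Rightarrow {\cal L}(b(t),a(t))$ by induction, pairing it with the concentration of $\Delta_t^{(N)}$ via a Slutsky-type step; this is a valid tagged-particle (propagation-of-chaos) argument, and all the continuity ingredients you invoke --- (I), (II), weak continuity of $\pi_t^*$, and continuous convergence in the sense of Langen --- are available. What the paper's route buys is brevity: everything reduces to \cref{prop5} and no new induction is needed. What your route buys is that it is exactly the template required later for \cref{prop9} and \cref{theorem3}, where Agent~1 deviates to $\tpi^{(N)}$ and exchangeability of the full configuration is lost, so the tagged agent's law must be tracked separately anyway. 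Your treatment of the second equality (mean-field optimality of $\pi^*$ transferred through \cref{oapp-prop1}, using $\Delta_{t,1}=\mu_t^*$) matches what the paper leaves implicit.
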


\begin{proof}
As the transition probabilities $P_t(\,\cdot\,|d,a,\Delta)$ are continuous in $\Delta$, the dynamics of the state of a generic agent in the finite-agent game with sufficiently many agents and the dynamics of the state in the mean-field game under policies $\bpi^{(N,*)} = (\pi^*,\ldots,\pi^*)$ and $\pi^*$, respectively, should therefore be close. Hence, the distributions of the states in these games should also be close, from which we obtain the proposition. The precise mathematical proof is given below.

For each $t\geq0$, let us define
\begin{align}
\C_{\pi_t^*}(b,\Delta) = \int_{\sA} \C_t(b,a,\Delta) \pi_t^*(da|b). \nonumber
\end{align}
Note that random elements $\bigl(b_1^N(t),\ldots,b_N^N(t),\Delta_t^{(N)}\bigr)$ are exchangeable; that is, for any permutation $\sigma$ of $\{1,\ldots,N\}$, we have
\begin{align}
{\cal L}\bigl(b_1^N(t),\ldots,b_N^N(t),\Delta_t^{(N)}\bigr) = {\cal L}\bigl(b_{\sigma(1)}^N(t),\ldots,b_{\sigma(N)}^N(t),\Delta_t^{(N)}\bigr). \nonumber
\end{align}
Hence, the cost function at time $t$ can be written as
\begin{align}
E\bigl[ \C_t(b_1^N(t),a_1^N(t),\Delta_t^{(N)}) \bigr] &= \frac{1}{N} \sum_{i=1}^N E\bigl[ \C_t(b_i^N(t),a_i^N(t),\Delta_t^{(N)}) \bigr] \nonumber \\
&= E\bigl[ \Delta_t^{(N)}\bigl(\C_{\pi_t^*}(b,\Delta_t^{(N)})\bigr) \bigr]. \nonumber
\end{align}
Define $F: \P(\sS_t) \rightarrow \R$ as
\begin{align}
F(\Delta) = \int_{\sS_t} \C_{\pi_t^*}(b,\Delta) \Delta(db). \nonumber
\end{align}
One can show that $F \in C_b(\P(\sS_t))$ as $\pi_t^*$ is weakly continuous. Hence, by Proposition~\ref{prop5}, we obtain
\begin{align}
\lim_{N\rightarrow\infty} E\bigl[ \C_t(b_1^N(t),a_1^N(t),\Delta_t^{(N)}) \bigr] &= \lim_{N\rightarrow\infty} E\bigl[ \Delta_t^{(N)}\bigl(\C_{\pi_t^*}(b,\Delta_t^{(N)})\bigr) \bigr] \nonumber \\
&= \lim_{N\rightarrow\infty} E[F(\Delta_t^{(N)})] \nonumber \\
&= F(\Delta_t) \nonumber \\
&= \Delta_t(\C_{\pi_t^*}(\,\cdot\,,\Delta_t)). \label{eq9}
\end{align}
Note that by Lemma~\ref{app-lemma1}, the cost in the mean-field game can be written as
\begin{align}
\hat{J}_{{\boldsymbol \Delta}}(\pi^*) = \sum_{t=0}^{T+1} \Delta_t(\C_{\pi_t^*}(\,\cdot\,,\Delta_t)). \nonumber
\end{align}
Therefore, by \eqref{eq9} and the dominated convergence theorem, we obtain
\begin{align}
\lim_{N\rightarrow\infty} \hat{J}_1^{(N)}({\boldsymbol \pi}^{(N,*)}) = \hat{J}_{{\boldsymbol \Delta}}(\pi^*), \nonumber
\end{align}
which completes the proof.\qed
\end{proof}

To obtain the approximation result, we should show that if the policy of some agent deviates from the mean-field equilibrium policy, then the corresponding cost of this agent should be close to the cost in the mean-field limit as in Proposition~\ref{prop6}, for $N$ sufficiently large. Since the transition probabilities and the one-stage cost functions are identical for all agents in the game model, it is sufficient to change the policy of Agent~$1$ for each $N$. To that end, let $\{\tpi^{(N)}\}_{N\geq1} \subset \tilde{\Pi}_1^c$ be an arbitrary sequence of policies for Agent~$1$; that is, for each $N\geq1$ and $t\geq0$, $\tpi_t^{(N)}: \sY^{t+1} \rightarrow \P(\sA)$ is weakly continuous. For each $N\geq1$, let $\bigl\{\tb_i^N(t)\bigr\}_{1\leq i \leq N}$ be the collection of states in the $N$-person game under the policy $\tilde{{\boldsymbol \pi}}^{(N)} = \{\tpi^{(N)},\pi^*,\ldots,\pi^*\}$. Define
\begin{align}
\tilde{\Delta}_t^{(N)}(\,\cdot\,) = \frac{1}{N} \sum_{i=1}^N \delta_{\tb_i^{(N)}(t)}(\,\cdot\,). \nonumber
\end{align}
The following result says that the asymptotic behaviour of the empirical distribution of the states at each time $t$ is insensitive to local deviations from the mean-field equilibrium policy.

\begin{proposition}\label{prop8}
For all $t\geq0$, we have
$
{\cal L}(\tilde{\Delta}_t^{(N)}) \rightarrow \delta_{\Delta_t} \nonumber
$
weakly $\P(\P(\sS_t))$, as $N \rightarrow \infty$.
\end{proposition}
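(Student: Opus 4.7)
My plan is to adapt the induction-on-$t$ argument used for \cref{prop5}, since the conclusion and proof differ only in that Agent~$1$ now follows a (possibly $N$-dependent) deviating policy $\tpi^{(N)}$ instead of $\pi^*$. As in \cref{prop5}, I would work with the metric $\rho(\mu,\nu) = \sum_{m=1}^\infty 2^{-(m+1)} |\mu(f_m)-\nu(f_m)|$ on $\P(\sS_t)$ and recall that $W_1({\cal L}(\tilde{\Delta}_t^{(N)}),\delta_{\Delta_t}) = E[\rho(\tilde{\Delta}_t^{(N)},\Delta_t)]$, so convergence in $\P(\P(\sS_t))$ reduces to showing $E\bigl[|\tilde{\Delta}_t^{(N)}(f) - \Delta_t(f)|\bigr] \to 0$ for every $f \in C_b(\sS_t)$. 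The base case $t=0$ is immediate because the initial states $\{\tb_i^N(0)\}_{i=1}^N$ are i.i.d.\ with distribution $\lambda_0 = \Delta_0$ irrespective of policies.

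For the inductive step, I would first isolate Agent~$1$'s contribution by writing, for any $g \in C_b(\sS_{t+1})$,
\[
\tilde{\Delta}_{t+1}^{(N)}(g) = \tfrac{1}{N}\, g(\tb_1^N(t+1)) + \tfrac{1}{N}\sum_{i=2}^N g(\tb_i^N(t+1)),
\]
so that the first term is bounded by $\|g\|/N$ and vanishes. For the remaining sum I would condition on $\mathcal{F}_t = \sigma(\tb_1^N(t),\ldots,\tb_N^N(t),\tilde{a}_1^N(t))$ and use the key observation that both $\pi^*$ and $\tpi^{(N)}$ act as Markov policies in the new model (since they depend only on $(y(t),\ldots,y(0))$, which is part of the augmented state), so conditional on $\mathcal{F}_t$ the next states $\{\tb_i^N(t+1)\}_{i\geq 2}$ are independent draws from $P^{\pi^*}_{t,\tilde{\Delta}_t^{(N)}}(\,\cdot\,|\tb_i^N(t))$. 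The Hoeffding-type bound of \cite[Lemma A.2]{BuMa14} (as invoked in \cref{prop5}) then gives
\[
E\Bigl[\,\Bigl|\tfrac{1}{N}\sum_{i=2}^N g(\tb_i^N(t+1)) - \tfrac{1}{N}\sum_{i=2}^N H(\tb_i^N(t),\tilde{\Delta}_t^{(N)})\Bigr|\,\Bigm|\,\mathcal{F}_t\Bigr] \le \tfrac{2\|g\|}{\sqrt{N-1}},
\]
where $H(b,\Delta) := \int g(b')\, P^{\pi^*}_t(db'|b,\Delta)$.

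It remains to control the averaged quantity $\tfrac{1}{N}\sum_{i=2}^N H(\tb_i^N(t),\tilde{\Delta}_t^{(N)}) = \tilde{\Delta}_t^{(N)}\bigl(H(\,\cdot\,,\tilde{\Delta}_t^{(N)})\bigr) - \tfrac{1}{N} H(\tb_1^N(t),\tilde{\Delta}_t^{(N)})$, whose last summand is again $O(1/N)$. Setting $F(\Delta) := \int H(b,\Delta)\,\Delta(db)$, the weak continuity of $P^{\pi^*}_t$ in $(b,\Delta)$ combined with \cite[Theorem 3.5]{Lan81} shows that $F \in C_b(\P(\sS_t))$, exactly as in the verification of continuity in the proof of \cref{prop5}. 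The induction hypothesis ${\cal L}(\tilde{\Delta}_t^{(N)}) \to \delta_{\Delta_t}$ then gives $E[F(\tilde{\Delta}_t^{(N)})] \to F(\Delta_t) = \Delta_t\bigl(P^{\pi^*}_{t,\Delta_t}(g)\bigr) = \Delta_{t+1}(g)$, where the last equality uses \cref{app-lemma1}. Combining these bounds with the vanishing $O(1/N)$ and $O(1/\sqrt{N})$ error terms completes the induction. The main subtlety—and the only real deviation from \cref{prop5}—is ensuring that the deviating agent's policy never entangles the conditional independence of the remaining agents' transitions; this is precisely what the Markov structure of $\pi^*$ in the augmented state space $\sS_t$ buys us, and why Agent~$1$'s influence enters only through a single $O(1/N)$ coordinate in both the empirical measure and the kernel argument.
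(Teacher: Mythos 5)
Your proposal is correct and is exactly the ``slight modification of the proof of \cref{prop5}'' that the paper invokes without writing out: the paper's own proof of \cref{prop8} is a one-line reference to \cref{prop5}, and your argument---splitting off Agent~1's $O(1/N)$ contribution to the empirical measure, using the conditional independence of agents $2,\ldots,N$ under the Markov policy $\pi^*$ together with the bound of \cite[Lemma A.2]{BuMa14}, and closing the induction via the continuity of $F(\Delta)=\Delta P^{\pi^*}_{t,\Delta}(g)$ and \cref{app-lemma1}---is precisely how that modification is meant to go. No gaps.
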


\begin{proof}
The proof can be done by slightly modifying the proof of Proposition~\ref{prop5}, and therefore will not be included here.\qed
\end{proof}

For each $N\geq1$, let $\{\hb^N(t)\}_{t\geq0}$ denote the state trajectory of the generic agent in the mean-field game (i.e., infinite-population limit) under policy $\tpi^{(N)}$; that is, $\hb^N(t)$ evolves as follows:
\begin{align}
\hb^N(0) \sim \lambda_0 \text{ and } \hb^N(t+1) \sim P^{\tpi^{(N)}}_{t,\Delta_t}(\,\cdot\,|\hb^N(t)). \nonumber
\end{align}
The cost function of this mean-field game is given by
\begin{align}
\hat{J}_{{\boldsymbol \Delta}}(\tpi^{(N)}) = \sum_{t=0}^{T+1} E\bigl[ C_t(\hb^N(t),\hat{a}^N(t),\Delta_t)\bigr], \label{nneq1}
\end{align}
where the actions at each time $t\geq0$ is generated according to the probability law
\begin{align}
\tilde{\pi}^{(N)}_t(d\hat{a}^N(t)|\hb^N(t)) = \tilde{\pi}^{(N)}_t(d\hat{a}^N(t)|\hat{y}^N(t),\ldots,\hat{y}^N(0)).\nonumber
\end{align}

The following result is a bit technical but very important for proving the main result. Its proof is quite long and complicated, and thus can be found in Appendix~\ref{app3}. 

\begin{proposition}\label{prop9}
For any $t\geq0$, we have
\begin{align}
\lim_{N\rightarrow\infty} \bigl| {\cal L}(\tb_1^N(t))(g_N) - {\cal L}(\hb^N(t))(g_N) \bigr| = 0 \nonumber
\end{align}
for any sequence $\{g_N\} \subset C_b(\sS_t)$ such that $\sup_{N\geq1}\|g_N\|<\infty$ and $\omega_g(r) \rightarrow 0$ as $r \rightarrow 0$, where
\begin{align}
\omega_g(r) = \sup_{\substack{s \in {\mathsf S} \\ y^t \in \sY^t}} \sup_{N\geq1} \sup_{\substack{m,m' \\ |m - m'| \leq r}} |g_N(s,m,y^t) - g_N(s,m',y^t)|. \nonumber
\end{align}
\end{proposition}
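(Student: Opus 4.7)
The plan is to prove the claim by induction on $t$. The base case $t=0$ is immediate, since both $\tb_1^N(0)$ and $\hb^N(0)$ are distributed as $\lambda_0$. For the inductive step, assume the claim at time $t$ and fix $\{g_N\}\subset C_b(\sS_{t+1})$ as in the hypothesis. For each $\mu\in\P(\sS_t)$ and $b=(s,m,y^t)\in\sS_t$ I would introduce the one-step averaging
\begin{align*}
h_N^{\mu}(b) := \int_{\sA}\int_{\sS_{t+1}} g_N(b')\,P_t(db'|b,a,\mu)\,\tpi^{(N)}_t(da|y^t).
\end{align*}
Conditioning on $(\tb_1^N(t),\tilde{\Delta}_t^{(N)})$ and on $\hb^N(t)$ respectively, a triangle inequality gives
\begin{align*}
\bigl|E[g_N(\tb_1^N(t+1))] - E[g_N(\hb^N(t+1))]\bigr|
&\leq E\bigl[|h_N^{\tilde{\Delta}_t^{(N)}}(\tb_1^N(t)) - h_N^{\Delta_t}(\tb_1^N(t))|\bigr] \\
&\quad + \bigl|E[h_N^{\Delta_t}(\tb_1^N(t))] - E[h_N^{\Delta_t}(\hb^N(t))]\bigr|,
\end{align*}
and the task reduces to driving each term to zero.

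For the first term, $\mu$ enters $h_N^\mu$ only through the $q$-transition on the $s$-coordinate and through the deterministic shift $\beta^t m(s,a,\mu_{\mathsf S})$ of the $m$-coordinate, so splitting the integrand in these two pieces and applying \cref{as2}(a) yields
\begin{align*}
|h_N^{\tilde{\Delta}}(b) - h_N^{\Delta}(b)| \leq \|g_N\|_\infty\,\omega_q\bigl(d_{BL}(\tilde{\Delta}_{\mathsf S},\Delta_{\mathsf S})\bigr) + \omega_g\bigl(\beta^t\omega_m(d_{BL}(\tilde{\Delta}_{\mathsf S},\Delta_{\mathsf S}))\bigr),
\end{align*}
where $\tilde{\Delta}_{\mathsf S},\Delta_{\mathsf S}$ denote the $\mathsf S$-marginals of the corresponding $\sX$-marginals. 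By \cref{prop8} and the continuous mapping theorem, $d_{BL}(\tilde{\Delta}_{t,\mathsf S}^{(N)},\Delta_{t,\mathsf S})\to 0$ in probability, and since the right-hand side is uniformly bounded in $N$ and tends to $0$ with its argument, dominated convergence kills the first term.

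For the second term, I would verify that $\{h_N^{\Delta_t}\}_N\subset C_b(\sS_t)$ itself satisfies the hypotheses of the proposition at time $t$: the family is uniformly bounded by $\sup_N\|g_N\|_\infty$; each $h_N^{\Delta_t}$ is continuous on $\sS_t$ via weak continuity of $p_t$, total-variation continuity of $l$, continuity of $m$, and the assumed weak continuity of $\tpi^{(N)}_t$; and, crucially, the uniform modulus of continuity in the $m$-coordinate is inherited, $\omega_{h_N}(r)\leq\omega_g(r)\to 0$ as $r\to 0$, because swapping $m$ for $m'$ inside the integrand shifts the second argument of $g_N$ by exactly $m-m'$. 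The inductive hypothesis applied to $\{h_N^{\Delta_t}\}_N$ then forces the second term to zero, closing the induction.

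The main technical hurdle lies in the bound on the first term. The $m$-coordinate of the post-transition state depends on $\mu$ deterministically through $m$, so no weak-convergence smoothing on the state is available, and the shift of magnitude $\beta^t\omega_m(d_{BL}(\cdot))$ must be absorbed by continuity of the test function itself in the $m$-variable. This explains both the somewhat unusual uniform modulus condition $\omega_g(r)\to 0$ imposed on the test-function class in the statement, and why the moduli $\omega_q,\omega_m$ in \cref{as2}(a) are indispensable to the argument.
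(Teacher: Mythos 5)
Your proposal is correct and follows essentially the same route as the paper's proof: induction on $t$, the one-step kernel $h_N^\mu(b)$ (the paper's $T_N(b,\Delta)$), the same two-term triangle-inequality split, the same $\omega_q$/$\omega_m$/$\omega_g$ bound showing uniform equicontinuity in the measure argument, and the same observation that the modulus condition $\omega_{h_N}(r)\le\omega_g(r)$ is inherited so the induction hypothesis applies to the averaged test functions. The only cosmetic difference is that you close the first term via $d_{BL}$-convergence in probability plus dominated convergence, whereas the paper invokes $E[\sup_{F}|F(\tilde{\Delta}_t^{(N)})-F(\Delta_t)|]\to 0$ for the uniformly bounded equicontinuous family; these are equivalent.
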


Using Proposition~\ref{prop9}, we now prove the following result.

\begin{theorem}\label{theorem3}
Let $\{\tpi^{(N)}\}_{N\geq1} \subset \tilde{\Pi}_1^c$ be an arbitrary sequence of policies for Agent~$1$. Then, we have
\begin{align}
\lim_{N \rightarrow \infty} \bigl| \hat{J}_1^{(N)}(\tpi^{(N)},\pi^*,\ldots,\pi^*) - \hat{J}_{{\boldsymbol \Delta}}(\tpi^{(N)}) \bigr| = 0, \nonumber
\end{align}
where $\hat{J}_{{\boldsymbol \Delta}}(\tpi^{(N)})$ is given in \eqref{nneq1}.
\end{theorem}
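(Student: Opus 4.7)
The plan is to express both cost functionals as time-indexed sums of single-agent expectations, control each summand by a triangle inequality that separates (i) sensitivity of the integrand to the mean-field argument from (ii) sensitivity of the law of Agent~1's state to the number of agents, and then apply \cref{prop9} to the latter. Concretely, I would first integrate out the action under the (deterministic, Markov in the new model) policy $\tpi^{(N)}$ and define the policy-averaged one-stage cost
$$\bar C_t^{(N)}(b,\Delta) := \int_{\sA} C_t(b,a,\Delta)\,\tpi_t^{(N)}(da\,|\,b),$$
which is a bounded measurable function on $\sS_t \times \P(\sS_t)$ by property~(I). By the tower property,
$$\hat J_1^{(N)}(\tpi^{(N)},\pi^*,\ldots,\pi^*) = \sum_{t=0}^{T+1} E\bigl[\bar C_t^{(N)}\bigl(\tb_1^N(t),\tilde\Delta_t^{(N)}\bigr)\bigr],\quad \hat J_{{\boldsymbol \Delta}}(\tpi^{(N)}) = \sum_{t=0}^{T+1} E\bigl[\bar C_t^{(N)}\bigl(\hb^N(t),\Delta_t\bigr)\bigr],$$
so it will suffice to show that each term of the difference vanishes as $N\to\infty$.

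For each fixed $t$, I would bound the $t$-th difference by
\begin{align*}
&E\bigl|\bar C_t^{(N)}\bigl(\tb_1^N(t),\tilde\Delta_t^{(N)}\bigr)-\bar C_t^{(N)}\bigl(\tb_1^N(t),\Delta_t\bigr)\bigr| \\
&\qquad +\bigl|E[\bar C_t^{(N)}(\tb_1^N(t),\Delta_t)]-E[\bar C_t^{(N)}(\hb^N(t),\Delta_t)]\bigr|.
\end{align*}
The first summand vanishes identically in the equivalent model of \cref{sub1sec2}, since $c_t$ does not depend on its measure argument there ($c_t\equiv 0$ for $t\le T$ and $c_{T+1}(x,a,\mu) = e^{\lambda m}$), so $\bar C_t^{(N)}(b,\cdot)$ is constant in $\Delta$. (In a more general setting, this term would be handled by combining \cref{prop8} with a uniform-in-$N$ continuity estimate on $c_t$.)

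For the second summand, I would invoke \cref{prop9} with test function $g_N(s,m,y^t) := \bar C_t^{(N)}\bigl((s,m,y^t),\Delta_t\bigr)$. The hypotheses are straightforward to verify: $\sup_N \|g_N\|_\infty \le \|c_t\|_\infty < \infty$ by~(I); for $t\le T$, $g_N\equiv 0$, trivially satisfying the modulus-of-continuity condition; and for $t=T+1$, $g_N(s,m,y^{T+1})=e^{\lambda m}$ is independent of $N,\,s,\,y^{T+1}$, and is Lipschitz in $m\in[0,L]$ with constant $\lambda e^{\lambda L}$, so $\omega_{g_N}(r)\to 0$ uniformly in $N$. \cref{prop9} then yields that this summand tends to $0$, and summing the finitely many $t\in\{0,\ldots,T+1\}$ gives the claim.

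The genuine difficulty is entirely delegated to \cref{prop9}, which is where one must prove that Agent~1's state law in the $N$-agent game and the mean-field generic agent's law match in the limit against test functions whose modulus of continuity in the cost coordinate $m$ is controlled uniformly in $N$. Once that is in hand, \cref{theorem3} follows by the two-term bookkeeping above, with the clean structure of the equivalent-model cost function $c_t$ making the mean-field-argument term identically zero.
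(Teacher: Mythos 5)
Your proof is correct and takes essentially the same route as the paper's: both reduce the claim to the single $t=T+1$ term (since $C_t\equiv 0$ for $t\le T$ and $C_{T+1}(b)=e^{\lambda m}$ depends on neither the action nor the mean-field argument), note that $e^{\lambda m}$ is Lipschitz in $m\in[0,L]$, and invoke \cref{prop9}. Your extra bookkeeping with $\bar C_t^{(N)}$ and the two-term triangle inequality is harmless but collapses to the paper's one-line argument.
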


\begin{proof}
Since $\C_t = 0$ for $t \leq T$, we set $t=T+1$. We have
\begin{align}
&\bigl| \hat{J}_1^{(N)}(\tpi^{(N)},\pi^*,\ldots,\pi^*) - \hat{J}_{{\boldsymbol \Delta}}(\tpi^{(N)}) \bigr| = \bigl| E\bigl[ \C_t(\tb_1^N(t)) \bigr] - E\bigl[ \C_t(\hb_1^N(t)) \bigr] \bigr|. \nonumber
\end{align}
Note that $\C_t(b) = \C_t((s,m,y_0,\ldots,y_t)) = e^{\lambda m}$, where $m \in [0,L]$, is Lipschitz. Therefore, the term in the above equation converges to zero by Proposition~\ref{prop9}.\qed
\end{proof}

As a corollary of Proposition~\ref{prop6} and Theorem~\ref{theorem3}, we obtain the following result.

\begin{corollary}\label{cor1}
We have
\begin{align}
\lim_{N \rightarrow \infty} \hat{J}_1^{(N)}(\tpi^{(N)},\pi^*,\ldots,\pi^*)
&\geq \inf_{\pi' \in \tilde{\Pi}} \hat{J}_{{\boldsymbol \Delta}}(\pi') = \hat{J}_{{\boldsymbol \Delta}}(\pi^*) \nonumber \\
&= \lim_{N \rightarrow \infty} \hat{J}_1^{(N)}(\pi^*,\pi^*,\ldots,\pi^*), \nonumber
\end{align}
where $\{\tpi^{(N)}\}_{N\geq1} \subset \tilde{\Pi}_1^c$ is an arbitrary sequence of policies for Agent~$1$.
\end{corollary}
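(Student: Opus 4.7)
The plan is to deduce Corollary 1 as a direct concatenation of Theorem 3, Proposition 6, and the mean-field equilibrium property $\pi^{*} \in \Psi(\bmu^{*})$. The whole argument is essentially bookkeeping; no fresh estimate is needed, and the main thing to check is that each infimum is taken over the correct class of policies.

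First, for the leftmost inequality. Theorem 3 gives $\bigl|\hat{J}_1^{(N)}(\tpi^{(N)},\pi^*,\ldots,\pi^*) - \hat{J}_{{\boldsymbol \Delta}}(\tpi^{(N)})\bigr| \to 0$ as $N\to\infty$. Since by hypothesis $\tpi^{(N)} \in \tilde{\Pi}_1^c \subseteq \tilde{\Pi}$ for every $N$, we have the trivial bound $\hat{J}_{{\boldsymbol \Delta}}(\tpi^{(N)}) \geq \inf_{\pi' \in \tilde{\Pi}} \hat{J}_{{\boldsymbol \Delta}}(\pi')$ for every $N$. Passing to the (lim)inf on both sides and using Theorem 3 yields
\[
\liminf_{N\to\infty} \hat{J}_1^{(N)}(\tpi^{(N)},\pi^*,\ldots,\pi^*) \;\geq\; \inf_{\pi' \in \tilde{\Pi}} \hat{J}_{{\boldsymbol \Delta}}(\pi'),
\]
which is the leftmost inequality of the corollary. (The symbol $\lim$ in the statement is most naturally read as $\liminf$, since Theorem 3 alone does not force $\{\hat{J}_{{\boldsymbol \Delta}}(\tpi^{(N)})\}$ to converge for an arbitrary sequence of policies.)

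Second, the middle equality $\inf_{\pi' \in \tilde{\Pi}} \hat{J}_{{\boldsymbol \Delta}}(\pi') = \hat{J}_{{\boldsymbol \Delta}}(\pi^{*})$ comes from the mean-field equilibrium property. By construction of $(\pi^{*},\bmu^{*})$ we have $\pi^{*} \in \Psi(\bmu^{*})$, i.e., $\pi^{*}$ is optimal for the POMDP induced by $\bmu^{*}$ over the entire policy class $\Pi$, so $\hat{J}_{{\boldsymbol \Delta}}(\pi^{*}) = \inf_{\pi' \in \Pi}\hat{J}_{{\boldsymbol \Delta}}(\pi')$. Since $\tilde{\Pi} \subseteq \Pi$ and $\pi^{*} \in \tilde{\Pi}$, the infimum over the smaller class $\tilde{\Pi}$ is sandwiched between $\inf_{\pi'\in\Pi}\hat{J}_{{\boldsymbol \Delta}}(\pi')$ and $\hat{J}_{{\boldsymbol \Delta}}(\pi^{*})$ and hence also equals $\hat{J}_{{\boldsymbol \Delta}}(\pi^{*})$. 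Finally, the rightmost equality $\hat{J}_{{\boldsymbol \Delta}}(\pi^{*}) = \lim_{N\to\infty}\hat{J}_1^{(N)}(\pi^{*},\ldots,\pi^{*})$ is exactly the conclusion of Proposition 6 applied to the symmetric $N$-tuple ${\boldsymbol \pi}^{(N,*)}$. Concatenating the three steps yields the full chain of (in)equalities claimed in Corollary 1; no serious obstacle arises since all the analytic content sits in Theorem 3 and Proposition 6, and those have already been proved.
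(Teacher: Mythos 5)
Your proof is correct and takes essentially the same route the paper intends: the paper presents \cref{cor1} as an immediate consequence of \cref{prop6} and \cref{theorem3} without further argument, and your concatenation of the two (together with the elementary bound $\hat{J}_{{\boldsymbol \Delta}}(\tpi^{(N)}) \geq \inf_{\pi' \in \tilde{\Pi}} \hat{J}_{{\boldsymbol \Delta}}(\pi')$ and the sandwich argument identifying the infima over $\tilde{\Pi}$ and $\Pi$ via the equilibrium property) is exactly the argument left implicit. Your remark that the limit on the left-hand side should be read as a $\liminf$ for an arbitrary sequence $\{\tpi^{(N)}\}$ is a fair and careful reading of the statement.
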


Now, we are ready to prove the main result of this section.

\begin{proof}{(Proof of Theorem~\ref{appr-thm})}
One can prove that for any policy ${\boldsymbol \pi}^{(N)} \in \tilde{{\bf \Pi}}^{(N)}$, we have
\begin{align}
\inf_{\pi^i \in \tilde{\Pi}_i} \hat{J}_i^{(N)}({\boldsymbol \pi}^{(N)}_{-i},\pi^i) = \inf_{\pi^i \in \tilde{\Pi}_i^c} \hat{J}_i^{(N)}({\boldsymbol \pi}^{(N)}_{-i},\pi^i) \nonumber
\end{align}
for each $i=1,\ldots,N$ (see the proof of \cite[Theorem 2.3]{SaBaRa17}). Hence, it is sufficient to consider weakly continuous policies in ${\bf \Pi}^{(N)}$ to establish the existence of $\varepsilon$-Nash equilibrium in the new model.

We prove that, for sufficiently large $N$, we have
\begin{align}
\hat{J}_i^{(N)}({\boldsymbol \pi}^{(N,*)}) &\leq \inf_{\pi^i \in \tilde{\Pi}_i^c} \hat{J}_i^{(N)}({\boldsymbol \pi}^{(N,*)}_{-i},\pi^i) + \varepsilon \label{eq13}
\end{align}
for each $i=1,\ldots,N$. As indicated earlier, since the transition probabilities and the one-stage cost functions are the same for all agents in the new game, it is sufficient to prove \eqref{eq13} for Agent~$1$ only. Given $\epsilon > 0$, for each $N\geq1$, let $\tpi^{(N)} \in \tilde{\Pi}_1^c$ be such that
\begin{align}
\hat{J}_1^{(N)} (\tpi^{(N)},\pi^*,\ldots,\pi^*) < \inf_{\pi' \in \tilde{\Pi}_1^c} \hat{J}_1^{(N)} (\pi',\pi^*,\ldots,\pi^*) + \frac{\varepsilon}{3}. \nonumber
\end{align}
Then, by Corollary~\ref{cor1}, we have
\begin{align}
\lim_{N\rightarrow\infty} \hat{J}_1^{(N)} (\tpi^{(N)},\pi^*,\ldots,\pi^*) &= \lim_{N\rightarrow\infty} \hat{J}_{{\boldsymbol \Delta}}(\tpi^{(N)}) \nonumber \\
&\geq \inf_{\pi'} \hat{J}_{{\boldsymbol \Delta}}(\pi') \nonumber \\
&= \hat{J}_{{\boldsymbol \Delta}}(\pi^*) \nonumber \\
&= \lim_{N\rightarrow\infty} \hat{J}_1^{(N)} (\pi^*,\pi^*,\ldots,\pi^*). \nonumber
\end{align}
Therefore, there exists $N(\varepsilon)$ such that for $N\geq N(\varepsilon)$, we have
\begin{align}
\inf_{\pi' \in \tilde{\Pi}_1^c} \hat{J}_1^{(N)} (\pi',\pi^*,\ldots,\pi^*) + \varepsilon &> \hat{J}_1^{(N)} (\tpi^{(N)},\pi^*,\ldots,\pi^*) + \frac{2\varepsilon}{3} \nonumber \\
&\geq \hat{J}_{{\boldsymbol \Delta}}(\pi^*) + \frac{\varepsilon}{3} \nonumber \\
&\geq \hat{J}_1^{(N)} (\pi^*,\pi^*,\ldots,\pi^*). \nonumber
\end{align}
The result then follows from Proposition~\ref{oapp-prop1}.\qed
\end{proof}

\section{Infinite Horizon Cost Function}\label{infinite}

In this section, we extend Theorem~\ref{appr-thm} to games with infinite-horizon risk-sensitive cost functions; that is, a generic agent's infinite-horizon risk-sensitive cost under the initial distribution $\kappa_0$ and the $N$-tuple of infinite-horizon policies ${\boldsymbol \pi}^{(N,\infty)}=(\pi^{(1,\infty)},\ldots,\pi^{(N,\infty)}) \in {\bf \Pi}^{(N)}$ is given by
\begin{align}
W_i^{(N,\infty)}({\boldsymbol \pi}^{(N,\infty)}) &=  E^{{\boldsymbol \pi}^{(N,\infty)}}\biggl[ e^{\lambda\sum_{t=0}^{\infty}\beta^{t}m(s_{i}^N(t),u_{i}^N(t),d^{(N)}_t)}\biggr],  \nonumber
\end{align}
where, for each Agent~$j$, $\pi^{(j,\infty)} = \{\pi^{(j,\infty)}_0,\pi^{(j,\infty)}_1,\ldots\}$ (i.e., infinitely many stochastic kernels). Note that, by \cite[Lemma 4.3]{SaBaRa18-r}, any infinite-horizon risk sensitive cost can be approximated by finite $T$-horizon one with the error bound $\theta \beta^{T+1}$ for some constant $\theta > 0$, which is independent of the policy ${\boldsymbol \pi}^{(N,\infty)}$; i.e.,
\begin{align}
\big|W_i^{(N,\infty)}({\boldsymbol \pi}^{(N,\infty)}) - W_i^{(N)}({\boldsymbol \pi}^{(N,\infty)})\big| \leq \theta \beta^{T+1}. \label{uni-app}
\end{align}
Then, the following theorem is a consequence of \eqref{uni-app} and Theorem~\ref{appr-thm}.

\begin{theorem}\label{appr-thm-inf}
For any $\varepsilon>0$, choose $T$ such that $\theta \beta^{T+1} < \frac{\varepsilon}{3}$ and let $N(\frac{\varepsilon}{3})$ be the constant in Theorem~\ref{appr-thm} for the finite horizon $T$. Then, for $N\geq N(\frac{\varepsilon}{3})$, the policy ${\boldsymbol \pi}^{(N,\infty)}$ is an $\varepsilon$-Nash equilibrium for the infinite-horizon risk-sensitive game with $N$ agents, where ${\boldsymbol \pi}^{(N,\infty)} = (\pi^{\infty},\ldots,\pi^{\infty})$,
$$\pi^{\infty} = \big\{\underbrace{\pi_0^*, \ldots,\pi_T^*}_{\text{$T+1$-times}} , \pi_{T+1},\pi_{T+2}, \ldots\big\},$$
$\pi^* = \{\pi_t^*\}_{t=0}^T$ is the policy in the mean-field equilibrium of the $T$-horizon game, and $\{\pi_t\}_{t=T+1}^{\infty}$ is some arbitrary policy.
\end{theorem}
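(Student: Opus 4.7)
The plan is to prove this via a straightforward three-term triangle chaining argument that exploits two ingredients already at hand: the uniform horizon-truncation bound \cref{uni-app}, which has error independent of the policy, and the finite-horizon approximate-Nash result \cref{appr-thm}. Given $\varepsilon > 0$, I would fix $T$ with $\theta \beta^{T+1} < \varepsilon/3$ and apply \cref{appr-thm} at tolerance $\varepsilon/3$ to obtain $N(\varepsilon/3)$. The task is then to show that, for $N \ge N(\varepsilon/3)$ and every agent $i$, any infinite-horizon deviation $\pi^{(i,\infty)} \in \Pi_i$ satisfies
$W_i^{(N,\infty)}({\boldsymbol \pi}^{(N,\infty)}) \le W_i^{(N,\infty)}({\boldsymbol \pi}^{(N,\infty)}_{-i}, \pi^{(i,\infty)}) + \varepsilon,$
which upon taking infimum yields the claim.

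The key structural observation I would use is that the truncated cost $W_i^{(N)}$ depends only on the first $T+1$ stochastic kernels of every agent's policy. Hence for the prescribed infinite-horizon profile ${\boldsymbol \pi}^{(N,\infty)} = (\pi^{\infty},\ldots,\pi^{\infty})$, whose first $T{+}1$ components coincide with the mean-field equilibrium policy $\pi^* = \{\pi_0^*,\ldots,\pi_T^*\}$ of the $T$-horizon game, one has
$W_i^{(N)}({\boldsymbol \pi}^{(N,\infty)}) = W_i^{(N)}({\boldsymbol \pi}^{(N,*)}),$
and, for any infinite-horizon deviation $\pi^{(i,\infty)} = \{\pi^{(i)}_0, \pi^{(i)}_1, \ldots\}$ by agent~$i$, its truncation $\pi^{(i,T)} := \{\pi_0^{(i)},\ldots,\pi_T^{(i)}\}$ satisfies
$W_i^{(N)}({\boldsymbol \pi}^{(N,\infty)}_{-i}, \pi^{(i,\infty)}) = W_i^{(N)}({\boldsymbol \pi}^{(N,*)}_{-i}, \pi^{(i,T)}).$

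With these identifications, the chaining is immediate. Starting from the left, I apply \cref{uni-app} to pass from $W_i^{(N,\infty)}({\boldsymbol \pi}^{(N,\infty)})$ to $W_i^{(N)}({\boldsymbol \pi}^{(N,*)})$ at cost $\varepsilon/3$; then invoke \cref{appr-thm} at tolerance $\varepsilon/3$ to bound this by $\inf_{\pi^i \in \tilde{\Pi}_i} W_i^{(N)}({\boldsymbol \pi}^{(N,*)}_{-i}, \pi^i) + \varepsilon/3$; then bound that infimum by evaluating at the particular truncated policy $\pi^{(i,T)}$, which equals $W_i^{(N)}({\boldsymbol \pi}^{(N,\infty)}_{-i}, \pi^{(i,\infty)})$; and finally apply \cref{uni-app} once more to lift this back to $W_i^{(N,\infty)}({\boldsymbol \pi}^{(N,\infty)}_{-i}, \pi^{(i,\infty)}) + \varepsilon/3$. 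Summing the three $\varepsilon/3$ error terms gives $\varepsilon$, and taking infimum over $\pi^{(i,\infty)}$ completes the step.

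There is essentially no hard step here: all the heavy lifting lies in \cref{appr-thm} and in the policy-uniform truncation bound \cref{uni-app}, both of which the theorem is permitted to assume. The only mild subtlety is to ensure that the truncation bound applied to the deviated profile genuinely has error independent of the deviating policy $\pi^{(i,\infty)}$ --- this is where it matters that \cref{uni-app} holds for \emph{every} policy ${\boldsymbol \pi}^{(N,\infty)}$, which is exactly what \cite[Lemma~4.3]{SaBaRa18-r} provides. If instead the bound depended on the policy, the argument would fail to give a uniform $\varepsilon$-Nash conclusion.
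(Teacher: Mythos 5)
Your proof is correct and is precisely the argument the paper intends (the paper itself only remarks that the theorem ``is a consequence of \cref{uni-app} and \cref{appr-thm}'' without writing out the chaining): the three-way split of $\varepsilon$, the policy-uniform truncation bound applied on both ends, and the observation that the $T$-horizon cost depends only on the first $T+1$ kernels of each policy are exactly what is needed. The only cosmetic point is that deviations should range over $\tilde{\Pi}_i$ (as in \cref{def1}) rather than $\Pi_i$, which does not affect the argument.
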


%{\color{red}

\section{An Example}\label{example}

\def\sS{{\mathsf S}}

In this section, we consider an additive noise model to illustrate our results. In this model, the state and observation dynamics of a generic agent for the infinite-population game are given respectively by
\begin{align}
s(t+1) &= \int_{\sS} f(s(t),u(t),s)  d_t(ds) + g(s(t),u(t))  w(t) \nonumber \\
&\eqqcolon F(s(t),u(t),d_t) + g(s(t),u(t)) w(t) \nonumber \\
\intertext{and}
g(t) &= h(s(t)) +  v(t),  \nonumber
\end{align}
where $s(t) \in \sS$, $g(t) \in \sY$, $u(t) \in \sA$, $w(t) \in \sW$, and $v(t) \in \sV$. Here, we assume that $\sS = \sY = \sW =\sV = \R$, $\sA \subset \R$, and $\{w(t)\}$ and $\{v(t)\}$  are sequences of i.i.d. standard normal random variables independent of each other. The one-stage cost function of a generic agent is given by
\begin{align}
m(s(t),u(t),d_t) = \int_{\sS} b(s(t),u(t),s) \phantom{i} d_t(ds), \nonumber
\end{align}
for some measurable function $b: \sS \times \sA \times \sS \rightarrow [0,\infty)$.

This model is the infinite-population limit of the $N$-agent game model with state and observation dynamics
\begin{align}
s_i^N(t+1) &= \frac{1}{N} \sum_{j=1}^N f(s_i^N(t),u_i^N(t),s_j^N(t)) + g(s_i^N(t),u_i^N(t)) w_i^N(t) \nonumber \\ y_i^N(t) &=  h(s_i^N(t)) + v_i^N(t) \nonumber
\end{align}
and the one-stage cost function
\begin{align}
m(s_i^N(t),u_i^N(t),d_t^{(N)}) &= \frac{1}{N} \sum_{j=1}^N b(s_i^N(t),u_i^N(t),s_j^N(t)). \nonumber
\end{align}

For this model, Assumption~\ref{as1} holds with $v(s) = 1 + s^2$ and $\alpha = \max\{1 + \|f\|^2, L\}$ under the following conditions: (i) $\sA$ is compact, (ii) $b$ is continuous and bounded, (iii) $g$ is continuous, and $f$ is bounded and continuous, (iv) $\sup_{u \in \sA} g^2(s,u) \leq L s^2$ for some $L>0$, (v) $h$ is continuous and bounded. Note that $\|f\|$ is defined as
\begin{align}
\|f\| \coloneqq \sup_{(s,u,s') \in \sS \times \sA \times \sS} |f(s,u,s')|. \nonumber
\end{align}
Moreover, Assumption~\ref{as2}-(a) holds under the following conditions: (vi) $b(s,u,s')$ is (uniformly) Lipschitz in $s'$, (vii) $f(s,u,s')$ is (uniformly) Lipschitz in $s'$, and (viii) $g$ is bounded and $\inf_{(s,u) \in \sS \times \sA} |g(s,u)| > 0$. For the proofs of these facts, we refer the reader to \cite[Section 7]{SaBaRa18}.  

In order to have Assumption~\ref{as2}-(b), we need to assume that $\sA$ is convex. In addition, suppose that $q(ds'|s,a,\mu) = \varrho(s'|s,a,\mu) \nu(ds')$ and $l(dy|s) = \zeta(y|s) \nu(dy)$, where $\nu$ denotes the Lebesgue measure. Assume that both $\varrho$ and $\zeta$ are continuous and bounded, and $\varrho$ and $m$ are strictly convex in $a$. For the justification of Assumption~\ref{as2}-(b) in this case, we refer the reader to Section~\ref{continuity}.

\begin{remark}
Note that Assumption~\ref{as1} also holds for finite models (i.e., $\sS$, $\sA$, and $\sY$ are finite) without any structure on the dynamics of the state and observation if the transition probability and the one-stage cost function are continuous with respect to the mean-field term. Moreover, Assumption~\ref{as2}-(a) holds if the transition probability and the one-stage cost function are Lipschitz continuous with respect to the mean-field term. In finite models, the only missing condition is the existence of deterministic policy in mean-field equilibrium. This can be established if we have the uniqueness condition in (\ref{unique}).
\end{remark}
%}

\section{Conclusion}\label{conc}

This paper has considered discrete-time finite-horizon partially-observed risk-sensitive mean-field games. We have first constructed an equivalent game model whose states are
the state of the original model plus the one-stage costs incurred up to that time. In this new model, the finite-horizon risk-sensitive cost function can be written
in an additive-form as in the risk-neutral case. Then, letting the number of agents go to infinity, we have first established the existence of a mean-field equilibrium in the limiting mean-field game problem. We have then shown that the policy in the mean-field equilibrium constitutes an approximate Nash equilibrium for similarly structured games with a sufficiently large number of agents. Finally, we have extended our results to the case of infinite-horizon cost functions.

\section{Appendix}

\subsection{Continuous and Deterministic Equilibrium Policy}\label{continuity}

\def\sS{{\mathsf S}}

A common way to establish Assumption~\ref{as2}-(b) is as follows. Suppose that, for the measure-flow $\bmu$ in mean-field equilibrium, there exists a unique minimizer $a_z \in \sA$ of
\begin{align}
C_t^{\bmu}(z,\,\cdot\,) + \int_{\sZ} J_{*,t+1}^{\bmu}(z') \eta_t^{\bmu}(dz'|z,\,\cdot\,) = R_t(z,\,\cdot\,), \label{unique}
\end{align}
for each $z \in \sZ$ and for all $t$. In addition, suppose that $F_t:\sZ \times \sA \times \sY \rightarrow \sZ$ in \eqref{eq:non_filtering} is continuous.
Note that uniqueness conditions analogous to \eqref{unique} are quite common in the mean field literature (see, e.g., \cite[Assumption 4]{GoMoSo10}, \cite[Assumption A5]{SeCa16}, \cite[Assumption H5]{HuMaCa06}, \cite[Assumption A9]{SeCa16-3}).

Under the condition of a unique minimizer to \eqref{unique}, one can prove that the policy $\varphi$ in Proposition~\ref{prop1} is deterministic and weakly continuous (see \cite[Remark 5.2]{SaBaRa18}). Indeed, fix any $t\geq0$ and consider the policy $\varphi_t$ at time $t$ in $\varphi$. By the unique minimizer condition \eqref{unique}, we must have $\varphi_t(\,\cdot\,|z) = \delta_{f_t(z)}(\,\cdot\,)$ for some deterministic function $f_t: \sZ\rightarrow \sA$ which minimizes $R_{t}(z,\,\cdot\,)$; that is, $\min_{a \in \sA} R_{t}(z,a) = R_{t}(z,f_t(z))$ for all $z \in \sZ$. If $f_t$ is continuous, then $\varphi_t$ is also weakly continuous. Hence, in order to prove the assertion, it is sufficient to prove that $f_t$ is continuous. Suppose that $z_n \rightarrow z$ in $\sZ$. Note that $l_t(\,\cdot\,) = \min_{a \in \sA} R_{t}(\,\cdot\,,a)$ is continuous. Therefore, every accumulation point of the sequence $\{f_t(z_n)\}_{n\geq1}$ must be a minimizer for $R_{t}(z,\,\cdot\,)$. Since there exists a unique minimizer $f_t(z)$ of $R_{t}(z,\,\cdot\,)$, the set of all accumulation points of $\{f_t(z_n)\}_{n\geq1}$ must be the singleton $\{f_t(z)\}$. This implies that $f_t(z_n)$ converges to $f_t(z)$ since $\sA$ is compact. Hence, $f_t$ is continuous.

Recall that the mean-field equilibrium policy is given by
\begin{align}
\pi_t(\,\cdot\,|g(t)) = \varphi_t(\,\cdot\,|i(g(t))). \nonumber
\end{align}
Hence, $\pi$ is also a deterministic policy as $i$ is a deterministic function. The function $i$ can be generated recursively using $F_t:\sZ \times \sA \times \sY \rightarrow \sZ$ ($t\geq0$) in \eqref{eq:non_filtering} and the policy $\varphi$. Since $F_t$ is continuous for all $t$ and $\varphi$ is also weakly continuous, we can conclude that the mean-field policy $\pi$ is deterministic and weakly continuous. Hence, Assumption~\ref{as2}-(b) holds.

For instance, we can prove the existence of a unique minimizer to \eqref{unique} and the continuity of $F_t$ for all $t$ under the following conditions on the system components. Suppose that $\sS = \R^d$, $\sY = \R^p$, and $\sA \subset \R^m$ is convex. In addition, suppose that $q(ds'|s,a,\mu) = \varrho(s'|s,a,\mu) \nu(ds')$ and $l(dy|s) = \zeta(y|s) \nu(dy)$, where $\nu$ denotes the Lebesgue measure. Assume that both $\varrho$ and $\zeta$ are continuous and bounded, and $\varrho$ and $m$ are strictly convex in $a$, where $m$ is the one-stage cost function of the original problem.
%Under these assumptions, (\cref{unique}) and the continuity of $F_t$ for all $t\geq0$ hold. Indeed,
Then we have
$
H_t(dy|z,a) = h_t(y|z,a) \nu(dy), \nonumber
$
where $h_t(y|z,a)$ is given by
\begin{align}
h_t(y|z,a) &= \int_{\sS} \int_{\sS} \zeta(y|s) \varrho(s|s',a,\mu_t) \nu(ds) z_1(ds'), \nonumber
\end{align}
where $z_1(ds') = z(ds' \times [0,L])$. Similarly, we have
\begin{align}
&F_t(dx|z,a,y) = \frac{\int_{\sX} f_t(s|s',a,y) \nu(ds) \otimes \delta_{m'+\beta^t m(s',a,\mu_t)}(dm) z(ds',dm')}{h_t(y|z,a)}, \nonumber
\end{align}
where $f_t(s|s',a,y)$ is given by
$
f_t(s|s',a,y) = \zeta(y|s) \varrho(s|s',a,\mu_t). \nonumber
$
Then, one can prove that $F_t$ is continuous. To show uniqueness of the minimizer to \eqref{unique}, note that
\begin{align}
J_{*,t+1}^{\bmu}(z) &=  \inf_{\varphi \in \Phi} E^{\varphi} \biggl[ \sum_{k=t+1}^{T+1} C_k^{\bmu}(z(k),a(k)) \bigg| z(t+1) = z  \biggr] \nonumber \\
&=\inf_{\pi \in \Pi} E^{\pi} \biggl[ \sum_{k=t+1}^{T+1} c_k(x(k),a(k),\mu_k) \bigg| x(t+1) \sim z  \biggr] \nonumber \\
&= \int_{\sX} V_{*,t+1}(x) z(dx), \nonumber
\end{align}
where
\begin{align}
V_{*,t+1}(x) &=  \inf_{\pi \in \Pi} E^{\pi} \biggl[ \sum_{k=t+1}^{T+1} c_k(x(k),a(k),\mu_k) \bigg| x(t+1) = x  \biggr]. \nonumber
\end{align}
Hence, for any $a \in \sA$, \eqref{unique} can be written as
\begin{align}
&\int_{\sX} c_t(x,a,\mu_t) z(dx) + \int_{\sY} \int_{\sX} V_{*,t+1}(x) F_t(z,a,y)(dx) H_t(dy|z,a) \nonumber \\
&= \hspace{-5pt}\int_{\sX} \hspace{-5pt} c_t(x,a,\mu_t) z(dx)\nonumber \\
&\phantom{xxxxx} + \int_{\sX} \int_{\sS} \hspace{-1pt} V_{*,t+1}(s,m'+\beta^t m(s',a,\mu_t)) \varrho(s|s',a,\mu_t) \nu(ds) z(ds',dm').  \nonumber
\end{align}
Note that
\begin{align}
V_{*,t+1}(s,m) = e^{\lambda m} \inf_{\pi \in \Pi}  e^{\lambda E[ \sum_{k=t+1}^{T} \beta^k m(s(k),u(k),\mu_{1,k}) | s(k) = s] }, \nonumber
\end{align}
and thus $V_{*,t+1}(s,m)$ is strictly convex in $m$. Since $m$ and $\varrho$ are strictly convex in $a$, the last expression is also strictly convex in $a$. Hence, there exists a unique minimizer $a_z \in \sA$ for \eqref{unique}.

\subsection{Proof of Proposition~\ref{prop9}}\label{app3}

We prove the result by induction on $t$. The claim trivially holds for $t=0$ as ${\cal L}(\tb_1^N(0)) = {\cal L}(\hb^N(0)) = \lambda_0$ for all $N\geq1$. Suppose that the claim holds for $t$ and consider $t+1$. Set $\sup_{N\geq1} \|g_N\| \eqqcolon L <\infty$ and define
\begin{align}
T_N(b,\Delta) \coloneqq \int_{\sA \times \sB_{t+1}} g_N(b') P_t(db'|b,a,\Delta) \tpi_t^{(N)}(da|b). \nonumber
\end{align}
We can write
\begin{align}
&\bigl| {\cal L}(\tb_1^N(t+1))(g_N) - {\cal L}(\hb^N(t+1))(g_N) \bigr| \nonumber \\
&\phantom{xxxxx}=\biggl| \int_{\sB_t \times \P(\sB_t)} T_N(b,\Delta) {\cal L}(\tb_1^N(t),\tilde{\Delta}_t^{(N)})(db,d\Delta) \nonumber \\
&\phantom{xxxxxxxxxxxxxxxxxxxxxxxx}- \int_{\sB_t \times \P(\sB_t)} T_N(b,\Delta) {\cal L}(\hb^N(t),\delta_{\Delta_t})(db,d\Delta) \biggr|  \nonumber \\
&\phantom{xxxxx}= \bigl| {\cal L}(\tb_1^N(t),\tilde{\Delta}_t^{(N)})(T_N) - {\cal L}(\hb^N(t),\delta_{\Delta_t})(T_N) \bigr|. \nonumber 
\end{align}
Note that, for any $b \in \sB_t$ and $(\Delta,\Delta') \in \P(\sB_t)^2$, we have
\begin{align}
&|T_N(b,\Delta) - T_N(b,\Delta')| \leq \omega_g\bigl(\omega_m(d_{BL}(\Delta_1,\Delta'_1))\bigr) +  L \omega_q(d_{BL}(\Delta_1,\Delta'_1)), \nonumber
\end{align}
where $\Delta_1$ is the marginal distribution of $s$ under $\Delta$ (recall that $b=(s,m,y^t)$). Hence the family $\{T_N(b,\,\cdot\,): b \in \sB_t, N\geq1\}$ is uniformly bounded and equi-continuous. Moreover, for any $\Delta \in \P(\sB_t)$, we have
\begin{align}
\omega_{T,\Delta}(r) &\coloneqq \sup_{s,y^t} \sup_{N\geq1} \sup_{\substack{m,m' \\ |m - m'| \leq r}} |T_N(s,m,y^t,\Delta) - T_N(s,m',y^t,\Delta)| \nonumber \\
&\leq \sup_{s,y^t} \sup_{N\geq1} \sup_{\substack{m,m' \\ |m - m'| \leq r}} \omega_g(|m-m'|) = \omega_g(r). \nonumber
\end{align}
Hence, $\omega_{T,\Delta}(r) \rightarrow 0$ as $r \rightarrow 0$. Therefore, $\{T_N\} \subset C_b(\sB_t\times\P(\sB_t))$ is a sequence of functions such that the family $\bigl\{T_N(b,\,\cdot\,): b \in \sB_t, N\geq1)\bigr\}$ is equi-continuous, $\sup_{N\geq1}\|T_N\|<\infty$, and $\omega_{T,\Delta}(r) \rightarrow 0$ as $r \rightarrow 0$ for any $\Delta \in \P(\sB_t)$.
We now prove that
\begin{align}
\lim_{N\rightarrow\infty} \bigl| {\cal L}(\tb_1^N(t),\tilde{\Delta}_t^{(N)})(T_N) - {\cal L}(\hb^N(t),\delta_{\Delta_t})(T_N) \bigr| = 0, \label{neweq2}
\end{align}
which would then complete the proof. Indeed, we have
\begin{align}
&\bigl| {\cal L}(\tb_1^N(t),\tilde{\Delta}_t^{(N)})(T_N) - {\cal L}(\hb^N(t),\delta_{\Delta_t})(T_N) \bigr| \nonumber \\
&\phantom{xxxxxx}\leq \biggl| \int_{\sB_t \times \P(\sB_t)} T_N(b,\Delta) {\cal L}(\tb_1^N(t),\tilde{\Delta}_t^{(N)})(db,d\Delta) \nonumber \\
&\phantom{xxxxxxxxxxxxxxxxxxx}- \int_{\sB_t \times \P(\sB_t)} T_N(b,\Delta) {\cal L}(\tb_1^N(t),\delta_{\Delta_t})(db,d\Delta) \biggr| \nonumber \\
&\phantom{xxxxxx}+ \biggl| \int_{\sB_t \times \P(\sB_t)} T_N(b,\Delta) {\cal L}(\tb_1^N(t),\delta_{\Delta_t})(db,d\Delta)  \nonumber \\
&\phantom{xxxxxxxxxxxxxxxxxxx}- \int_{\sB_t \times \P(\sB_t)} T_N(b,\Delta) {\cal L}(\hb^N(t),\delta_{\Delta_t})(db,d\Delta) \biggr|. \label{eq12}
\end{align}
First, note that since the family $\{T_N(\,\cdot\,,\Delta_t)\}_{N\geq1} \subset C_b(\sB_t)$ satisfies the hypothesis of the proposition and the proposition is true for $t$, by induction hypothesis, we have
\begin{align}
\lim_{N\rightarrow\infty} \biggl| \int_{\sB_t} T_N(b,\Delta_t) {\cal L}(\tb_1^N(t))(db) - \int_{\sB_t} T_N(b,\Delta_t) {\cal L}(\hb^N(t))(db) \biggr|=0. \nonumber
\end{align}
Hence, the second term in (\ref{eq12}) converges to zero as $N\rightarrow\infty$.

Now, let us consider the first term in (\ref{eq12}). To that end, define ${\cal F} \coloneqq \bigl\{T_N(b,\,\cdot\,): b \in \sB_t, N\geq1)\bigr\}$. Note that ${\cal F}$ is a uniformly bounded and equi-continuous family of functions on $\P(\sB_t)$, and therefore
\begin{align}
\lim_{N\rightarrow\infty} E\biggl[ \sup_{F \in {\cal F}} \bigl| F(\tilde{\Delta}_t^{(N)}) - F(\Delta_t) \bigr| \biggr] = 0 \nonumber
\end{align}
as ${\cal L}(\tilde{\Delta}_t^{(N)}) \rightarrow {\cal L}(\Delta_t)$ weakly. Then, we have
\begin{align}
&\lim_{N\rightarrow\infty}\biggl| \int_{\sB_t \times \P(\sB_t)} T_N(b,\Delta) {\cal L}(\tb_1^N(t),\tilde{\Delta}_t^{(N)})(db,d\Delta) \nonumber \\
&\phantom{xxxxxxxxxxxxxxxxxx}- \int_{\sB_t \times \P(\sB_t)} T_N(b,\Delta) {\cal L}(\tb_1^N(t),\delta_{\Delta_t})(db,d\Delta) \biggr| \nonumber \\
&\leq \lim_{N\rightarrow\infty} \int_{\sB_t} \biggl| \int_{\P(\sB_t)} T_N(b,\Delta) {\cal L}(\tilde{\Delta}_t^{(N)}|\tb_1^N(t))(d\Delta|b) \nonumber \\
&\phantom{xxxxxxxxxxxxxxxxxx}- \int_{\P(\sB_t)} T_N(b,\Delta) {\cal L}(\delta_{\Delta_t})(d\Delta) \biggr| {\cal L}(\tb_1^N(t))(db) \nonumber \\
&\leq \lim_{N\rightarrow\infty} E\biggl[ E\biggl[ \bigl|T_N(\tb_1^N(t),\tilde{\Delta}_t^{(N)}) - T_N(\tb_1^N(t),\Delta_t) \bigl| \biggl| \tb_1^N(t) \biggr] \biggr] \nonumber \\
&\leq \lim_{N\rightarrow\infty} E\biggl[ \sup_{F \in {\cal F}} \bigl|F(\tilde{\Delta}_t^{(N)}) - F(\Delta_t) \bigl| \biggr] \nonumber \\
&= 0. \nonumber
\end{align}
This completes the proof.

\begin{acknowledgements}
This work was partly supported by The Scientific and Technological Research Council of Turkey (T\"{U}B\.{I}TAK) B\.{I}DEB 2232 Research Grant.
\end{acknowledgements}


\begin{thebibliography}{10}
\providecommand{\url}[1]{{#1}}
\providecommand{\urlprefix}{URL }
\expandafter\ifx\csname urlstyle\endcsname\relax
  \providecommand{\doi}[1]{DOI~\discretionary{}{}{}#1}\else
  \providecommand{\doi}{DOI~\discretionary{}{}{}\begingroup
  \urlstyle{rm}\Url}\fi

\bibitem{AdJoWe15}
Adlakha, S., Johari, R., Weintraub, G.: Equilibria of dynamic games with many
  players: Existence, approximation, and market structure.
\newblock Journal of Economic Theory \textbf{156}, 269--316 (2015)

\bibitem{AlBo06}
Aliprantis, C., Border, K.: Infinite Dimensional Analysis.
\newblock Berlin, Springer, 3rd ed. (2006)

\bibitem{Bas00}
Ba\c{s}ar, T.: Risk-averse designs: From exponential cost to stochastic games.
\newblock In: T.~Djaferis, I.~Schick (eds.) System Theory: Modeling, Analysis
  and Control, pp. 131--144. Kluwer (2000)

\bibitem{Bar78-b}
Ba{\c{s}}ar, T.: Decentralized multicriteria optimization of linear stochastic
  systems.
\newblock IEEE. Trans. Autom. Control \textbf{23}(2), 233--243 (1978)

\bibitem{Bar78-a}
Ba{\c{s}}ar, T.: Two-criteria {LQG} decision problems with one-step delay
  observation sharing pattern.
\newblock Information and Control \textbf{38}(1), 21--50 (1978)

\bibitem{BaRi14}
Bauerle, N., Rieder, U.: More risk-sensitive {M}arkov decision processes.
\newblock Math. Oper. Res. \textbf{39}(1), 105--120 (2014)

\bibitem{BeFrPh13}
Bensoussan, A., Frehse, J., Yam, P.: Mean {F}ield {G}ames and {M}ean {F}ield
  {T}ype {C}ontrol {T}heory.
\newblock Springer, New York (2013)

\bibitem{Bil99}
Billingsley, P.: Convergence of Probability Measures, 2nd edn.
\newblock New York: Wiley (1999)

\bibitem{Bis15}
Biswas, A.: Mean field games with ergodic cost for discrete time {M}arkov
  processes.
\newblock arXiv:1510.08968 (2015)

\bibitem{Bog07}
Bogachev, V.: Measure Theory: Volume II.
\newblock Springer (2007)

\bibitem{BuMa14}
Budhiraja, A., Majumder, A.: Long time results for a weakly interacting
  particle system in discrete time.
\newblock Stochastic Analysis and Applications \textbf{33}(3), 429--463 (2015)

\bibitem{CaKi17}
{Caines}, P.E., {Kizilkale}, A.C.: $\epsilon$-{N}ash equilibria for partially
  observed {LQG} mean field games with a major player.
\newblock IEEE. Trans. Autom. Control \textbf{62}(7), 3225--3234 (2017)

\bibitem{Ca11}
Cardaliaguet, P.: Notes on {M}ean-field {G}ames (2011)

\bibitem{CaDe13}
Carmona, R., Delarue, F.: Probabilistic analysis of mean-field games.
\newblock SIAM J. Control Optim. \textbf{51}(4), 2705--2734 (2013)

\bibitem{DeCa18}
Carmona, R., Delarue, F.: Probabilistic Theory of Mean Field Games with
  Applications I : Mean Field FBSDEs, Control, and Games.
\newblock Springer (2018)

\bibitem{CaLa2015}
Carmona, R., Lacker, D.: A probabilistic weak formulation of mean field games
  and applications.
\newblock Ann. Appl. Prob. \textbf{25}(3), 1189--1231 (2015)

\bibitem{SeCa14}
\c{S}en, N., Caines, P.: Mean field games with partially observed major player
  and stochastic mean field.
\newblock In: CDC 2014. Los Angeles (2014)

\bibitem{SeCa15}
\c{S}en, N., Caines, P.: $\epsilon$-{N}ash equilibria for a partially observed
  mean field game with major player.
\newblock In: ACC 2015. Chicago (2015)

\bibitem{SeCa16-3}
\c{S}en, N., Caines, P.: Mean field game theory with a partially observed major
  agent.
\newblock SIAM J. Control Optim. \textbf{54}(6), 3174--3224 (2016)

\bibitem{SeCa16-2}
\c{S}en, N., Caines, P.: Nonlinear filtering theory for {M}c{K}ean-{V}lasov
  type stochastic differential equations.
\newblock SIAM J. Control Optim. \textbf{54}(1), 153--174 (2016)

\bibitem{SeCa16}
\c{S}en, N., Caines, P.: On mean field games and nonlinear filtering for agents
  with individual-state partial observations.
\newblock In: ACC 2016. Boston (2016)

\bibitem{FiCa15}
D.Firoozi, Caines, P.: $\varepsilon$-{N}ash equilibria for partially observed
  lqg mean field games with major agent: Partial observations by all agents.
\newblock In: CDC 2015. Japan (2015)

\bibitem{DjTe16}
Djehiche, B., Tembine, H.: Risk-sensitive mean-field type control under partial
  observation.
\newblock In: F.~Benth, G.D. Nunno (eds.) Stochastics of Environmental and
  Financial Economics, pp. 243--263. Springer International Publishing, Cham
  (2016)

\bibitem{Dud89}
Dudley, R.M.: Real Analysis and Probability.
\newblock Cambridge University Press (2004)

\bibitem{FeKaZg16}
Feinberg, E., Kasyanov, P., Zgurovsky, M.: Partially observable total-cost
  markov decision processes with weakly continuous transition probabilities.
\newblock Mathematics of Operations Research \textbf{41}(2), 656--681 (2016)

\bibitem{FiCa19}
{Firoozi}, D., {Caines}, P.E.: $\epsilon$-{N}ash equilibria for major minor
  {LQG} mean field games with partial observations of all agents.
\newblock arXiv:1810.04369 (2019)

\bibitem{GoMoSo10}
Gomes, D., Mohr, J., Souza, R.: Discrete time, finite state space mean field
  games.
\newblock J. Math. Pures Appl. \textbf{93}, 308--328 (2010)

\bibitem{GoSa14}
Gomes, D., Sa\'{u}de, J.: Mean field games models - a brief survey.
\newblock Dyn. Games Appl. \textbf{4}(2), 110--154 (2014)

\bibitem{Her89}
Hern\'andez-Lerma, O.: Adaptive {M}arkov Control Processes.
\newblock Springer-Verlag (1989)

\bibitem{HeLa96}
Hern\'andez-Lerma, O., Lasserre, J.: Discrete-Time {M}arkov Control Processes:
  Basic Optimality Criteria.
\newblock Springer (1996)

\bibitem{Hin70}
Hinderer, K.: Foundations of Non-stationary Dynamic Programming with Discrete
  Time Parameter.
\newblock Springer-Verlag (1970)

\bibitem{HuWa14}
Huang, J., Wang, S.: A class of mean-field {LQG} games with partial
  information.
\newblock arXiv:1403.5859v1 (2014)

\bibitem{Hua10}
Huang, M.: Large-population {LQG} games involving major player: The {N}ash
  certainity equivalence principle.
\newblock SIAM J. Control Optim. \textbf{48}(5), 3318--3353 (2010)

\bibitem{HuCaMa06}
Huang, M., Caines, P., Malhame, R.: Distributed multi-agent decision-making
  with partial observations: Asymptotic {N}ash equilibria.
\newblock In: Theory of Networks and Systems. Japan (2006)

\bibitem{HuCaMa07}
Huang, M., Caines, P., Malham\'{e}, R.: Large-population cost coupled {LQG}
  problems with nonuniform agents: Individual-mass behavior and decentralized
  $\epsilon$-{N}ash equilibria.
\newblock IEEE. Trans. Autom. Control \textbf{52}(9), 1560--1571 (2007)

\bibitem{HuMaCa06}
Huang, M., Malham\'{e}, R., Caines, P.: Large population stochastic dynamic
  games: Closed loop {M}c{K}ean-{V}lasov sysyems and the {N}ash certainity
  equivalence principle.
\newblock Communications in Information Systems \textbf{6}, 221--252 (2006)

\bibitem{JoRo88}
Jovanovic, B., Rosenthal, R.: Anonymous sequential games.
\newblock Journal of Mathematical Economics \textbf{17}, 77--87 (1988)

\bibitem{Lan81}
Langen, H.: Convergence of dynamic programming models.
\newblock Math. Oper. Res. \textbf{6}(4), 493--512 (1981)

\bibitem{LaLi07}
Lasry, J., P.Lions: Mean field games.
\newblock Japan. J. Math. \textbf{2}, 229--260 (2007)

\bibitem{MoBa15}
Moon, J., Ba\c{s}ar, T.: Discrete-time decentralized control using the
  risk-sensitive performance criterion in the large population regime: a mean
  field approach.
\newblock In: ACC 2015, pp. 4779--4784. Chicago (2015)

\bibitem{MoBa16}
Moon, J., Ba{\c{s}}ar, T.: Robust mean field games for coupled {M}arkov jump
  linear systems.
\newblock International Journal of Control \textbf{89}(7), 1367--1381 (2016)

\bibitem{MoBa17}
Moon, J., Ba{\c{s}}ar, T.: Linear quadratic risk-sensitive and robust mean
  field games.
\newblock IEEE. Trans. Autom. Control \textbf{62}(3), 1062--1077 (2017)

\bibitem{Par67}
Parthasarathy, K.: Probability Measures on Metric Spaces.
\newblock AMS Bookstore (1967)

\bibitem{Rhe74}
Rhenius, D.: Incomplete information in {M}arkovian decision models.
\newblock Ann. Statist. \textbf{2}, 1327--1334 (1974)

\bibitem{SaBaRa17}
Saldi, N., Ba{\c{s}}ar, T., Raginsky, M.: {M}arkov-{N}ash equilibria in
  mean-field games with discounted cost.
\newblock SIAM J. Control Optim. \textbf{56}(6), 4256--4287 (2018)

\bibitem{SaBaRa18}
Saldi, N., Ba{\c{s}}ar, T., Raginsky, M.: Approximate {N}ash equilibria in
  partially observed stochastic games with mean-field interactions.
\newblock Math. Oper. Res. \textbf{44}(3), 1006--1033 (2019)

\bibitem{SaBaRa18-r}
Saldi, N., Ba{\c{s}}ar, T., Raginsky, M.: Approximate {M}arkov-{N}ash
  equilibria for discrete-time risk-sensitive mean-field games.
\newblock Math. Oper. Res. \textbf{45}(4), 1596--1620 (2020)

\bibitem{TaMe16}
Tang, M., Meng, Q.: Partially observed optimal control for mean-field {SDE}s.
\newblock arXiv:1610.02587v1 (2016)

\bibitem{Tem15}
Tembine, H.: Risk-sensitive mean-field-type games with {L}p-norm drifts.
\newblock Automatica \textbf{59}, 224--237 (2015)

\bibitem{TeZhBa14}
Tembine, H., Zhu, Q., Ba\c{s}ar, T.: Risk-sensitive mean field games.
\newblock IEEE. Trans. Autom. Control \textbf{59}(4), 835--850 (2014)

\bibitem{Vil09}
Villani, C.: Optimal transport: Old and New.
\newblock Springer (2009)

\bibitem{Whi90}
Whittle, P.: Risk-Sensitive Optimal Control.
\newblock Wiley Interscience Series in Systems and Optimization. Wiley (1990)

\bibitem{Yus76}
Yushkevich, A.: Reduction of a controlled {M}arkov model with incomplete data
  to a problem with complete information in the case of {B}orel state and
  control spaces.
\newblock Theory Prob. Appl. \textbf{21}, 153--158 (1976)

\end{thebibliography}
\end{document}